\newtheorem{theorem}{Theorem}
\newtheorem{lemma}{Lemma}
\newtheorem{corollary}{Corollary}
\newtheorem{proposition}{Proposition}
\theoremstyle{definition}
\newtheorem{assumption}{Assumption}
\newtheorem{remark}{Remark}
\newcommand{\ind}{\perp \!\!\! \perp} 
\newcommand{\cvP}{\xrightarrow[]{\bbP_{\calD}}}
\newcommand{\dd}{\mathrm{d}} 
\newcommand{\iid}{\overset{\text{i.i.d.}}\sim} 
\newcommand{\Op}{O_{\mathbb{P}}}
\newcommand{\op}{o_{\mathbb{P}}}
\newcommand{\TV}{\mathrm{TV}}
\newcommand{\CF}{\mathrm{CF}}
\def\bbE{\mathbb{E}}
\def\bbL{\mathbb{L}}
\def\bbK{\mathbb{K}}
\def\bbP{\mathbb{P}}
\def\bbR{\mathbb{R}}
\def\bbZ{\mathbb{Z}}
\def\bX{\mathbf{X}}
\def\bZ{\mathbf{Z}}
\def\bD{{\bf D}}
\def\bW{{\bf W}}
\def\bx{\mathbf{x}}
\def\phat{\widehat{p}}
\def\ptilde{\widetilde{p}}
\def\muhat{\widehat{\mu}}
\newcommand{\Var}{\mathrm{Var}}
\newcommand{\Cov}{\mathrm{Cov}}
\def\bbS{\mathbb{S}}
\def\calD{\mathcal{D}}
\def\calI{\mathcal{I}}
\def\calN{\mathcal{N}}
\def\calP{\mathcal{P}}
\def\calT{\mathcal{T}}
\def\nk{n_{\bbK}}
\def\Pm{\Pi_{m_1}}
\def\reg{\text{reg}}
\def\-{\text{--}}
\def\m{\underbar{m}_1}
\def\bm{\underbar{m}}
\def\r{\underbar{r}_1}
\def\e{\underbar{e}}
\def\ubm{\underbar{m}_0}
\def\ubr{\underbar{r}_0}
\def\br{\underbar{r}}
\def\bb{\underbar{b}}
\def\b{\underbar{b}_1}
\def\ubb{\underbar{b}_0}
\def\rt{\underbar{r}_t}
\def\mt{\underbar{m}_t}
\def\PD{\bbP_\calD}
\def\tATE{\Delta^\dagger} 
\def\rATE{\Delta} 
\def\pATE{\Pi_\Delta} 
\def\d{\dagger}
\begin{document}
 \date{}
\title{Bayesian semiparametric causal inference: Targeted doubly robust estimation of treatment effects}

\author[1]{Gözde Sert}
\author[1]{Abhishek Chakrabortty}
\author[1,*]{Anirban Bhattacharya}
\affil[1]{Department of Statistics, Texas A\&M University}
\makeatletter
\renewcommand\AB@affilsepx{\\ \protect\Affilfont}
\renewcommand\Authsep{, }
\renewcommand\Authand{, }
\renewcommand\Authands{, }
\makeatother

\maketitle
\footnotetext[1]{Corresponding author.}
\footnotetext{{\it Email addresses:} \hyperlink{gozdesert@stat.tamu.edu}{gozdesert@stat.tamu.edu} (Gözde Sert), \hyperlink{abhishek@stat.tamu.edu}{abhishek@stat.tamu.edu} (Abhishek Chakrabortty), \hyperlink{anirbanb@stat.tamu.edu}{anirbanb@stat.tamu.edu} (Anirban Bhattacharya).}

\begin{abstract}We propose a semiparametric Bayesian methodology for estimating the average treatment effect (ATE) within the potential outcomes framework using observational data with high-dimensional nuisance parameters. Our method introduces a Bayesian debiasing procedure that corrects for bias arising from nuisance estimation and employs a targeted modeling strategy based on summary statistics rather than the full data. These summary statistics are identified in a debiased manner, enabling the estimation of nuisance bias via weighted observables and facilitating hierarchical learning of the ATE. By combining debiasing with sample splitting, our approach separates nuisance estimation from inference on the target parameter, reducing sensitivity to nuisance model specification. We establish that, under mild conditions, the marginal posterior for the ATE satisfies a Bernstein-von Mises theorem when both nuisance models are correctly specified and remains consistent and robust when only one is correct, achieving Bayesian double robustness. This ensures asymptotic efficiency and frequentist validity. Extensive simulations confirm the theoretical results, demonstrating accurate point estimation and credible intervals with nominal coverage, even in high-dimensional settings. The proposed framework can also be extended to other causal estimands, and its key principles offer a general foundation for advancing Bayesian semiparametric inference more broadly.
\end{abstract}

\noindent{\bf Keywords:}Average treatment effect, Bayesian debiasing, hierarchical learning, high-dimensional nuisance, semiparametric Bayesian inference, summary statistics modeling.

\section{Introduction}\label{sec_intro}
Inferring the causal effect of a treatment or exposure is central to many scientific disciplines. While randomized controlled trials are the gold standard for causal estimation, they are often infeasible due to ethical, logistical, or financial constraints. A common alternative is to use {\it observational} data, which is typically easier to obtain, but also requires careful methodology to handle potential confounding and high dimensionality issues, while ensuring robust (unbiased) estimation of causal estimands. Among these, the {\it average treatment effect} (ATE) is a key popular estimand, measuring the treatment's overall
causal impact, and is widely adopted in various scientific disciplines.

Estimation of the ATE is naturally linked to semiparametric inference, as its identification involves infinite-dimensional nuisance parameters \citep{bang2005doubly}. Most existing approaches are frequentist, such as propensity score adjustment or matching \citep{rosenbaum1983central, rosenbaum1984reducing}, and doubly robust (DR) estimators \citep{robins1994estimation, robins1995semiparametric}. Recently, {\it Bayesian} semiparametric methods for ATE estimation have gained attention \citep{ray2019debiased, ray2020semiparametric, hahn2020bayesian, antonelli2022causal, linero2022, luo2023semiparametric, breunig2025double}. Traditional Bayesian methods marginalize out nuisance parameters to obtain a posterior of the target parameter and can achieve desirable contraction rates \citep{ghosal2017fundamentals}. However, strong regularization often induces nuisance estimation bias \citep{bickel2012semiparametric, rivoirard2012bernstein, castillo2015bernstein} that jeopardizes the validity of Bayesian inference for low-dimensional targets, such as the ATE.

Several strategies have been proposed to mitigate this bias. One line of research modifies or tailors priors to incorporate the propensity score and better align the prior with the semiparametric model structure \citep{ray2019debiased, ray2020semiparametric}. Another applies posterior corrections or influence function \citep{hahn98} (IF)-based updates \citep{breunig2025double, yiu2025}. A related method by \citet{antonelli2022causal} constructs a posterior for the ATE by plugging nuisance posterior samples into the IF, followed by an additional variance correction for valid inference.

Building on recent advances, we propose the {\it doubly robust debiased Bayesian (DRDB) procedure}, which provides a principled and scalable solution to nuisance bias in high-dimensional or complex settings. DRDB departs from existing Bayesian methods in two key ways. First, it adopts a {\it targeted} modeling strategy that focuses on {\it summary statistics} informative about the ATE, rather than the full data distribution. Second, it introduces a {\it Bayesian debiasing} mechanism that {\it learns nuisance bias} directly from data, eliminating the need for prior modification or post hoc correction \citep{ray2020semiparametric, breunig2025double, yiu2025}. By decoupling inference for the ATE from nuisance estimation, DRDB ensures robustness of the marginal posterior and serves as a {\it Bayesian analogue} of the frequentist double machine learning framework \citep{chernozhukov2018double}, maintaining validity even under high-dimensional and/or misspecified models.

A prominent usage of summary statistics in Bayesian inference appears in the approximate Bayesian computation (ABC) literature to mitigate issues with a low acceptance rate \citep{drovandi2015}. DRDB instead leverages them in a targeted manner to separate the ATE from nuisance bias. Its key component is a {\it retargeting step} that models the nuisance bias using weighted observables (an idea akin to importance sampling) which naturally incorporates the {\it propensity score} (PS) into the Bayesian framework. Although the PS plays a central role in the frequentist literature on DR estimation \citep{robins1994estimation, robins1995semiparametric, bang2005doubly}, it has lacked a principled Bayesian counterpart \citep[Section 5]{li2023bayesian}. DRDB fills this gap by {\it integrating the PS  seamlessly} through its debiasing mechanism. Related work by \cite{sert2025} develops Bayesian inference via summary statistics for semi-supervised learning, which motivates the construction of DRDB. However, DRDB differs in two key respects: (i) DRDB uses a hierarchical model to learn the ATE directly, without relying on independence between data subsets, and (ii) it introduces a retargeting mechanism to identify and estimate nuisance bias appropriately.

Building on this bias estimation, DRDB integrates the bias into a {\it hierarchical} Bayesian framework: The posterior for the bias informs a conditional likelihood for the ATE, whose integration yields a valid marginal posterior for the ATE (see Equation~\eqref{eqn_posterior_for_mu_for_S}). Another salient feature of DRDB is its use of {\it sample-splitting} and cross-fitting (CF) \citep{chernozhukov2018double}. {\it Beyond} their traditional role in technical aspects, DRDB uses them as critical {\it methodological} tools to validate the debiasing step and {\it decouple} nuisance estimation from target inference. DRDB employs randomized splitting to obtain multiple subposteriors and \emph{aggregates} them using a consensus Monte Carlo–type scheme \citep{scott2022bayes}, producing a posterior that efficiently utilizes the {\it entire} data (see Section~\ref{sec_DRDB_extended_ATE}).

DRDB establishes a semiparametric {\it Bernstein-von Mises (BvM) result} for the marginal posterior of the ATE (Theorems~\ref{thm_BvM_mu1} and \ref{thm_BvM_ATE}): When both nuisance models are well-specified and their posteriors contract at rates whose {\it product} is $o(n^{-1/2})$, the posterior concentrates around the true ATE at the parametric rate and is asymptotically Gaussian. In this case, the posterior mean is an asymptotically {\it efficient} estimator of the true ATE, converging at a $\sqrt{n}$-rate with asymptotic variance that achieves the semiparametric efficiency bound \citep{hahn98}. Notably, the DRDB posterior depends on the nuisance posteriors only through their asymptotic limits, underscoring its robustness to nuisance modeling. Moreover, DRDB satisfies {\it Bayesian double robustness}: when only one nuisance model is well-specified (consistently estimated, while the other may be misspecified or slowly estimated), the {\it posterior remains consistent} for the ATE, contracting at the rate of the well-specified nuisance, extending the frequentist DR principle \citep{bang2005doubly} to Bayesian inference (posteriors).

Finally, the key principles of DRDB (its debiasing mechanism, targeted modeling, and hierarchical learning strategy) extend naturally beyond the ATE, providing valid Bayesian inference for a broad class of causal estimands, including the average treatment effect on the treated (ATT), that on the control (ATC), and subgroup-specific effects. For clarity and brevity, the detailed extension of DRDB to these general estimands is presented in Section~\ref{sec_DRDB_conditional} of the \hyperref[sec_supplementary]{Supplementary Material}.

The rest of this paper is organized as follows. Section~\ref{sec_setup} introduces the basic setup and preliminaries. Section~\ref{sec_methodology} develops our proposed DRDB methodology, first for one counterfactual mean (Section \ref{sec_DRDB_for_mu1}),
then for the ATE (Section \ref{sec_DRDB_extended_ATE}). Section~\ref{sec_theory} presents the technical details of the DRDB posterior, and the main theoretical results, including BvM results and Bayesian double robustness. Section~\ref{sec_numerics} reports finite-sample performance via simulations and data analysis. Section~\ref{sec_conclusion} provides a concluding discussion. Extensions of our methodology, additional simulation results, and proofs and technical details are deferred to the \hyperref[sec_supplementary]{Supplementary Material} (Sections~\ref{sec_DRDB_conditional}-\ref{supp_sec_proof_of_preliminary}).

\section{Setup and preliminary causal assumptions}\label{sec_setup}
\subsection{Data and notation}
Let $T \in \{0, 1\}$ denote a binary treatment indicator; $Y \in \bbR$ denote the {\it observed} outcome, defined as: $Y = TY(1) + (1-T)Y(0)$, where $\{ Y(1), Y(0)\}$ are the {\it potential outcomes} \citep{rubin1974estimating, imbens2015causal} under treatment ($T = 1$) and control ($T = 0$), respectively (i.e., $Y(t)$ is the outcome that would have been observed if $T = t$, possibly contrary to fact); and $\bX \in \bbR^p$ denote the vector of covariates (or potential {\it confounders}). The {\it observed data} $\calD$ consists of independent and identically distributed (i.i.d) observations $\bZ_1, \dots, \bZ_n$ of the random variable $\bZ := (Y,\bX, T)$ with support $\mathcal{Y} \times \mathcal{X} \times \{0, 1\}$ and underlying joint probability distribution (p.d.) $\bbP_{\bZ}$. Also, the setting is throughout allowed to be (possibly) {\it high dimensional} (i.e., $p$ is allowed to grow with $n$).

Let $U$ be a random object with an underlying p.d. $\bbP_{U}$, and $f$ be a measurable $\bbR$-valued function of $U$. The expectation of $f(U)$ is defined as $\bbE_{U}\{f(U)\}:= \int f(u) d\bbP_{U}(u)$, whenever it exists. For any $d\geq 1$, $L_d(\bbP_{U})$ denotes the space of all $\bbR$-valued measurable functions of $U$ equipped with the norm $\|f \|_{L_d(\bbP_{U})}:= [\bbE_{U}\{f(U)^d\}]^{1/d}$. {\it We adopt the following Bayesian notation throughout: for a generic random object $\theta$, $\Pi_{\theta}$ denotes its posterior, $\underline{\theta}$ a posterior sample, and $\theta^\dagger$ its true value}.

\subsection{Identification}\label{subsec_identification}
The parameter of interest is the {\it average treatment effect (ATE), defined as:} $\Delta^\dagger:= \mu^\dagger(1) - \mu^\dagger(0)$, with $\mu^\dagger(t):= \bbE_{\bbZ}\{Y(t)\}$ for $t \in \{0, 1\}$, where the expectation is taken under the true p.d. $\bbP_{\bbZ}$ of $\bbZ := \{Y(1), Y(0), \bX, T\}$. Since $\{Y(1), Y(0)\}$ cannot be jointly observed in the data, we impose standard causal assumptions to identify $\tATE$ from the available data $\calD$ \citep{rosenbaum1984reducing}.

\begin{assumption}[Causal assumptions]
\label{assumptions_standard_causal_assump} (i) (Ignorability) $T \ind \{Y(1), Y(0) \} | \bX$. (ii) (Positivity) Let $e^\dagger(\bX) := \bbP(T = 1  | \bX)$ be the propensity score. Then, $\ell \leq e(\bX) \leq 1 - \ell$, for some constant $\ell >0$.
\end{assumption}

\noindent Assumption~\ref{assumptions_standard_causal_assump}(a), known as {\it no unmeasured confounding (NUC)}, posits that the set of observed covariates captures all confounding factors affecting both treatment assignment and the potential outcomes. Assumption~\ref{assumptions_standard_causal_assump}(b) imposes an {\it overlap} condition, ensuring that $\bX$ in the treatment groups (i.e., $\bX \,| \, T = t$) share sufficient common support for valid comparisons \citep{imbens2015causal}.

\paragraph{Regression-based identification.} Define $m^\d_t(\bX) \equiv m^\d(\bX, t) := \bbE(Y(t) \mid \bX)$ as the {\it regression function} for treatment $t \in \{0,1\}$. Under Assumption~\ref{assumptions_standard_causal_assump} (ignorability), it can be equivalently written as: $m_t^\d(\bX) = \bbE(Y \mid \bX, T = t)$. The ATE is then {\it identified} using the law of iterated expectations:

\begin{equation}
\tATE ~\equiv~ \tATE(\overrightarrow{m}^\d,\bbP_{\bX}) ~=~  \tATE(m_1^\d, m_0^\d,\bbP_{\bX}) ~:=~ \bbE_{\bX}\{m_1^\d(\bX) - m_0^\d(\bX)\}, \label{eqn_regression_identification_ATE}
\end{equation}
where we note that each $m_t^\d(\cdot)$ is {\it estimable} via a regression in the {\it observable} data on: $(Y, \bX) \hspace{0.01in}| T = t$. Hence, the ATE $\tATE$ is a {\it functional} of both $\mathbb{P}_{\bX}$ and the {\it nuisance functions} $\overrightarrow{m}^\d \equiv \overrightarrow{m}^\d(\cdot)$, with $\overrightarrow{m}^\d:= (m_1^\d, m_0^\d)$, and this identification serves as a {\it foundation for our approach} to estimating $\tATE$.

\section{Methodology}\label{sec_methodology}
Motivated by \eqref{eqn_regression_identification_ATE}, we propose a doubly robust debiased Bayesian (DRDB) procedure for estimating the ATE. To clarify the main steps of DRDB, we first present the methodology for a {\it single-arm:} $\mu_1^\d\equiv\mu^\d(1) = \mathbb{E}[Y(1)]$, and thereafter, extend it to the ATE in Section \ref{sec_DRDB_extended_ATE}. Importantly, estimating $\mu_1^\d$ is an interesting and non-trivial
problem in its own right, as it corresponds to the mean of an outcome that is missing at random (MAR) within the {\it missing data} framework \citep{tsiatis2007semiparametric}.

Let $\bbK \ge 2$ be a {\it fixed} integer. We randomly {\it split} $\calD$ into $\bbK$ disjoint subsets $\{\calD_k\}_{k=1}^\bbK$, each of equal size $\nk := n/\bbK$, assuming without loss of generality that $n$ is divisible by $\bbK$. The corresponding index sets are denoted by $\{\calI_k\}_{k=1}^\bbK$. For each $k \in \{1, \dots, \bbK\}$, define $\calD_k^\- := \calD \setminus \calD_k$, which has size $\nk^\- := n - \nk$ and index set $\calI_k^\-$. Let $(S, S^\-) := (\calD_k, \calD_k^\-)$ denote a  generic pair of \textit{test} and \textit{training} datasets with corresponding index sets $(\calI, \calI^\-)$ for some $k \in \{1, \dots, \bbK\}$. For $t \in {0,1}$, let $(S_t, S_t^\-)$ denote the {\it subgroups} of $(S, S^\-)$ corresponding to treatment level $T = t$, so $S_1$ and $S_1^\-$ represent the respective {\it treated} subgroups, and $S_0$ and $S_0^\-$ represent the {\it control} subgroups. By construction, $S$ and $S^\-$ are {\it independent} ($S \!\ind\! S^\-$), which is both {\it crucial} and {\it necessary} for the DRDB approach.

\paragraph{Motivating the DRDB procedure.} An intuitive approach to estimating $\mu_1^\d$, motivated by \eqref{eqn_regression_identification_ATE}, is to use a regression-based Bayesian ({\tt BREG}) procedure: Suppose the unknown nuisance function $m_1^\d(\cdot)$ is {\it learned from $S^\-$} via {\it any} suitable Bayesian regression method--parametric (like Bayesian ridge regression via Gaussian priors, or high dimensional sparse Bayesian linear regression
using spike-and-slab type priors \citep{johnson2012bayesian}) or nonparametric (such as Gaussian process regression \citep{williams1998prediction} or Bayesian
additive regression trees (BART) \citep{bart2010})--yielding a {\it posterior $\Pm$ for $m_1$}. For a sample $\m \sim \Pm$, one can treat $\{\m(\bX_i)\}_{i \in \calI}$ as {\it derived i.i.d. samples in $S$}, targeting $\mu_1^\d$ through their mean. A standard Bayesian analysis, specifying a likelihood for this data and a prior on model parameters, then yields a posterior $\Pi_{\mathrm{reg}}$ for $\mu_1$.

Despite its intuitive appeal, BREG is highly {\it sensitive} to the quality of nuisance estimation: A misspecified nuisance model leads to an inconsistent posterior $\Pi_{\reg}$ for $\mu_1$. Even with a correctly specified nuisance model, the posterior's {\it first-order} properties, such as its rate and shape, are {\it strongly}
determined by the nuisance estimation bias: $\bbE_\bX\{\m(\bX) - m_1^\d(\bX) | \m\}$. This makes the posterior overly dependent on the behavior of $\Pm$ and the choice of regression method, which in turn requires restrictive conditions to control the bias (e.g., in high dimensions) for achieving BvM-type results. These limitations motivate the key principles of our DRDB approach, which systematically eliminates this nuisance estimation bias within a Bayesian likelihood framework.

\subsection{Doubly robust debiased Bayesian (DRDB) procedure for \texorpdfstring{$\mu^\d(1)$}
{mu(1)}}\label{sec_DRDB_for_mu1}

DRDB is fundamentally a two-step approach. First, a Bayesian debiasing step learns and corrects for nuisance estimation bias within a Bayesian framework via a retargeting method. Second, a hierarchical learning framework learns the parameter of interest, $\mu_1^\d \equiv \mu^\d(1)$, after this bias has been addressed. We detail the steps in subsequent sections.

Adopting the notation from Section~\ref{sec_methodology}, let $(S, S^\-)$ be a pair of test and training datasets. Assume the nuisance posterior $\Pm \equiv \Pm(\cdot; S^\-)$ for $m_1$ is obtained from $S^\-$ as before.

\paragraph{Debiasing step.}  Let $\m \sim \Pm \equiv \Pm(\cdot; S^\-)$ be {\it one} sample independent (by design) of $S$. Using the regression-based representation of $\mu_1^\d$
in given \eqref{eqn_regression_identification_ATE}, we obtain the {\it debiased identification} of $\mu_1^\d$:
\begin{equation}
    \begin{aligned}
\mu_1^\d & ~=~ \bbE_{\bX \in S}\{m_1^\d(\bX) - \m(\bX) | \m\} +  \bbE_{\bX \in S}\{\m(\bX) | \m\} \\
& ~=:~ b^\d(\m) + \bbE_{\bX \in S}\{\m(\bX) |\m\}.\label{eqn_debiased_represent_first_step}
\end{aligned}
\end{equation}

The term $b^\d(\m)$ captures the nuisance estimation {\it bias} from replacing the true $m_1^\d$ with a random sample $\m$. This bias is the primary source of the limitations of {\tt BREG} and serves as the central target of our {\it Bayesian debiasing} strategy. Its analysis and the validity of the debiased decomposition in \eqref{eqn_debiased_represent_first_step} crucially rely on the {\it independence condition} that ensures the distribution of $\bX \in S$ in $\m(\bX)$ is {\it unaffected} by that of $\m \sim {\Pm(\cdot;S^\-)}$ since $S^\- \!\ind\! S$. To further analyze $b^\d(\m)$, we write it as:
\begin{align}
    b^\d(\m)& ~=~ \bbE_{\bX \in S}[\bbE\{Y(1) - \m(\bX) \mid \bX, T = 1\} \mid  \m].\label{eqn_bm1_with_Y1_and_m1}
\end{align}
This formulation implies that if $Y(1)$ and $\bX$ were observed for all units in $S$, one could directly estimate $b^\d(\m)$ from $S$. However, both $\{Y(1), \bX\}$ are {\it only} observed in the {\it treated subgroup:} $S_1$. Moreover, given $\m \sim \Pm$, the observables $\{Y - \m(\bX)\} \in S_1$ target $\bbE_{(Y, \bX) | T = 1}\{Y - \m(\bX)\}$, rather than the desired bias $b^\d(\m)$. To correct this discrepancy, a {\it retargeting} step is required in which the (derived) observations $\{Y - \m(\bX)\} \in S_1$ are {\it reweighted} using a {\it density ratio} function. This adjustment ensures that the distribution of the weighted observations {\it aligns} with that of $(Y, \bX)$ in the whole population, rather than the conditional distribution given $T = 1$.

\paragraph{Retargeting bias via weighting.} Let $r_1^\d(\bX) := f(\bX)/f_1(\bX | T = 1)$ be the {\it density ratio} function, where $f(\cdot)$ is the density function (pdf) of $\bX$ and $f_1(\cdot)$ is the conditional pdf given $T = 1$. Given $\m \sim \Pm$, we define the {\it weighted} observations $r_1^\d(\bX)\{Y - \m(\bX)\}$ in $S_1$ and observe that:
\begin{equation}
\begin{aligned}
\bbE_{(Y, \bX) | T = 1}[r_1^\d(\bX)\{Y-\m(\bX)\}  \mid  T = 1] & ~=~ \bbE_{\bX}[\bbE\{Y-\m(\bX)\mid \bX, T = 1\}]\\ & ~=~ b^\d(\m) ~\; \equiv ~\; b^\d(\m, r_1^\d). \label{eqn_weighted_expression_of_bias1}
\end{aligned}
\end{equation}
This derivation shows that {\it unbiasedly} estimating the bias requires using the {\it weighted} observables $r_1^\d(\bX)\{Y - \m(\bX)\}$ in $S_1$. It also clarifies that the bias $b^\d(\m) \equiv b^\d(\m, r_1^\d)$ should be viewed as a {\it functional} of {\it two} nuisances:
$\m$ and $r_1^\d \equiv r_1^\d(\cdot)$. To learn $b^\d(\m, r_1^\d)$ from $S_1$, we must first estimate $r_1^\d$ by deriving a posterior from $S^\-$. This leads to the final analysis of the bias $b^\d(\m, r_1^\d)$.

Let $\r$ be {\it one} sample from the posterior $\Pi_{r_1} \equiv \Pi_{r_1}(\cdot; S^\-)$ of $r_1 \equiv r_1(\cdot)$, which we derive from a Bayesian {\it binary regression} method (e.g., Bayesian logistic regression or BART \citep{bart2010}), as detailed in Remark~\ref{remark_post_for_r_and_PS}. Since $\r$ is independent of $S$, substituting it into \eqref{eqn_weighted_expression_of_bias1} yields:
\begin{align}
b^\d(\m, r_1^\d)  ~=~  b^\d(\m,  \r) + \bbE_{\bX}[\{r_1^\d(\bX) - \r(\bX)\}\{m_1^\d(\bX) - \m(\bX)\} | \m,  \r]. \label{eqn_debiased_rep_bm}
\end{align}
\eqref{eqn_debiased_rep_bm} provides a full characterization of the bias.
The second term in \eqref{eqn_debiased_rep_bm}: $\Gamma^\d(\m, \r) := \bbE[\{r_1^\d(\bX) - \r(\bX)\} \{m_1^\d(\bX) - \m(\bX)\} |\m,  \r]$ is a {\it second-order} `bias of bias' (or `drift') term, arising from the {\it product} of the estimation errors for $r_1^\d$ and $m_1^\d$. We subsequently focus on {\it modeling and correcting the more tractable, first-order, bias} $b^\d(\m,  \r)$. The second-order term $\Gamma^\d(\m, \r)$, while accounted for in the
theoretical analysis of our eventual posterior, is not the primary debiasing target.

\paragraph{Targeted modeling strategy for bias.}
Given $\m \sim \Pi_{m_1}$ and $\r \sim \Pi_{r_1}$, the bias $b^\d(\m, \r) = \bbE_{S_1}[\r(\bX)\{Y - \m(\bX)\} | \m, \r]$ can be viewed as a {\it functional} of the underlying distribution of $S_1$, specifically, relying on the {\it summary statistic} of the weighted observables $\r(\bX)\{Y - \m(\bX)\} \in S_1$. We can then construct a {\it working} likelihood based on these i.i.d. observables in $S_1$ and place a prior on the model parameters, yielding {\it a posterior $\Pi_{b_1}$ for $b_1 \equiv b(\m, \r)$}, as detailed in Proposition~\ref{prop_posterior_b1}.

A defining feature of the DRDB procedure, beyond its debiasing mechanism, is the {\it targeted} use of data. While traditional methods model the entire data \citep{ray2019debiased, ray2020semiparametric, breunig2025double}, DRDB exclusively targets the parameters directly informative for $\mu_1^\d$. This targeted modeling strategy, combined with debiasing, forms the core of our DRDB approach, distinguishing it from existing Bayesian methodologies. Building on these, we next introduce the {\it hierarchical learning framework}, which facilitates a {\it construction} of a valid marginal posterior for $\mu_1^\d$, by leveraging in a novel way the conventional integral representation of the marginal posterior.

\begin{remark}[Key \emph{methodological} role of the data splitting]\label{remark_data_splitting} A core characteristic of DRDB is its strict separation of nuisance estimation from target inference using independent data sources ($S \!\ind\! S^\-$). This independence is methodologically crucial, validating our debiased representation, and equally importantly, validating the construction of the respective likelihoods, enabling our targeted modeling for the bias. This is distinct from earlier approaches that used independent auxiliary data mainly to address theoretical challenges, and incorporate the propensity score \citep{ray2019debiased, ray2020semiparametric, breunig2025double}. In our framework, independence is not a mere technical tactic, but the central mechanism driving the DRDB procedure.
\end{remark}

\paragraph{Hierarchical learning strategy.} If one had access to a joint posterior for $\{\mu_1, b(\m)\}$, which would be the case in a traditional Bayesian framework, the marginal posterior for $\mu_1$ would be obtained by integrating out $b(\m)$: $[\mu_1| S] = \int [\mu_1 | b(\m), S] \hspace{0.05cm}[b(\m) | S_1] \hspace{0.05cm}\dd b(\m)$. We instead take a fundamentally different perspective, using the right-hand side of this representation itself as the {\it defining principle} for constructing a posterior for $\mu_1$. In this formulation, $[\mu_1 | b(\m), S]$ and $[b(\m)| S]$ are individual building blocks which we aim to learn separately under our targeted modeling strategy, and mixing (integrating) over the uncertainty in $b(\m)$ defines an idealized marginal posterior for $\mu_1$. Since the actual bias $b(\m)$ involves the intractable second-order term in \eqref{eqn_debiased_rep_bm}, we achieve a key simplification by replacing the mixing variable $b(\m)$ with its first-order {\it proxy} $b_1 \equiv b(\m, \r)$. Thus, we replace $[b(\m)| S]$ with the posterior $\Pi_{b_1}$ for $b_1$ alluded to earlier, and construct a conditional posterior $\Pi_{\mu_1|b_1}$ for $\mu_1|b_1$; see discussion after Remark \ref{rem:hierarchical}. These switches allow us to mix over the proxy variable $b_1$, and define a valid probability measure
\begin{align}
   \hspace{-0.2cm} \left[\mu_1 \mid S\right] & ~:=~ \int [\mu_1 \mid b_1, S] \hspace{0.05cm}[b_1 \mid S] \hspace{0.05cm}\dd \hspace{0.02cm}b_1 ~=~ \int [\mu_1 \mid b_1, S] \hspace{0.05cm}[b_1 \mid S_1] \hspace{0.05cm}\dd \hspace{0.02cm}b_1,\label{eqn_integral_representation_of_marginal_post_for_mu1_with_b1m}
\end{align}
which serves as our {\it constructive definition} of a posterior for $\mu_1$. Here, $\Pi_{b_1}(\cdot; S)$ coincides with $\Pi_{b_1}(\cdot; S_1)$, as it is derived solely from $S_1$ under the targeted modeling strategy (see Proposition~\ref{prop_posterior_b1}). This formulation is not only a practical construction of a marginal posterior but also represents a conceptually distinct perspective, motivated directly by the structure of the Bayesian integral, offering a new way to define posteriors when a full joint model is unavailable.

\begin{remark}[Hierarchical novelties]\label{rem:hierarchical}
A key methodological contribution of DRDB is its novel use of the Bayesian hierarchical framework. Unlike standard methods that derive a marginal posterior from a joint posterior, DRDB builds it through a hierarchically specified conditional likelihood, enabling a targeted modeling strategy that efficiently uses data while maintaining valid Bayesian inference. Although this hierarchical specification bears resemblance to semi-implicit variational inference (SIVI) \citep{yin2018semi}, the goal is fundamentally different. SIVI employs a hierarchy to improve posterior approximation, DRDB, in contrast, leverages it for exact Bayesian inference. This redefines the role of the Bayesian hierarchy, transforming it from a conventional modeling tool into a principled mechanism for achieving targeted and debiased inference.
\end{remark}

To implement this hierarchical construction, it {\it suffices to specify $\Pi_{\mu_1|b_1}$}. We formulate a {\it conditional likelihood} on $S$, which allows us to exploit the debiased representation in \eqref{eqn_debiased_represent_first_step} and adhere to our target-specific strategy. Specifically, we take a sample $\underbar{\it b}_1 \sim \Pi_{b_1}$ and model $\mu_1^\d - \b$ conditional on $\b$, using $S$. Given $\b \sim \Pi_{b_1}$, we have i.i.d. observables $\{\m(\bX_i)\} \in S$, which {\it target} $\mu_1^\d - \b$ via their mean.
Using these observables, we construct a working conditional likelihood with a corresponding prior, yielding a conditional posterior $\Pi_{\mu_1 |b_1}$ for $\mu_1|b_1$ (see Proposition~\ref{prop_posterior_mu1_given_b1}). Finally, combining $\Pi_{\mu_1| b_1}$ and $\Pi_{b_1}$ via the integral
\eqref{eqn_integral_representation_of_marginal_post_for_mu1_with_b1m}, we get a {\it marginal posterior $\Pi_{\mu_1} \equiv \Pi_{\mu_1}(\cdot; S)$ for $\mu_1$}.

\begin{remark}\label{rem:targeted-learning}
A key feature of DRDB is its efficient use of the data. It first models the nuisance bias $b_1^\d$ using $S_1$, and then models the debiased quantity $\mu_1^\d - b_1$ using $S$. This two-step, hierarchical approach allows inference to focus directly on target-specific quantities while correcting for the nuisance bias. Notably, under the targeted learning framework (and the Bayesian debiasing mechanism), the nuisance estimation method is not restricted to any particular class, enabling the use of a wide range of \textbf{flexible} models for the nuisance posterior $\Pi_{m_1}$. On the other hand, the target posteriors for the summary statistics are simple and analytically tractable (typically $t$-distributions; see Propositions \ref{prop_posterior_b1} and \ref{prop_posterior_mu1_given_b1}). An additional noteworthy advantage of DRDB (a consequence of the debiasing) is that it requires only a \textbf{single} nuisance posterior draw, yielding substantial computational efficiency without compromising theoretical validity (see Theorem \ref{thm_BvM_ATE}).
\end{remark}

\begin{remark}[Role of PS]\label{remark_post_for_r_and_PS} To simplify the computation of $\Pi_{r_1}$, Bayes' theorem yields:
\begin{align}
  r_1^\d(\bX) ~=~ \frac{\bbP(T = 1)}{\bbP(T = 1  | \bX)} ~:=~ \frac{p_1}{e^\d(\bX)}, ~~\text{ where } e^\d(\cdot) \text{ is the propensity score (PS).} \label{eqn_r1_as_PS}
\end{align}
This representation offers a flexible regression-based approach to estimate $r_1^\d(\cdot)$, avoiding direct density estimation. Specifically, we first learn $e^\d(\cdot)$ using a Bayesian binary regression on $S^\-$, e.g., Bayesian logistic regression, sparse Bayesian binary regression based on spike-and-slab type priors \citep{george1993variable}, and BART \citep{bart2010}, which yields a posterior $\Pi_{e} \equiv \Pi_{e}(\cdot; S^\-)$. Further, we construct a point estimator $\phat_1$ for $p_1$ from $S^\-$. By \eqref{eqn_r1_as_PS}, for a sample $\e(\cdot) \sim \Pi_{e}$, we define: $ \r \equiv \r(\cdot) := \phat_1/\e(\cdot)$ as a sample from its posterior $\Pi_{r_1}$. This formulation naturally incorporates the PS into our framework. While frequentist methods have long recognized the critical role of the PS in causal inference \citep{rosenbaum1983central, rosenbaum1984reducing,
bang2005doubly}, its integration in Bayesian approaches varies across methodologies \citep{ray2019debiased, ray2020semiparametric, luo2023semiparametric, breunig2025double}, and there is no consensus on how to incorporate it systematically \citep[Section 5]{li2023bayesian}. In contrast, DRDB brings the PS in organically as a core component of the debiasing mechanism via the targeted reweighting step, implemented via the density ratio, which explicitly links the procedure to the PS as shown in \eqref{eqn_r1_as_PS}.
\end{remark}
The initial DRDB formulation relies on a single data split, $(S^\-, S)$, to ensure the independence required for our debiasing and targeted modeling strategy. The drawback, however, is a significant loss of efficiency from using only a fraction of the data for the final inference. To address this, we now detail a strategy to construct a final posterior for $\mu_1$ based on usage of the {\it full data} $\mathcal{D}$.

\paragraph{The final DRDB posterior with cross-fitting.} Leveraging the randomized sample-splitting presented in Section~\ref{sec_methodology}, we can apply the DRDB procedure, as detailed in Steps \eqref{eqn_debiased_represent_first_step}--\eqref{eqn_integral_representation_of_marginal_post_for_mu1_with_b1m}, to {\it each} of the $\bbK$ test and training folds $\{(\calD_k, \calD_k^\-)\}_{k=1}^\bbK$. This yields {\it corresponding posteriors:} $\Pi_{\mu_1}^{(1)}, \ldots, \Pi_{\mu_1}^{(\bbK)}$.

To efficiently use all available data, we {\it aggregate} these fold-specific posteriors into a {\it final posterior} for $\mu_1$ that incorporates information from all splits. Following the {\it consensus Monte Carlo (CMC)-type aggregation} strategy employed in \cite{sert2025}, we define a new random variable, $\mu_1^{\CF}$, as the average of independent samples $\{\mu_1^{(k)}\}_{k = 1}^\bbK$ drawn from the respective posteriors $\{\Pi_{\mu_1}^{(k)}\}_{k=1}^\bbK$:
\begin{align}
     \mu_1^{\CF} ~:=~ \frac{1}{\bbK} \sum_{k = 1}^{\bbK} \mu_1^{(k)}, ~~\text{ and let \, $\Pi_{\mu_1}^\CF$ \, denote the corresponding distribution.}  \label{eqn_CF_version_mu1}
\end{align}

The resulting distribution, $\Pi_{\mu_1}^\CF$, serves as the {\it final DRDB posterior for $\mu_1$} and is a scaled convolution of the fold-specific posteriors $\{\Pi_{\mu_1}^{(k)}\}_{k=1}^\bbK$. This construction provides a principled and computationally efficient way to unify inference for $\mu_1^\d$ across all splits.

Although the combination step draws inspiration from CMC \citep{scott2022bayes}, its goal here is quite different. We use sample splitting not for computational efficiency, but as a methodological necessity to create independent training and test sets that validate the debiased representation and enable targeted modeling. The aggregation strategy then provides a {\it Bayesian analogue of cross-fitting} (CF) \citep{chernozhukov2018double} of {\it posteriors}, for semiparametric inference problems.

\subsection{Generalized DRDB procedure for the ATE}\label{sec_DRDB_extended_ATE}
Building on the DRDB procedure for $\mu_1^\d$, we now extend the method to our primary target, the ATE, $\tATE$.
Unlike frequentist ATE {\it point estimators}, which are simply the difference between the mean estimates for the two arms, Bayesian inference has {\it no} such direct analogue of `subtracting' {\it posterior distributions}. Hence, constructing a {\it valid} posterior for the ATE requires a more careful `first-principles' approach, accounting for the {\it full} posterior structure of the underlying components.

For clarity, we first detail the generalized DRDB procedure for one data split $(S, S^\-) = (\calD_k, \calD_k^\-)$. Second, we use the combination step from Section~\ref{sec_DRDB_for_mu1} to aggregate the posteriors from all $\bbK$ folds.

\paragraph{Debiasing step for the ATE.} The regression-based identification in \eqref{eqn_regression_identification_ATE} shows the ATE $\tATE$ can be expressed as a {\it functional} of $\overrightarrow{m}^\d$ and $\bbP_\bX$, denoted $\tATE = \tATE(\overrightarrow{m}^\d, \bbP_{\bX})$. The nuisance function $\overrightarrow{m}^\d = (m_1^\d, m_0^\d)$ includes the unknown regression functions $m_1$ and $m_0$, both requiring estimation.

Define $m^\d(\cdot) := m_1^\d(\cdot) - m_0^\d(\cdot)$. Let $\bm \equiv \bm(\cdot)$ be {\it one} draw from the posterior $\Pi_{m} \equiv \Pi_{m}(\cdot; S^\-)$ obtained from $S^\-$, as in Remark~\ref{remark_post_for_m}. Using the debiasing framework in Section~\ref{sec_DRDB_for_mu1}, we express $\tATE$ in its {\it debiased} form as: $\tATE = b^\d(\bm) + \bbE_{\bX \in S}\{\bm(\bX)  | \bm\}$, where $b^\d(\bm):= \bbE_{\bX \in S}\{m(\bX) -\bm(\bX) | \bm\}$, with the equality being valid due to the independence condition ($\bm \ind S$). The term $b^\d(\bm)$ is the expected bias from learning $m^\d(\cdot)$ and can be viewed as a function of $(b_1^\d, b_0^\d) \equiv (b^\d(\m), b^\d(\ubm))$, where $b^\d(\underbar{\it m}_t)$ is the bias for each arm $\mu^\d(t)$ for $t = 0,1$. To accurately model $b^\d(\bm)$ and construct its posterior $\Pi_b \equiv \Pi_b(\cdot; S)$, we require a {\it joint learning} strategy: both biases, $b_1^\d$ and $b_0^\d$, must be learned together to produce a joint posterior for $(b_1, b_0)$, which defines a valid posterior for $b(\bm)$.

\begin{remark}
A key distinction of the generalized DRDB procedure, compared to the one-arm case, is that modeling or learning the bias $b^\d(\bm)$ must be approached as a function of \textbf{both} $(b_1^\d, b_0^\d)$. Thus, valid and accurate inference for $b^\d(\bm)$ is only feasible if the \textbf{joint} posterior for $(b_1, b_0)$ is obtained, rather than constructing separate posteriors for each bias and combining them afterward.
\end{remark}

\paragraph{Bias modeling for the ATE.}
Following the bias analysis in Section~\ref{sec_DRDB_for_mu1} and adopting the notational conventions introduced there for the {\it first-order bias} and the {\it drift term} (see Equations~\eqref{eqn_bm1_with_Y1_and_m1}--\eqref{eqn_debiased_rep_bm}), the bias $b^\d(\bm) \equiv b^\d(\bm, r^\d)$ can be decomposed into a first-order bias $b^\d(\bm, \br)$ and a drift term $\Gamma^\d(\bm, \br)$:
\begin{equation}
    \begin{aligned}
 b^\d(\bm, r^\d) &~=~  \{ b_1^\d(\m,\r) - b_0^\d(\ubm, \ubr)\} + \{\Gamma^\d(\m,\r)
 - \Gamma^\d(\ubm, \ubr)\} \\
 & \, =: \, b^\d(\bm, \br) + \Gamma^\d(\bm, \br), \label{eqn_bias_decomposition_for_ATE}
\end{aligned}
\end{equation}
 where $r^\d \equiv r^\d(\cdot) := (r_1^\d(\cdot), r_0^\d(\cdot))$, with $r_1^\d(\bX) = p_1/e^\d(\bX)$ and $r_0^\d(\bX) := (1-p_1)/\{1-e^\d(\bX)\}$, and with corresponding posterior draws $\r:= \widehat{p}_1/\underbar{\it e} \sim \Pi_{r_1}$ and $\ubr:= (1-\widehat{p}_1)/(1-\underbar{\it e}) \sim \Pi_{r_0}$ (see Remark~\ref{remark_post_for_r_and_PS} for details), yielding a posterior sample $\br:= (\ubr, \r)$ from $(\Pi_{r_0}, \Pi_{r_1})$. \eqref{eqn_bias_decomposition_for_ATE} emphasizes that retargeting with the density ratio is {\it crucial} for accurate bias estimation. Similar to the one-arm case (see the discussions around \eqref{eqn_debiased_rep_bm} and \eqref{eqn_integral_representation_of_marginal_post_for_mu1_with_b1m}), we {\it focus} on modeling the {\it first-order} bias $b^\d(\bm, \br)$ in \eqref{eqn_bias_decomposition_for_ATE} above, consistent with our main goal of debiasing. The second-order term $\Gamma^\d(\bm, \br)$, though not our primary debiasing target, is included in the theoretical analysis of our eventual posterior of $\tATE$.

\paragraph{Posterior calculation for bias.}
Recall that $b^\d(\bm, \br)$ can be expressed as function of $(b_1^\d, b_0^\d) \equiv (b^\d(\m, \r), b^\d(\ubm, \ubr))$, where $b^\d(\underbar{\it m}_t, \underbar{\it r}_t)$ denotes the {\it first-order} bias for each arm $t = 0, 1$, as defined in \eqref{eqn_bias_decomposition_for_ATE}. Thus, calculating the posterior $\Pi_b$ for $b$ reduces to obtaining the {\it joint} posterior $\Pi_{(b_1, b_0)}$ for $(b_1, b_0)$ from $S$. Notably, since the treated and control subsets, $S_1$ and $S_0$, are {\it independent} (by design), the {\it joint} posterior $\Pi_{(b_1, b_0)}$ can be factorized as the product of the {\it marginal} posteriors:
\begin{align}
    \Pi_{(b_1, b_0)} ~\equiv~ \Pi_{(b_1, b_0)}(\cdot; S) ~=~ \Pi_{b_1}(\cdot; S_1) \times \Pi_{b_0}(\cdot; S_0) ~\equiv~ \Pi_{b_1} \times \Pi_{b_0}. \label{eqn_posterior_for_bias}
\end{align}
where $\Pi_{b_t}$ denotes the posterior of $b_t$ based on $S_t$ for $t = 0,1$. For explicit derivations, see Proposition~\ref{prop_posterior_b1} in Section~\ref{sec_likelihood_and_posterior_calc}. This factorization not only simplifies the analysis of $\Pi_b$, but also provides a straightforward sampling procedure: first, draw a sample $\b$ from $\Pi_{b_1}$ and $\ubb$ from $\Pi_{b_0}$, then define $\bb := \b - \ubb$, yielding a {\it posterior sample from $\Pi_b$} for constructing the ATE posterior.

\paragraph{Hierarchical learning and posterior construction for the ATE.} For completeness, we briefly restate the hierarchical learning strategy from Section~\ref{sec_DRDB_for_mu1}, now adopted to construct a valid posterior $\pATE$ for $\rATE$. Building on the motivation and derivations in Section~\ref{sec_DRDB_for_mu1}, construction proceeds by first drawing $b \sim \Pi_b$ using the joint posterior factorization in \eqref{eqn_posterior_for_bias}. Conditional on $b$, the posterior $\Pi_{\Delta \mid b}$ is obtained through the conditional likelihood formulation with a suitably chosen prior; see Section~\ref{sec_likelihood_and_posterior_calc} for details. The {\it marginal posterior for $\rATE$} is then defined as:
\begin{align}
        [\rATE \mid S] ~:=~ \int [\Delta \mid b, S] \hspace{0.05cm} [b \mid S] \dd b ~=~ \int [\Delta \mid b, S] \hspace{0.05cm} [b_1 \mid S_1] \hspace{0.05cm} [b_0 \mid S_0] \hspace{0.03cm}\dd b_1 \hspace{0.03cm}\dd b_0, \label{eqn_posterior_for_mu_for_S}
\end{align}
where the second equality follows from \eqref{eqn_posterior_for_bias}. This procedure then yields a {\it valid} posterior for $\rATE$, integrating the joint bias information and the hierarchical learning framework, both of which are central to the generalized DRDB procedure.

\paragraph{Posterior aggregation.} Finally, we construct the final posterior $\Pi_\rATE^\CF$ for $\rATE$ using the entire data $\calD$, to recover the efficiency lost. Following the DRDB with CF procedure in Section~\ref{sec_DRDB_for_mu1}, we combine the posteriors $\pATE^{(1)}, \cdots \pATE^{(\bbK)}$ obtained from the corresponding splits $(\calD_k, \calD_k^\-)$ via a CMC approach. We draw independent samples $\{\rATE^{(k)}\}_{k = 1}^\bbK$ from these posteriors, and define a new random variable:
\begin{equation}
    \rATE^\CF ~:=~ \frac{1}{\bbK}\sum_{i = 1}^\bbK \rATE^{(k)} \quad \text{and denote the distribution of $\rATE^\CF$ by: \,~$\pATE^\CF$.} \label{eqn_CF_version_mu}
\end{equation}
The {\it aggregated posterior} $\pATE^\CF$, {\it our final output}, integrates information on $\rATE$ across all splits, providing a principled and computationally efficient basis for final inference on the ATE (see Theorem~\ref{thm_BvM_ATE}). The main steps of the DRDB procedure with CF are summarized in Algorithm~\ref{algo}.

\begin{algorithm}
\caption{Generalized Doubly Robust Debiased Bayesian Procedure for the ATE}
\label{algo}
\vspace{0.05in}
\KwIn{Observed data $\calD$, number of folds $\bbK$, number of posterior draws $M$.}
\vspace{0.05in}
Randomly partition $\calD$ into $\bbK$ disjoint subsets $\{\calD_k\}_{k=1}^{\bbK}$, as described in Section~\ref{sec_methodology}.

\vspace{0.05in}
\For{$k = 1$ \KwTo $\bbK$:}
{
Construct training dataset $\calD_k^-:= \calD \setminus \calD_k$ and test dataset $\calD_k$. \\
Compute the nuisance posteriors $\Pi_m^{(k)} \equiv \Pi_m^{(k)}(\cdot; \calD_k^-)$ and $\Pi_{r_t}^{(k)} \equiv \Pi_{r_t}^{(k)}(\cdot; \calD_k^-)$ for $t = 0,1$, using any suitable Bayesian methods (see Remarks~\ref{remark_post_for_m} and \ref{remark_post_for_r_and_PS}) based on $\calD_k^-$. \\
Draw {\it one} sample $\bm \sim \Pi_m^{(k)}$ and {\it one} sample $\br:= (\ubr, \r)$ where $\underbar{\it r}_t \sim \Pi_{r_t}$ for $t = 0,1$. \\
Retarget the bias via density-ratio reweighting, as described in Section~\ref{sec_DRDB_extended_ATE}, to obtain the {\it first-order} bias $b^\d \equiv b^\d(\bm,\br)$. \\
Obtain the bias posterior $\Pi_b^{(k)} \equiv \Pi_b^{(k)}(\cdot; \calD_k)$ as formulated in~\eqref{eqn_posterior_for_bias} and computed in Proposition~\ref{prop_posterior_b1}.\\
Given $b \sim \Pi_b^{(k)}$, compute conditional posterior $\Pi_{\Delta \mid b}^{(k)}$ as in Proposition~\ref{prop_posterior_mu1_given_b1}. \\
Construct marginal posterior $\pATE^{(k)}$ for $\rATE$ using $\calD_k$ as in \eqref{eqn_posterior_for_mu_for_S}.
}
\vspace{0.05in}
Combine posteriors $\{\pATE^{(k)}\}_{k=1}^{\bbK}$ via a consensus Monte Carlo-type approach in \eqref{eqn_CF_version_mu} to obtain aggregated DRDB posterior $\pATE^{\CF}$ using the full dataset $\calD$.\\
\vspace{0.05in}
\KwOut{Final aggregated DRDB posterior $\pATE^{\CF}$ for the target parameter $\rATE$ from $\calD$.}
\end{algorithm}

\begin{remark}[Scalability aspects] Algorithm~\ref{algo} summarizes the generalized DRDB procedure for obtaining the posterior of $\rATE$, which can be easily adopted for the one-arm $\mu_1^\d = \bbE[Y(1)]$ detailed in Section~\ref{sec_DRDB_for_mu1}. A key feature of DRDB is its computational efficiency: it requires only a single draw from each nuisance posterior, enabling fast estimation of high-dimensional nuisance functions under both parametric and nonparametric models. In contrast, conventional Bayesian methods rely on multiple nuisance posterior samples, which can be computationally costly \citep{antonelli2022causal}. Moreover, DRDB facilitates direct posterior sampling for $\rATE$, as both the bias posterior $\Pi_b$ and the conditional posterior $\Pi_{\Delta \mid b}$ have simple, tractable forms (see Propositions~\ref{prop_posterior_b1} and \ref{prop_posterior_mu1_given_b1}). \textbf{The choice of $K$}: Theoretically, the number of folds $\bbK$ does not affect asymptotic properties as long as it remains fixed. In finite samples, however, $\bbK$ should be chosen carefully: larger $\bbK$ improves nuisance estimation through larger training sets but may increase posterior variance due to smaller test sets. Simulations suggest that $\bbK = 5$ or $10$ generally achieve a favorable balance; in Section~\ref{sec_numerics}, we report results with $\bbK = 5$ for simplicity, noting similar conclusions for $\bbK = 10$.
\end{remark}

\section{Theory}\label{sec_theory}

This section develops the theoretical foundations of the DRDB procedure and establishes posterior consistency and BvM–type results (Theorems~\ref{thm_BvM_mu1}--\ref{thm_BvM_ATE}) for the final DRDB posteriors, under mild regularity conditions on the nuisance parameters.
We first present the posterior construction details for DRDB in Section \ref{sec_likelihood_and_posterior_calc}, followed by the result for $\mu_1^\d$ in Theorem~\ref{thm_BvM_mu1}, a problem of independent interest in missing data theory, and thereafter the main result for the ATE in Theorem~\ref{thm_BvM_ATE}.

\subsection{Likelihood constructions and posterior calculations} \label{sec_likelihood_and_posterior_calc}
We provide a general characterization of the likelihood construction and prior specification used to obtain the posterior of the bias $b(\mt, \rt)$ for $t = 0,1$ and the conditional posterior calculation used in deriving the marginal posterior for $\rATE$ (and $\mu_1$) as discussed in Sections~\ref{sec_DRDB_for_mu1} and~\ref{sec_DRDB_extended_ATE}.

To avoid repetition, we present a unified procedure applicable to any bias term defined in Sections~\ref{sec_DRDB_for_mu1} and~\ref{sec_DRDB_extended_ATE}. Likewise, the conditional posterior derivation is also framed generally, covering the computation of conditional posteriors for $\rATE$ and $\mu_1$, given the corresponding bias(es). This construction includes the specific forms used in Sections~\ref{sec_DRDB_for_mu1} and~\ref{sec_DRDB_extended_ATE} as special cases.

\paragraph{Bias modeling via targeted modeling strategy.} For notational convenience, we use $\calN(\mu, \sigma^2)$ for a Normal distribution with mean $\mu$ and variance $\sigma^2$, and $t_\nu(\eta, c^2)$ for a $t$-distribution with degrees of freedom $\nu >0$, center $\eta$ and scale $c$. Define $b_t :=  b(\mt, \rt)$, $W\!(\bZ,\rt,\mt):=  \rt(\bX)\{Y -\mt(\bX)\}$ and $\sigma^2_t := \Var_{(Y,\bX) \in S_t}\{W\!(\bZ,\rt,\mt)|\rt, \mt\}$ for $t = 0,1$ and $\bZ = (Y, \bX) \in S (\ind (\underbar{\it m}_t, \underbar{\it r}_t))$. Then, given $\mt \sim \Pi_{m_t}$ and $\rt \sim \Pi_{r_t}$, for $\bZ_i \in S_t \subset S$, the weighted observables $W\!(\bZ_i,\rt, \mt)$ are i.i.d. with mean $b_t$ and variance $\sigma^2_t$. This motivates a natural {\it working} model based on a Normal distribution with unknown variance. For simplicity, we recommend using an improper prior on the model parameters, though more general priors yield the same asymptotic properties. Let $\calI_t$ denote the index set of $S_t$. The model and prior formulation are then given as: for each $t \in \{0,1\}$,
\begin{equation}
     W(\bZ_i, \rt, \mt) \mid \mt,  \rt, b_t, \sigma_t^2  ~ \iid ~ \calN(b_t, \sigma_t^2) ~ \ \text{for} \ i \in \calI_t; \quad \pi(b_t, \sigma_t^2) ~\propto~ (\sigma_t^2)^{-1}. \label{eqn_model_constr_for_b1}
\end{equation}

\begin{proposition}\label{prop_posterior_b1}
Under the model construction and the prior given in \eqref{eqn_model_constr_for_b1}, the marginal posterior $\Pi_{b_t} \equiv \Pi_{b_t}(\cdot; S_t)$ for $b_t$ follows a $t$-distribution for $t = 0,1$. Specifically, for $n_t= |S_t|, \ \nu_t = n_t -1$,
\begin{equation}
   \hspace{-0.2cm} \Pi_{b_t} = t_{\nu_t}(\eta_t, c^2_t), ~\text{with}~
   \eta_t = \frac{1}{n_t}\sum_{i \in \calI_t } W(\bZ_i, \rt, \mt)  ~ \text{and} ~ c^2_t = \frac{1}{n_t(n_t - 1)} \sum_{i \in \calI_t} \!\{W(\bZ_i,\rt, \mt) - \eta_t\}^2. \label{eqn_marginal_posterior_b1}
\end{equation}
\end{proposition}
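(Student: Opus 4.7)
}
The plan is a direct Bayes-rule calculation: form the joint posterior for $(b_t, \sigma_t^2)$ from the working model in \eqref{eqn_model_constr_for_b1}, then integrate out $\sigma_t^2$ and recognize the remaining kernel as a Student-$t$ density. Throughout, all computations are conditional on the single draws $\mt \sim \Pi_{m_t}$ and $\rt \sim \Pi_{r_t}$, which are independent of $S_t$ by the data-splitting construction ($S \ind S^\-$); this independence is what legitimizes treating $\{W(\bZ_i, \rt, \mt)\}_{i \in \calI_t}$ as i.i.d.\ from $\calN(b_t, \sigma_t^2)$ in the working likelihood.

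First I would write down the joint posterior. Letting $W_i := W(\bZ_i, \rt, \mt)$ for $i \in \calI_t$, the Normal likelihood combined with the prior $\pi(b_t, \sigma_t^2) \propto (\sigma_t^2)^{-1}$ gives
\begin{equation*}
p(b_t, \sigma_t^2 \mid S_t, \mt, \rt) ~\propto~ (\sigma_t^2)^{-n_t/2 - 1} \exp\!\left\{-\frac{1}{2\sigma_t^2} \sum_{i \in \calI_t} (W_i - b_t)^2\right\}.
\end{equation*}
Next, I would split the sum of squares around the sample mean $\eta_t$ via the standard identity
\begin{equation*}
\sum_{i \in \calI_t} (W_i - b_t)^2 ~=~ \sum_{i \in \calI_t} (W_i - \eta_t)^2 + n_t (\eta_t - b_t)^2 ~=~ n_t(n_t - 1)\, c_t^2 + n_t (b_t - \eta_t)^2,
\end{equation*}
using the definitions of $\eta_t$ and $c_t^2$ in \eqref{eqn_marginal_posterior_b1}.

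Then I would marginalize over $\sigma_t^2$ using the inverse-gamma normalizing integral $\int_0^\infty x^{-\alpha - 1} e^{-\beta/x}\, dx = \Gamma(\alpha)\beta^{-\alpha}$ with $\alpha = n_t/2$ and $\beta = \tfrac{1}{2}\{n_t(n_t-1)c_t^2 + n_t(b_t - \eta_t)^2\}$, yielding
\begin{equation*}
p(b_t \mid S_t, \mt, \rt) ~\propto~ \bigl[\,n_t(n_t - 1)c_t^2 + n_t (b_t - \eta_t)^2\,\bigr]^{-n_t/2} ~\propto~ \left[1 + \frac{1}{\nu_t}\frac{(b_t - \eta_t)^2}{c_t^2}\right]^{-(\nu_t + 1)/2},
\end{equation*}
with $\nu_t = n_t - 1$. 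This is precisely the kernel of $t_{\nu_t}(\eta_t, c_t^2)$, completing the identification of $\Pi_{b_t}$.

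There is no substantial obstacle: every step is a routine conjugate-prior calculation. The only conceptual point worth flagging (rather than a technical one) is ensuring that the conditioning on $(\mt, \rt)$ is handled correctly, so that $\Pi_{b_t}$ is a posterior with respect to $S_t$ alone, as asserted by the notation $\Pi_{b_t}(\cdot; S_t)$; this is immediate from $S_t \ind (\mt, \rt)$, which makes the likelihood in \eqref{eqn_model_constr_for_b1} a well-defined conditional model given the training-side draws.
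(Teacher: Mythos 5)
Your proposal is correct and follows essentially the same route as the paper's proof: form the joint posterior under the Normal working likelihood with the improper prior, decompose the sum of squares around $\eta_t$, integrate out $\sigma_t^2$ via the inverse-gamma integral, and identify the resulting kernel as $t_{\nu_t}(\eta_t, c_t^2)$. The only difference is presentational — the paper gives a single unified argument covering both Propositions~\ref{prop_posterior_b1} and \ref{prop_posterior_mu1_given_b1} in generic notation and defers the final kernel identification to a textbook reference, whereas you carry out the marginalization and the $t$-kernel matching explicitly.
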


\paragraph{Conditional posterior construction.}
To generalize the conditional posterior derivation, we introduce the generic random variables $\theta$ and $\lambda$, where $\theta$ represents either $\rATE$ or $\mu_1$, and $\lambda$ denotes the corresponding bias, i.e., $b_1$ in Section~\ref{sec_DRDB_for_mu1} or $b$ in Section~\ref{sec_DRDB_extended_ATE}. Since the likelihood is constructed using the entire data $S$ and does not depend on specific properties of the bias term or the target parameter, we are justified in adopting this unified notation.

Let $\varphi(\cdot)$ denote a generic regression function, corresponding to $m_1(\cdot)$ in Section~\ref{sec_DRDB_for_mu1} or $m(\cdot)$ in Section~\ref{sec_DRDB_extended_ATE}. Given posterior samples $\underline{\varphi} \sim \Pi_\varphi \equiv \Pi_\varphi(\cdot; S^\-)$ and $\underline{\lambda} \sim \Pi_{\lambda} \equiv \Pi_{\lambda}(\cdot; S)$, we have the i.i.d. replicates $\{\underline{\varphi}(\bX_i)\}_{i \in \calI} \in S$ {\it targeting} $\theta - \underline{\lambda}$ through their mean. Within the target-specific modeling strategy, $\theta - \underline{\lambda}$ can be viewed as a functional of the distribution of $S$, characterized by the summary statistic (mean) of $\underline{\varphi}(\bX)$. Given the independence of these observables, it is natural to adopt a Normal {\it working} model with unknown variance and place an improper prior (for simplicity again, though more general priors are allowed)
on its parameters, yielding an analytically tractable posterior. Specifically, let $\sigma^2_2 := \Var_{\bX \in S}\{\underline{\varphi}(\bX) | \underline{\varphi}\}$. Then, the resulting model is specified as:
\begin{equation}
 \underline{\varphi}(X_i)  \mid  \underline{\varphi}, \underline{\lambda},  \theta, \sigma^2_2 ~ \iid ~ \calN(\theta - \underline{\lambda}, \sigma^2_2) ~\ \text{ for } \ i \in \calI ; \quad \pi(\theta, \sigma^2_2) ~\propto~ (\sigma^2_2)^{-1}. \label{eqn_model_mu1}
\end{equation}

\begin{proposition}\label{prop_posterior_mu1_given_b1}
    Under the model-prior specification in \eqref{eqn_model_mu1}, the conditional posterior $\Pi_{\theta|\underline{\lambda}} \equiv \Pi_{\theta | \underline{\lambda}}(\cdot; \underline{\lambda}, S)$ for $\theta | \underline{\lambda}$ is a $t$-distribution:  For $\nu_S = n_S - 1$ and $\eta_S := \eta_{\underline{\varphi}} + \underline{\lambda}$,
\begin{align}
    & \Pi_{\theta |\underline{\lambda}} = t_{\nu_S}(\eta_S, c^2_S), \text{ with } \eta_S = \frac{\sum_{i \in \calI} \underline{\varphi}(\bX_i) + \underline{\lambda}}{n_S}, \ c^2_S = \frac{\sum_{i \in \calI}\{ \underline{\varphi}(\bX_i) - \eta_{\underline{\varphi}}\}^2}{n_S(n_S - 1)} . \label{eqn_conditional_posterior_mu1_given_b1}
\end{align}
\end{proposition}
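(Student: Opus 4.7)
The plan is to recognize that the statement is, modulo a deterministic shift in the location parameter, the classical textbook derivation of the marginal posterior for a Normal mean under the Jeffreys-type improper prior $\pi(\mu,\sigma^2) \propto \sigma^{-2}$. Since everything happens conditional on $\underline{\varphi}$ and $\underline{\lambda}$, I can treat both as fixed throughout, which justifies the reduction.

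First, I would reparametrize by introducing $\mu := \theta - \underline{\lambda}$. Because $\underline{\lambda}$ is in the conditioning set, the Jacobian is $1$ and the induced prior on $(\mu,\sigma_2^2)$ is still $\pi(\mu,\sigma_2^2) \propto (\sigma_2^2)^{-1}$; the likelihood in \eqref{eqn_model_mu1} becomes the canonical form
\begin{equation*}
\underline{\varphi}(\bX_i) \mid \underline{\varphi}, \mu, \sigma_2^2 ~\iid~ \calN(\mu, \sigma_2^2), \quad i \in \calI.
\end{equation*}
This puts us in the standard location-scale Normal model with an improper Jeffreys prior on $(\mu,\sigma_2^2)$.

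Second, I would invoke the classical conjugate calculation: write the joint posterior for $(\mu,\sigma_2^2)$ as proportional to
\begin{equation*}
(\sigma_2^2)^{-(n_S/2)-1} \exp\!\Bigl\{-\tfrac{1}{2\sigma_2^2}\bigl[(n_S-1)s_\varphi^2 + n_S(\mu - \bar{\varphi})^2\bigr]\Bigr\},
\end{equation*}
where $\bar{\varphi} = n_S^{-1}\sum_{i \in \calI} \underline{\varphi}(\bX_i)$ and $s_\varphi^2 = (n_S-1)^{-1}\sum_{i \in \calI}\{\underline{\varphi}(\bX_i) - \bar{\varphi}\}^2$. Integrating out $\sigma_2^2$ via the inverse-gamma normalizing constant yields $\mu \mid \text{data}, \underline{\lambda} \sim t_{n_S - 1}(\bar{\varphi}, s_\varphi^2/n_S)$. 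Transforming back through $\theta = \mu + \underline{\lambda}$ shifts the $t$-distribution by $\underline{\lambda}$, producing $t_{\nu_S}(\bar{\varphi} + \underline{\lambda}, s_\varphi^2/n_S)$, which matches the claimed $\eta_S = \eta_{\underline{\varphi}} + \underline{\lambda}$ and $c_S^2 = \sum_{i \in \calI}\{\underline{\varphi}(\bX_i) - \eta_{\underline{\varphi}}\}^2/[n_S(n_S-1)]$ upon identifying $\eta_{\underline{\varphi}}$ with $\bar{\varphi}$ (consistent with the notation introduced in Proposition~\ref{prop_posterior_b1}).

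The calculation is essentially mechanical, so there is no substantive obstacle, only bookkeeping issues worth flagging. The subtlest point is to verify that the impropriety of the prior does not threaten the analysis: one needs $n_S \ge 2$ so that $s_\varphi^2 > 0$ almost surely and the inverse-gamma integral over $\sigma_2^2$ converges, yielding a proper posterior. A second point worth stating explicitly is that the reparametrization $\mu = \theta - \underline{\lambda}$ is legitimate \emph{only because} we are computing the conditional posterior given $\underline{\lambda}$ (so that $\underline{\lambda}$ acts as a fixed offset), which is exactly the object the hierarchical construction in \eqref{eqn_integral_representation_of_marginal_post_for_mu1_with_b1m} and \eqref{eqn_posterior_for_mu_for_S} asks for. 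Once these two points are acknowledged, the derivation reduces to the standard Normal-Jeffreys posterior, and the form of $\Pi_{\theta \mid \underline{\lambda}}$ in \eqref{eqn_conditional_posterior_mu1_given_b1} follows.
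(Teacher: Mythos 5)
Your proof is correct and follows essentially the same route as the paper's: the paper also reduces to the generic Normal working model with the improper prior $\pi(\delta,\tau^2)\propto(\tau^2)^{-1}$, integrates out the variance to obtain a $t_{N-1}$ posterior for the mean, and then identifies $\delta=\theta-\underline{\lambda}$ so that the location shifts by $\underline{\lambda}$. Your additional remarks on $n_S\ge 2$ and on why the shift by $\underline{\lambda}$ is legitimate under conditioning are sensible bookkeeping but do not change the argument.
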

In particular, setting $\theta = \mu_1, \underline{\lambda} = \b$ and $\underline{\varphi} = \m \sim \Pi_{m_1}$ gives the conditional posterior $\Pi_{\mu_1|b_1}$ for $\mu_1$ given $\b \sim \Pi_{b_1}$, as in Section~\ref{sec_DRDB_for_mu1}. Similarly, setting $\theta = \rATE, \underline{\lambda} = \underbar{\it b}$ and $\underline{\varphi} = \underbar{\it m} \sim \Pi_{m}$ yields the conditional posterior $\Pi_{\rATE|b}$ for $\rATE$ given $\underbar{\it b} \sim \Pi_{b}$, as in Section~\ref{sec_DRDB_extended_ATE}.

\begin{remark}[Some implementation details]\label{remark_post_for_m} By construction, obtaining a posterior $\Pi_m$ for $m$ based on $S^\-$ reduces to obtaining the \textbf{joint} posterior $\Pi_{(m_1, m_0)}$ for $(m_1, m_0)$ using $S^\-$. Now, the marginal regression functions $m_t(\cdot)$ for $t \in \{0,1\}$ need only the corresponding treated/control subsets $S_t^\- \subset S^\-$ to obtain $\Pi_{m_t}$, via any proper Bayesian regression method. Then, we can \textbf{directly} get the joint posterior as: $\Pi_{(m_1, m_0)} \equiv \Pi_{(m_1, m_0)}(\cdot; S^\-) = \Pi_{m_1}(\cdot; S_1^\-) \times \Pi_{m_0}(\cdot; S_0^\-)$, where the factorization follows from the independence: $S_1^\- \ind S_0^\-$ (notably \textbf{not} due to CF, but a {natural} consequence of the two-arm setup). This crucially ensures: the {joint} $\Pi_{\overrightarrow{m}}$ is obtainable from the marginals only. (Same type of independence, $S_1 \ind S_0$\,, was also used for the {\it joint} bias modeling step in \eqref{eqn_posterior_for_bias}.) To sample $\bm \sim \Pi_m$, we independently draw
$\m \sim \Pi_{m_1}$ and $\ubm \sim \Pi_{m_0}$, and then set $\bm(\cdot) := \m(\cdot) - \ubm(\cdot)$.
\end{remark}

\subsection{Main results}\label{sec_main_results}
This section presents the theoretical properties of the proposed DRDB procedure, providing results for both the ATE $\tATE$ and the one-arm parameter $\mu_1^\d\equiv\mu^\d(1) = \bbE[Y(1)]$.

\begin{assumption}\label{assumption_nuisance}
Let $\bbK \geq 2$ be a fixed integer. For $k = 1, \dots ,\bbK$, we impose the following high-level conditions on the nuisance posteriors $\Pi_{m_t}^{(k)} \equiv \Pi_{m_t}(\cdot; \calD_k^\-)$ and $\Pi_{r_t}^{(k)} \equiv \Pi_{r_t}(\cdot; \calD_k^\-)$ for $t = 0,1$:\begin{itemize}
\item[(a)] Let $\varepsilon_{m,n} \geq 0$ and $\varepsilon_{r,n} \geq 0$ be two sequences satisfying $\max\{\varepsilon_{m,n}, \varepsilon_{r,n}\} \to 0$ and $\sqrt{n_\bbK}\varepsilon_{m,n} \varepsilon_{r,n} \to 0$, and let $M_n \geq 0$ be any sequence such that $M_n \to \infty$ as $n \to \infty$. Then, we assume that:
\begin{align}
       & \Pi_{m_t}^{(k)}\{\|\underbar{\it m}_t(\bX) - m_t^*(\bX) \|_{\bbL_2(\bbP_{\bX})} > M_n\varepsilon_{m,n} \mid \calD_k^\-\} \ \xrightarrow[] {\bbP_{\calD_k^\-}} \ 0, ~~~\mbox{and}  \label{eqn_nuisance_m_contraction} \\
      & \Pi_{r_t}^{(k)}\{\|\underbar{\it r}_t(\bX) - r_t^*(\bX) \|_{\bbL_2(\bbP_{\bX})} > M_n \varepsilon_{r,n} \mid \calD_k^\-\} \ \xrightarrow[]{\bbP_{\calD_k^\-}} \ 0, \label{eqn_nuisance_r_contraction}
    \end{align}
where $m_t^*(\cdot) \in \bbL_2(\bbP_{\bX})$ and $r_t^*(\cdot) \in \bbL_2(\bbP_{\bX})$ is the respective limiting nuisance functions.
\item[(b)] We assume $\sup_{\bx \in \mathcal{X}}\bbE\{Y - m_t^*(\bX) | \bX = \bx\} <\infty$, $\|\underbar{\it r}_t(\bX)\{Y - \underbar{\it m}_t(\bX)\}\|_{\bbL_4(\bbP_\bZ)} = O_{\bbP_{(\underbar{\it m}_t, \underbar{\it r}_t)}}(1)$, $\|\underbar{\it r}_t(\bX)\|_{\bbL_\infty(\bbP_\bX)} = O_{\bbP_{r_t}}(1)$ and $\| \underbar{\it m}_t(\bX) \|_{\bbL_4(\bbP_\bX)} = O_{\bbP_{m_t}}(1)$, for any $\underbar{\it m}_t \sim \Pi_{m_t}^{(k)}$ and $\underbar{\it r}_t \sim \Pi_{r_t}^{(k)}$.
\end{itemize}
\end{assumption}
\begin{remark} Assumption~\ref{assumption_nuisance} (b) is standard, mild moment conditions. Condition (a) specifies the posterior contraction requirement for the nuisance parameters: $\Pi_{m_t}$ and $\Pi_{r_t}$ contract around some fixed functions $m_t^*(\cdot)$ and $r_t^*(\cdot)$ at rates $\varepsilon_{m,n}$ and $\varepsilon_{r,n}$, respectively; these limiting functions need not match with the true $m_t^\d(\cdot)$ and $r_t^\d(\cdot)$. Notably, this is the only assumption required on the nuisance posteriors for Theorems~\ref{thm_BvM_mu1}--\ref{thm_BvM_ATE} to establish posterior consistency and BvM-type results. In contrast to traditional approaches \citep{ray2020semiparametric, breunig2025double, yiu2025}, which often require restrictive conditions (Donsker class) on the nuisance model or explicit posterior correction, DRDB is flexible: any Bayesian method may be used to obtain the nuisance posteriors, provided Assumption~\ref{assumption_nuisance} (a) holds. Moreover, Assumption~\ref{assumption_nuisance} (a) serves as the Bayesian analogue of the $\bbL_2$-consistency requirements for nuisance parameters commonly imposed in frequentist debiased semiparametric inference; see, e.g., \citet{chernozhukov2018double}.
\end{remark}

Let $P$ and $Q$ be two probability measures on a measurable space $(\Omega, \mathcal{B})$. Then, the total variation distance between $P$ and $Q$ is defined as: $d_\TV(P, Q):= \sup_{B \in \mathcal{B}}|P(B) - Q(B)|$.

\begin{theorem}[Main result for the one-arm case: $\mu^\d(1)$]\label{thm_BvM_mu1}
Suppose Assumptions~\ref{assumptions_standard_causal_assump} and~\ref{assumption_nuisance} hold.
\begin{itemize}
\item[(a)] If both nuisance models are well-specified (Case \textbf{C1}), the posterior $\Pi_{\mu_1}^\CF$ satisfies the BvM theorem: $d_\TV(\Pi_{\mu_1}^\CF, \ \calN(\mu_1(m_1^\d, r_1^\d), c^2(m_1^\d, r_1^\d))) \cvP 0$, as $n \to \infty$, where:
    \begin{align}
   \mu_1(m_1^\d, r_1^\d):= \frac{1}{n}\sum_{i = 1}\! m_1^\d(\bX_i) + \frac{1}{n}\!\sum_{i = 1}^n \psi(\bZ_i, m_1^\d, r_1^\d), ~\text{and}~
    c^2(m_1^\d, r_1^\d) := \Var\{\mu_1(m_1^\d, r_1^\d)\},\label{eqn_pmean_pvar_for_mu1}
    \end{align}
with $\psi(\bZ, m_1^\d, r_1^\d) ~: =~ r_1^\d(\bX)T \{Y - m_1^\d(\bX)\}/p_1$.
\item[(b)] If only one nuisance model is well-specified (Case \textbf{C2} or \textbf{C3}), then $\Pi_{\mu_1}^\CF$ contracts around $\mu^\d(1)$ at a rate $\epsilon_n$: For any sequence $M_n \to \infty$, as $n \to \infty$, $\Pi_{\mu_1}^\CF\{|\mu_1 - \mu^\d(1)| \geq M_n \epsilon_n \mid \calD\} \cvP 0$, where $\epsilon_n$ is the contraction rate for the well-specified nuisance model.
\end{itemize}
\end{theorem}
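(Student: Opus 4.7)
\textbf{Proof proposal for Theorem \ref{thm_BvM_mu1}.}

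\emph{Overall strategy.} I will first analyze the fold-specific posterior $\Pi_{\mu_1}^{(k)}$ and then pass to the aggregate $\Pi_{\mu_1}^\CF$. By Propositions~\ref{prop_posterior_b1} and \ref{prop_posterior_mu1_given_b1}, a draw from $\Pi_{\mu_1}^{(k)}$ admits the stochastic representation
\begin{equation*}
\mu_1^{(k)} \;\stackrel{d}{=}\; \eta_{\m_k} + \eta_{1,k} + c_{1,k}\, T_{1,k} + c_{S,k}\, T_{S,k},
\end{equation*}
where $T_{1,k},T_{S,k}$ are independent Student-$t$ variables with $\nu_{1,k},\nu_{S,k}$ degrees of freedom (both growing like $n$), $(\eta_{\m_k},\eta_{1,k})$ are the posterior centers from the hierarchical construction, and $(c_{1,k},c_{S,k})$ are the scales. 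Writing $\r_k = \widehat p_1 / \underbar{\it e}_k$ and noting $1/(n_{\calI_{1,k}}) \cdot T_i = 1/(n_k \widehat p_1)\cdot T_i$, the center simplifies to the AIPW-like form
\begin{equation*}
\eta_{\m_k}+\eta_{1,k} \;=\; \frac{1}{n_k}\sum_{i\in\calI_k}\Bigl\{\m_k(\bX_i)+\frac{T_i}{\underbar{\it e}_k(\bX_i)}\bigl(Y_i-\m_k(\bX_i)\bigr)\Bigr\}.
\end{equation*}
Aggregating over $k$ yields a representation of $\mu_1^\CF$ as a centered quantity plus a sum of independent scaled $t$-variables, which is the template for the BvM.

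\emph{Centering expansion (Case \textbf{C1}).} Let $U_{k,i}$ denote the $i$th summand in the displayed center. Since $\m_k,\underbar{\it e}_k$ are built from $\calD_k^-$ and are thus independent of $\calD_k$, I can add and subtract $m_1^\d(\bX_i) + T_i(Y_i-m_1^\d(\bX_i))/e^\d(\bX_i) = m_1^\d(\bX_i)+\psi(\bZ_i,m_1^\d,r_1^\d)$, yielding
\begin{equation*}
\frac{1}{n}\sum_{k,i} U_{k,i} - \mu_1(m_1^\d,r_1^\d) = R_{n,1} + R_{n,2} + R_{n,3},
\end{equation*}
where $R_{n,1}$ is a mean-zero empirical process term of order $O_\bbP(n^{-1/2}\varepsilon_{m,n})$, $R_{n,2}$ is the analogous $r$-term of order $O_\bbP(n^{-1/2}\varepsilon_{r,n})$ (both handled fold-by-fold using conditional independence, the $L_2$ contraction rates in Assumption \ref{assumption_nuisance}(a), and a conditional Chebyshev/Markov bound), and $R_{n,3}$ is the bilinear drift
\begin{equation*}
R_{n,3}\;=\;\frac{1}{\bbK}\sum_k\bbE_\bX\!\left[\bigl\{\m_k(\bX)-m_1^\d(\bX)\bigr\}\bigl\{r_1^\d(\bX)-\r_k(\bX)\bigr\}\right] + o_\bbP(n^{-1/2}),
\end{equation*}
which by Cauchy--Schwarz is bounded by $O_\bbP(\varepsilon_{m,n}\varepsilon_{r,n})=o_\bbP(n^{-1/2})$ under the product-rate condition. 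This is precisely the retargeting--drift term $\Gamma^\d$ from \eqref{eqn_debiased_rep_bm}. The moment conditions in Assumption~\ref{assumption_nuisance}(b) justify the exchange of expectation and supply uniform integrability.

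\emph{Variance convergence and TV step.} Next I argue that $c_{S,k}^2\,\cvP\,\Var(m_1^\d(\bX))/n_k$ and $c_{1,k}^2\,\cvP\,\Var(r_1^\d(\bX)(Y-m_1^\d(\bX))\mid T=1)/n_{1,k}$ using consistency of sample variances on independent test folds (again via Assumption \ref{assumption_nuisance}(a)--(b)). A short calculation (using $r_1^\d = p_1/e^\d$ and the change-of-measure $\bbE[\cdot\mid T=1]= \bbE[\cdot\, e^\d/p_1]$) shows the sum matches the efficient variance $c^2(m_1^\d,r_1^\d) = n^{-1}\Var\{m_1^\d(\bX)+\psi(\bZ,m_1^\d,r_1^\d)\}$. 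Combining the scale convergence with $\nu_{\cdot,k}\to\infty$, I use the standard fact $d_\TV(t_{\nu}(\eta,c^2),\calN(\eta,c^2))\to 0$ as $\nu\to\infty$, together with Lipschitz continuity of TV between Gaussians in the mean/variance parameters (e.g., Pinsker), to transfer the stochastic representation of $\mu_1^\CF$ into the claimed $\calN(\mu_1(m_1^\d,r_1^\d),c^2(m_1^\d,r_1^\d))$.

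\emph{Part (b) and the main obstacle.} For the doubly robust consistency statement, the same decomposition applies, but one of $\{\varepsilon_{m,n},\varepsilon_{r,n}\}$ is no longer $o(1)$. However, whichever nuisance is correctly specified forces one of $R_{n,1},R_{n,2}$ to carry the bias, while $R_{n,3}$ is bounded by $O_\bbP(\varepsilon_{m,n})$ or $O_\bbP(\varepsilon_{r,n})$ (using boundedness of the mis-specified limit from Assumption~\ref{assumption_nuisance}(b), not the product), giving contraction at the well-specified rate. The main technical obstacle in the entire argument is the tight bookkeeping in step (\emph{centering expansion}): handling the \emph{randomness of the posterior draws} $\m_k,\r_k$ simultaneously with the test-sample average, and reducing the resulting expression to a sum of a fixed empirical process (the efficient influence function) plus three residuals, one of which (the drift $R_{n,3}$) is what the product-rate condition is designed for. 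The sample-splitting $(S\ind S^-)$ is what makes the conditional-independence arguments clean; without it, the process terms would require Donsker-type control, which DRDB explicitly avoids.
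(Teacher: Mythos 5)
Your proposal is correct and follows essentially the same route as the paper: your stochastic representation $\mu_1^{(k)}\stackrel{d}{=}\eta_{\m_k}+\eta_{1,k}+c_{1,k}T_{1,k}+c_{S,k}T_{S,k}$ is exactly the paper's convolution/Gaussian-proxy argument (Lemmas~\ref{lemma_TV_distance_btw_two_posteriors}--\ref{lemma_TV_convolution} plus the $t$-to-Normal TV bound), your centering expansion into two mean-zero empirical-process terms plus the bilinear drift $R_{n,3}=\Op(\varepsilon_{m,n}\varepsilon_{r,n})$ is the paper's decomposition in the proof of Corollary~\ref{cor_pmean_convergence}, and the scale-matching step is Lemma~\ref{lemma_pvar_convergence}. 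Three small points to tighten: (i) in part (b), posterior contraction needs not only the center to converge but also a Chebyshev bound showing the posterior spread $c_{1,k}^2+c_{S,k}^2=\Op(n^{-1})$ is dominated by $M_n\epsilon_n$ (the paper's $V(\m,\r,S)$ term); (ii) with the truth-centered decomposition, $R_{n,2}$ remains mean-zero even when $r$ is misspecified (the residual $Y-m_1^\d$ does the work), so the bias sits entirely in the drift/cross terms rather than being ``carried'' by $R_{n,2}$ --- the paper makes this explicit by re-centering at the limits $(m_1^*,r_1^*)$ and tracking two extra terms $R_4,R_5$; and (iii) the exact cancellation $1/n_{1,k}=1/(n_k\widehat p_1)$ presumes $\widehat p_1$ is the test-fold treated fraction, whereas the paper estimates $p_1$ from $S^\-$ and controls the resulting $\Op(n^{-1/2})$ discrepancy separately.
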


\begin{theorem}[Main result for the ATE: $\tATE$]\label{thm_BvM_ATE}
    Suppose Assumptions~\ref{assumptions_standard_causal_assump} and~\ref{assumption_nuisance} hold.
\begin{itemize}
\item[(a)]Under Case \textbf{C1}, the posterior $\Pi^\CF_\rATE$ satisfies the BvM theorem:\\
$d_\TV(\Pi^\CF_\rATE, \, \calN(\rATE(m^\d, r^\d), c^2(m^\d, r^\d))) \cvP 0$, as $n \to \infty$, where
\begin{align}
    \hspace{-1.35ex} \rATE(m^\d, r^\d) := \frac{1}{n}\! \sum_{i = 1} \!m^\d(\bX_i) + \frac{1}{n}\! \sum_{i = 1}^n \! \gamma(\bZ_i, m^\d, r^\d), ~~ c^2(m^\d, r^\d) := \Var\{\rATE(m^\d, r^\d)\},\label{eqn_pmean_pvar_for_ATE}
\end{align}
and $\gamma(\bZ, m^\d, r^\d) = r_1^\d(\bX)T\{Y - m_1^\d(\bX)\}/p_1 - r_0^\d(\bX)(1-T)\{Y - m_0^\d(\bX)\}/(1-p_1)$.

\item[(b)]Under Case \textbf{C2} or \textbf{C3}, $\pATE^\CF$ contracts around the true $\tATE$ at a rate $\epsilon_n$, where $\epsilon_n$ is the contraction rate for the well-specified nuisance model.
\end{itemize}
\end{theorem}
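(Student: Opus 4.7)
The plan is to extend the argument from Theorem~\ref{thm_BvM_mu1} to the two-arm case by exploiting the bias factorization $\Pi_{(b_1,b_0)} = \Pi_{b_1} \times \Pi_{b_0}$ from \eqref{eqn_posterior_for_bias} (a direct consequence of $S_1 \ind S_0$) together with the joint bias decomposition \eqref{eqn_bias_decomposition_for_ATE}. First, on a single split $(S, S^\-) = (\calD_k, \calD_k^\-)$, I would combine Propositions~\ref{prop_posterior_b1}--\ref{prop_posterior_mu1_given_b1}: conditional on the drawn $\bb = \b - \ubb$, the conditional posterior $\Pi_{\rATE \mid b}^{(k)}$ is $t_{\nu_S}(\eta_{\bm} + \bb,\, c_S^2)$, while marginally and independently $\b \sim t_{\nu_1}(\eta_1, c_1^2)$ and $\ubb \sim t_{\nu_0}(\eta_0, c_0^2)$. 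Since $\nu_S, \nu_0, \nu_1 \to \infty$ and the three components are independent, a Gaussian approximation to the Student-$t$'s combined with the normal convolution identity gives $\pATE^{(k)}$ close in total variation to $\calN(\eta_{\bm} + \eta_1 - \eta_0,\, c_S^2 + c_1^2 + c_0^2)$.

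Next, I would express the random posterior center $\eta_{\bm} + \eta_1 - \eta_0$ (which depends on the posterior-drawn nuisances $\bm,\br$) in terms of the true nuisances. Using $|S_t|/|S| \to p_t$ (by positivity) and the definition of $\eta_t$ from \eqref{eqn_marginal_posterior_b1}, this center rewrites, up to $\op(n^{-1/2})$, as $n_\bbK^{-1}\sum_{i\in\calI}\{\bm(\bX_i) + \gamma(\bZ_i, \bm, \br)\}$. Conditioning on $\calD_k^\-$ (so that $\bm, \br$ act as fixed functions thanks to $S \ind S^\-$), a first-order expansion around $(m^\d, r^\d)$ yields (i) the target sample mean $n_\bbK^{-1}\sum_{i\in\calI}\{m^\d(\bX_i) + \gamma(\bZ_i, m^\d, r^\d)\}$, (ii) mean-zero linear terms of order $\Op(n^{-1/2}\max(\varepsilon_{m,n}, \varepsilon_{r,n}))$ by a conditional empirical-process bound (the variance of each term vanishes at the nuisance contraction rate), and (iii) the drift $\Gamma^\d(\bm,\br)$ from \eqref{eqn_bias_decomposition_for_ATE}, bounded by $\|\bm-m^\d\|_{\bbL_2(\bbP_\bX)}\|\br-r^\d\|_{\bbL_2(\bbP_\bX)} = \Op(\varepsilon_{m,n}\varepsilon_{r,n}) = \op(n^{-1/2})$ by the product-rate condition in Assumption~\ref{assumption_nuisance}(a). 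A conditional WLLN combined with Assumption~\ref{assumption_nuisance}(b) simultaneously gives $c_S^2, c_0^2, c_1^2$ converging in probability to their population-variance counterparts evaluated at $(m^\d, r^\d)$.

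For the cross-fitted aggregate $\rATE^\CF = \bbK^{-1}\sum_k \rATE^{(k)}$, the fold-level draws are independent given $\calD$, so convolving the fold-specific Gaussian approximations produces $\pATE^\CF \approx \calN(\bbK^{-1}\sum_k \widetilde{\rATE}^{(k)},\, \bbK^{-2}\sum_k v_k^2)$, where $\widetilde{\rATE}^{(k)}$ and $v_k^2$ denote the fold-$k$ center and scale from the previous step. Because $\{\calD_k\}$ partitions $\calD$, the average of centers collapses exactly to $\rATE(m^\d, r^\d)$ from \eqref{eqn_pmean_pvar_for_ATE}, and since $v_k^2$ is of order $\bbK \cdot c^2(m^\d, r^\d)$, the aggregated variance limits to $c^2(m^\d, r^\d)$; total variation convergence then follows from the standard Gaussian TV bound under parameter convergence. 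For part (b), I would drop the product-rate requirement and invoke the classical double-robustness identity $\bbE\{m(\bX) + \gamma(\bZ; m, r)\} = \rATE^\d$ whenever either $m = m^\d$ or $r = r^\d$. The drift term then reduces to a single-nuisance error of order $\varepsilon_n$ (the well-specified side's contraction rate), yielding the stated posterior contraction rate.

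The principal obstacle is the second step: simultaneously controlling the drift $\Gamma^\d(\bm,\br)$ and the $t$-distributed uncertainty injected by the bias posterior. Specifically, one must propagate the conditional-on-$\calD_k^\-$ nuisance posterior uncertainty through \emph{both} the conditional likelihood for $\rATE \mid b$ and the marginal bias likelihood, and show the joint second-order remainder is $\op(n^{-1/2})$ rather than merely $\op(1)$. A secondary subtlety is verifying that the Gaussian approximations to the three Student-$t$'s transfer cleanly through the $\bbK$-fold convolution without TV-error accumulation, which I would handle via uniform bounds on the Gaussian–$t$ TV distance as $\nu \to \infty$.
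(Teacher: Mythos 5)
Your proposal follows essentially the same route as the paper: reduce to a single split via subadditivity of TV under convolution, replace the three independent $t$-distributed building blocks ($\Pi_{\rATE\mid b}$, $\Pi_{b_1}$, $\Pi_{b_0}$) by Gaussians with matched centers and scales using the $O(1/\nu)$ Gaussian--$t$ TV bound, and then reduce the remaining Gaussian-vs-Gaussian comparison to convergence of the posterior center and scale, with the center decomposed into the target sample mean plus mean-zero linear terms plus a drift controlled by $\Op(\varepsilon_{m,n}\varepsilon_{r,n}) = \op(n^{-1/2})$; the paper packages exactly this center analysis as Corollary~\ref{cor_pmean_convergence}, being slightly more careful than your sketch about the fact that the weighted summands share the common random factor $\ptilde_n$ and are therefore not independent. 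The one concrete omission is in part (b): establishing that the posterior \emph{mean} is $\Op(\epsilon_n)$-close to $\tATE$ via the double-robustness identity is only half the argument, and you still need to show that the posterior mass concentrates around its own mean at rate $\epsilon_n$, which the paper does by Chebyshev's inequality after proving the posterior variance is $O_{\bbP_\calD}(n^{-1})$ (so that $(M_n\epsilon_n)^{-2}\,\Var(\rATE\mid\calD) \to 0$ whenever $\epsilon_n \gtrsim n^{-1/2}$); without this step the stated contraction of $\pATE^\CF$ does not follow from the behavior of its mean alone.
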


A direct consequence of Theorems~\ref{thm_BvM_mu1} and \ref{thm_BvM_ATE} is that DRDB provides natural Bayesian point estimators for $\mu^\d(1)$ and the ATE $\tATE$ through the {\it posterior means:} $\widehat{\mu}_1(\m, \r)$ and $\widehat{\rATE}(\bm, \br)$, respectively. In Corollary~\ref{cor_pmean_convergence}, we rigorously characterize the theoretical properties of these DRDB point estimators.

\begin{corollary}[Properties of the posterior means]
\label{cor_pmean_convergence}
Suppose the assumptions of Theorems~\ref{thm_BvM_mu1} and \ref{thm_BvM_ATE} hold.
\begin{itemize}
\item[(a)] Under Case \textbf{C1}, the posterior means are asymptotically equivalent to the means of the limiting distributions: (i) $\sqrt{n}\{\widehat{\mu}_1(\m, \r) - \mu_1(m_1^\d, r_1^\d)\} = \op(1)$ and (ii) $\sqrt{n}\{\widehat{\rATE}(\bm, \br) - \rATE(m^\d, r^\d)\} = \op(1)$.

\item[(b)] Under Case \textbf{C2} or \textbf{C3}, $\widehat{\mu}_1(\m, \r)$ and $\widehat{\rATE}(\bm, \br)$ are $\epsilon_n^{-1}$-consistent estimators for $\mu^\d(1)$ and $\tATE$, respectively, where $\epsilon_n$ denotes the posterior contraction rate of the well-specified nuisance model:
$(i) \big\{\widehat{\mu}_1(\m, \r) - \mu^\d(1)\big\} = \Op(\epsilon_n) \ \text{and} \ (ii) \big\{\widehat{\rATE}(\bm, \br) - \tATE \big\} = \Op(\epsilon_n)$.
\end{itemize}
\end{corollary}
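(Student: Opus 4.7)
The plan is to exploit the fact that, by Propositions~\ref{prop_posterior_b1} and~\ref{prop_posterior_mu1_given_b1}, the bias posterior $\Pi_{b_t}$ and the conditional posterior $\Pi_{\mu_1 \mid b_1}$ (resp.\ $\Pi_{\rATE \mid b}$) are $t$-distributions whose posterior means equal their location parameters (for $\nu_t > 1$, which holds for all sufficiently large $n$). By the tower property applied to the hierarchical integral in~\eqref{eqn_integral_representation_of_marginal_post_for_mu1_with_b1m}, the fold-$k$ posterior mean for $\mu_1$ collapses to a closed form, so after CMC aggregation~\eqref{eqn_CF_version_mu1} one obtains $\widehat{\mu}_1(\m, \r) = \bbK^{-1} \sum_{k=1}^\bbK \big\{ n_\bbK^{-1} \sum_{i \in \calI_k} \m(\bX_i) + n_{1,k}^{-1} \sum_{i \in \calI_k,\, T_i = 1} \r(\bX_i)\{Y_i - \m(\bX_i)\} \big\}$, where $n_{1,k} := |\{i \in \calI_k : T_i = 1\}|$; an analogous computation, using the factorization~\eqref{eqn_posterior_for_bias} across arms, yields a closed form for $\widehat{\rATE}(\bm, \br)$. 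This reduces the corollary to a frequentist analysis of doubly-robust cross-fit estimators whose centers match those appearing in the BvM expansions of Theorems~\ref{thm_BvM_mu1}--\ref{thm_BvM_ATE}.

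For part (a) under Case \textbf{C1}, I would subtract $\mu_1(m_1^\d, r_1^\d)$ from the closed form and decompose the difference into three standard DR pieces: (i) a linearization remainder of the form $n^{-1} \sum_i \{\m(\bX_i) - m_1^\d(\bX_i) - \bbE_\bX[\m - m_1^\d \mid \calD_k^\-]\}$, which is mean zero given $\calD_k^\-$ and hence $\op(n^{-1/2})$ by a conditional Chebyshev bound---this is exactly where the independence $S \ind S^\-$ underlying DRDB is essential; (ii) a second-order product bias bounded by $\|\r - r_1^\d\|_{\bbL_2(\bbP_\bX)} \|\m - m_1^\d\|_{\bbL_2(\bbP_\bX)}$, which is $\op(n^{-1/2})$ via the product-rate condition $\sqrt{n_\bbK}\,\varepsilon_{m,n}\,\varepsilon_{r,n} \to 0$ in Assumption~\ref{assumption_nuisance}; and (iii) the leading IF contribution, which matches $n^{-1} \sum_i \psi(\bZ_i, m_1^\d, r_1^\d)$ after absorbing an $\op(n^{-1/2})$ correction arising from the ratio $n_{1,k}/(n_\bbK p_1) - 1 = \Op(n^{-1/2})$ multiplied against a mean-zero $\Op(n^{-1/2})$ fold average, and from the plug-in $\phat_1$ inside $\r = \phat_1/\e$ (Remark~\ref{remark_post_for_r_and_PS}). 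Summing across folds and multiplying by $\sqrt{n}$ gives (a)(i); (a)(ii) follows by applying the same decomposition arm-by-arm and invoking $S_0 \ind S_1$ to factor the bias posterior as in~\eqref{eqn_posterior_for_bias}.

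For part (b), the same closed form admits a direct DR identity. Under, say, a correctly specified outcome model ($m_1^* = m_1^\d$, $r_1^*$ possibly misspecified), adding and subtracting $r_1^*(\bX)\{Y - m_1^\d(\bX)\}$ inside the second sum gives $\widehat{\mu}_1(\m, \r) - \mu^\d(1) = \Op(n^{-1/2}) + \Op(\|\r - r_1^*\|_{\bbL_\infty(\bbP_\bX)} \|\m - m_1^\d\|_{\bbL_2(\bbP_\bX)}) = \Op(\varepsilon_{m,n})$, using the boundedness conditions in Assumption~\ref{assumption_nuisance}(b); the symmetric case (correct $r_1$, misspecified $m_1$) is handled identically via the DR structure of the IF term, the residual being controlled by $\varepsilon_{r,n}$. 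Taking $\epsilon_n$ to be the contraction rate of whichever nuisance is well-specified yields (b)(i), and (b)(ii) follows arm-by-arm. The main obstacle throughout is not conceptual but the careful bookkeeping around the two differing normalizations ($n_\bbK$ vs.\ $n_{1,k}$) and the plug-in $\phat_1$, so that the IF-type correction term in the closed form matches $n^{-1} \sum_i \psi(\bZ_i, m_1^\d, r_1^\d)$ up to an $\op(n^{-1/2})$ remainder; once this accounting is settled, the corollary essentially extracts the first-order center of the BvM expansion established in Theorems~\ref{thm_BvM_mu1}--\ref{thm_BvM_ATE}.
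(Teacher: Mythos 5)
Your proposal follows essentially the same route as the paper: both reduce the CMC-aggregated posterior mean to the identical fold-level closed form $\bbP_{n_S}\{\m(\bX)\} + \ptilde_n^{-1}\bbP_{n_S}[\r(\bX)T\{Y-\m(\bX)\}]$, then run a frequentist doubly robust decomposition (the paper's terms $R_1,\dots,R_5$ correspond to your three pieces plus the misspecification residuals), handling the mean-zero fluctuations by conditional Chebyshev via $S \ind S^\-$, the second-order product by $\sqrt{n_\bbK}\,\varepsilon_{m,n}\varepsilon_{r,n}\to 0$, the $\ptilde_n$-vs-$p_1$ normalization as an $\Op(n^{-1})$ correction, and the Case \textbf{C2}/\textbf{C3} residual at the well-specified rate. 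One minor bookkeeping imprecision: in part (b) the dominant remainder is controlled by $\|\m - m_1^\d\|_{\bbL_2(\bbP_\bX)}\,\|r_1^\d - r_1^*\|_{\bbL_2(\bbP_\bX)}$ with the second factor only $O(1)$ (the paper's $R_5$), not by $\|\r - r_1^*\|\,\|\m - m_1^\d\|$ as you wrote, though your stated conclusion $\Op(\varepsilon_{m,n})$ is the correct one.
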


\begin{remark}[Matching frequentist properties]\label{remark_pmean_asym_properties}
Corollary~\ref{cor_pmean_convergence} establishes the asymptotic behavior of DRDB point estimators under different nuisance model specifications. In Case \textbf{C1}, $\widehat{\rATE}(\bm, \br)$ admits asymptotically linear representations at the $\sqrt{n}$-rate, achieving semiparametric efficiency as the mean $\rATE(m^\d, r^\d)$ of the limiting distribution in Theorem~\ref{thm_BvM_ATE} coincides with the `efficient' influence function (EIF) for the ATE \citep{robins1995semiparametric, hahn98}. Under Cases \textbf{C2} and \textbf{C3}, $\widehat{\rATE}(\bm, \br)$ remains consistent for $\tATE$ with convergence rates determined by the posterior contraction rate of the well-specified nuisance model, reflecting the double robustness of the DRDB procedure.
\end{remark}

Similar asymptotic properties to those in Remark~\ref{remark_pmean_asym_properties} hold for $\widehat{\mu}_1(\m, \r)$, relevant for mean estimation of missing outcomes under MAR \citep{tsiatis2007semiparametric}. The details are analogous and omitted for brevity.

\begin{remark}[Key theoretical properties of DRDB]\label{remark_DRDB_theory}
Theorems~\ref{thm_BvM_mu1} and \ref{thm_BvM_ATE} establish the main theoretical guarantees of DRDB. Under Case~\textbf{C1}, the posterior $\pATE^\CF$ contracts around the true ATE at the parametric $1/\sqrt{n}$ rate. This condition holds, for example, if each nuisance contracts faster than $n^{-1/4}$, allowing wide flexibility in model choices. Also, the asymptotic variance of $\pATE^\CF$ remains unaffected by nuisance estimation error: it relies only on the limiting functions $m^*$ and $r^*$, no other features of the nuisance posteriors. This robustness arises from the Bayesian debiasing strategy combined with CF and the targeted modeling introduced in Section~\ref{sec_methodology}. Thus, DRDB allows high-dimensional or nonparametric nuisance models with rates slower than $n^{-1/2}$, while retaining $\sqrt{n}$-rate inference for the ATE. Moreover, under Cases~\textbf{C2} and \textbf{C3}, $\pATE^\CF$ still contracts around the true ATE at the rate of the correctly specified model, showing Bayesian double-robustness of DRDB. Analogous results for the one-arm case: $\mu^\d(1)$ also follow from Theorem~\ref{thm_BvM_mu1}.
\end{remark}

\subsection{Comparison with alternative Bayesian debiasing strategies}
A main challenge in Bayesian semiparametric inference is that regularization bias from flexible nuisance models can propagate into the posterior for a low-dimensional target parameter, such as the ATE, compromising inferential validity \citep{bickel2012semiparametric, castillo2015bernstein}. To mitigate this nuisance-induced bias, two prominent strategies have emerged: {\it prior modification}, which tailors prior specification to the semiparametric model structure \citep[e.g.,]{ray2020semiparametric, breunig2025double}; and {\it posterior correction}, which applies a post-hoc adjustment using the efficient influence function (EIF) \citep{yiu2025}. A complementary approach by \cite{luo2023semiparametric} constructs posteriors via exponentially tilted empirical likelihood and establishes BvM results for partially linear and parametric models. Our DRDB procedure introduces an alternative perspective by embedding debiasing directly into the modeling process through {\it targeted learning}. Below, we compare DRDB with these two state-of-the-art alternatives, focusing on the seminal works of \cite{ray2020semiparametric} and \cite{yiu2025}, both mainly interested in the mean outcome under MAR, closely related to the one-arm case for our ATE setting, discussed in Section~\ref{sec_DRDB_for_mu1}.

The prior modification approach proposed by \cite{ray2020semiparametric} models the {\it full data distribution} with nonparametric priors, innovatively augmenting the prior for the outcome regression with an estimator of the propensity score (PS) obtained from an independent auxiliary data. This augmentation perturbs the prior in the model’s least favorable direction of the semiparametric model, and thereby mitigates nuisance-induced bias. While theoretically elegant, this approach requires customized prior design and strong conditions, including Donsker assumptions and smoothness constraints, limiting the use of machine learning or high dimensional nuisance models. Moreover, its theoretical guarantees require both nuisance models to be correctly specified.

The posterior correction approach in \cite{yiu2025} takes a different route by adding a stochastic correction to posterior draws, inspired by the frequentist {\it one-step estimator} \cite[Chapter 5.7]{van2000asymptotic}. Using the EIF and the {\it Bayesian bootstrap} \citep{rubin1981bayesian}, it projects draws towards the truth along the most informative direction, achieving bias reduction. While this approach can target multiple functionals, it relies on posterior draws for the full data model, and, similar to DRDB, its theoretical guarantees rely on product-type rate conditions on the nuisance posteriors. In addition, it requires Donsker-type conditions \citep[Chapter 19]{van2000asymptotic}, which are often restrictive for flexible or high-dimensional nuisance models, whereas DRDB avoids the explicit need for such conditions (see Assumption~\ref{assumption_nuisance}) via its distinct use of cross-fitting.

Cross-fitting (CF) is a well-established tool in the frequentist literature, commonly used to relax Donsker-type conditions on nuisance parameters \cite{chernozhukov2018double}. In DRDB, however, CF is {\it not} merely a technical device; it is a crucial component of our debiasing mechanism. Within the DRDB framework, CF is methodologically essential (see Remark~\ref{remark_data_splitting}), and also underpins a novel {\it Bayesian analogue}, an aggregation strategy that combines {\it posteriors} across splits (folds), enabling principled Bayesian semiparametric inference while leveraging the full data efficiently.

DRDB departs from both methods in philosophy and implementation. Rather than modifying the prior or correcting the posterior, DRDB embeds debiasing within the modeling process through the {\it debiased representation}. It explicitly identifies and {\it learns the bias as a separate target}, using summary statistics that are directly informative about the ATE and its bias. Furthermore, the role of the propensity score underlines these differences: it enters externally to guide prior design, while in DRDB it arises {\it naturally} through density ratio weighting. Likewise, the use of independent data differs: \cite{ray2020semiparametric} primarily leverages it for technical convenience, whereas DRDB uses it to {\it validate} the debiased representation and strengthen bias correction.

Theoretically, DRDB requires only {\it high-level posterior contraction} for the nuisance models, and mild moment conditions, making it compatible with flexible nuisance models. Most importantly, DRDB achieves {\it Bayesian double robustness}: ATE posterior remains consistent and contracts at the rate of the well-specified nuisance, even if the other is misspecified. This guarantee is stronger than those of both prior augmentation and one-step posterior correction methods, offering greater stability under model misspecification. Computationally, DRDB is also far simpler and more scalable. Whereas \cite{yiu2025} require a full set of posterior draws for all nuisance parameters, DRDB needs only a {\it single} posterior draw per nuisance per cross-fitting fold. Compared with \cite{ray2020semiparametric}, which demands intricate prior customization, DRDB’s modular structure allows seamless use of standard Bayesian regression tools without model-specific tuning.

A further distinction of DRDB is the generality of its methodological framework, which extends seamlessly beyond the ATE to a broad class of causal estimands. By representing each estimand as a weighted functional and adapting the debiasing and retargeting steps accordingly, DRDB maintains inferential validity and computational scalability under these broader settings. The detailed formulation of this extension is presented in Section~\ref{sec_DRDB_conditional} of the \hyperref[sec_supplementary]{Supplementary Material}.

DRDB unifies the theoretical strengths of prior modification and posterior correction while also introducing new modeling perspectives and methodological advances that extend its applicability. By embedding debiasing directly into the modeling process, it enables flexible nuisance estimation, ensures valid inference under mild conditions, and remains computationally efficient. These features establish DRDB as a robust, theoretically grounded, and practically scalable framework for Bayesian causal inference and, more broadly, Bayesian semiparametric inference in general.

\section{Numerical studies}\label{sec_numerics}

We evaluate the finite-sample performance of the proposed DRDB procedure for both estimation and inference of the ATE through extensive simulation studies across various data-generating mechanisms and Bayesian methods for nuisance estimation, including both well-specified and misspecified settings. The mean of the DRDB posterior $\pATE^\CF$ serves as our point estimator. We report the empirical bias ({\bf Bias}) and mean squared error ({\bf MSE}) for estimation accuracy. For inference evaluation, we report the empirical coverage probabilities ({\bf Cov}) and average lengths of the 95\% credible intervals ({\bf CI-Len}) based on 1000 posterior samples from $\pATE^\CF$. For the number of folds, we set $\bbK = 5$ for computational efficiency. All reported results are based on 500 replications. We study two scenarios: one where {\it both} nuisance models are well-specified (Section~\ref{sec_sim_correctly_specified}), and another where {\it only one} of them is well-specified (Section~\ref{supp_sec_sim_misspecified} of the \hyperref[sec_supplementary]{Supplementary Material}).

The following notations are used throughout this section. For any integer $p \geq 1$ and $v \in \bbR$, let $v_p$ be the vector $v_p := (v, \dots, v)' \in \bbR^{p \times 1}$. Let $I_p$ denote the $p \times p$ identity matrix, $\calN_p(\mu_p, \Sigma_p)$ denotes the $p$-variate Gaussian distribution with mean vector $\mu_p \in \bbR^{p}$ and covariance matrix $\Sigma_p \in \bbR^{p \times p}$.

\subsection{Simulation results}\label{sec_sim_correctly_specified}
Throughout, we set $n = 1000$ and consider $p = 10$, $50$, and $200$, representing low, moderate, and high-dimensional settings, respectively. For each $i = 1, \dots, n$, the covariate vector is generated as: $\bX_i \iid \calN_p(0_p, I_p)$. Conditional on $\bX_i$, the treatment assignment follows: $T_i| X_i \sim \mathrm{Ber}\{e^\d(\bX_i)\}$, where $e^\d(\bX_i) = 1/\{1 + \exp^{-(\bX_i'\beta_3 - 0.08)}\}$ and $\beta_3 = (0.35_2, 0_{p-2})$, ensuring the positivity condition in Assumption~\ref{assumptions_standard_causal_assump}. Given $\bX_i$, the potential outcomes are generated as: $Y_i(t) \sim \calN(m_t^\d(\bX_i), \sigma_t^2)$, with $m_1^\d(\bX) = 5 + 2 \bX'\beta_1$ and $m_0^\d(\bX) = 3 + \bX'\beta_0$, and variances $\sigma_t^2 = \Var\{m_t^\d(\bX)\}/5$ for $t = 0, 1$. The observed outcome is therefore $Y_i = T_i Y_i(1) + (1-T_i) Y_i(0)$, and the observed data is $\calD = \{(Y_i, \bX_i, T_i)\}_{i=1}^n$. The regression coefficients $\beta_1 = \beta_0$ are set as $ (1_{s/2}, 0.5_{s/2}, 0_{p-s})$, where $s$ denotes {\it sparsity}. For $p = 10$, we use $s = 3$ and $s = 10$; for $p = 50$ and $p = 200$, we take $s \approx \sqrt{p}$ and $s \approx p/4$ to represent sparse and moderately dense regimes.

For illustrative purposes, we employ the sample mean $\phat_1: = n^{-1}\sum_{i = 1}^n T_i$ as a point estimator for $p_1 := \bbP(T = 1)$. For the posterior $\Pi_e$, we only use sparse Bayesian ({\tt BS}) logistic regression with nonlocal priors (NLP) \citep{johnson2012bayesian} for simplicity. For the posteriors $\Pi_{m_1}$ and $\Pi_{m_0}$, in addition to {\tt BS} linear regression with NLP, we consider Bayesian ridge regression ({\tt BR}) and {\tt BART} \citep{bart2010}, implemented using the R package \texttt{BART}. For parametric methods, we consider the Gaussian linear and logistic regression working models: for $i = 1, \dots, n$, $Y_i | \bX_i, T_i = 0, \gamma_0, \theta_0,\sigma \iid \calN(\gamma_0 + \bX_i'\theta_0, \sigma^2)$,  $Y_i| \bX_i, T_i = 1, \gamma_1, \theta_1,\tau \iid \calN(\gamma_1 + \bX_i'\theta_1, \tau^2)$ and $T_i | \bX_i \sim \mathrm{Ber}\{e(\bX_i)\}$, where $e(\bX_i) = 1/\{1+ \exp^{-(\gamma_3 + \bX_i'\theta_3)}\}$.
For {\tt BR}, we employ a Gaussian prior on the regression coefficients and an improper prior on the variance parameter. The ridge parameter $\lambda$ is estimated using an empirical Bayes approach, with the point estimate $\widehat{\lambda}$ obtained via the R package \texttt{glmnet}. For {\tt BS}, posterior samples for $(\gamma_0, \theta_0)$, $(\gamma_1, \theta_1)$ and $(\gamma_3, \theta_3)$ are obtained using the R package \texttt{mombf}. Finally, as a performance benchmark, we report the results for the frequentist {\it oracle} estimator, constructed using the empirical mean of the EIF of the ATE \citep{hahn98} over $\calD_n$ using the {\it true} nuisance parameters, denoted as {\tt Oracle}.

\begin{table}
\caption{Estimation and inference results for the ATE based on DRDB with the setting given in Section~\ref{sec_sim_correctly_specified}. To distinguish between nuisance estimation methods, we denote each approach as DRDB-M, where ``M'' indicates the specific nuisance method: S = {\tt BS}, R = {\tt BR}, and B = {\tt BART}.
\label{table_correctly_specified}}
\begin{tabular*}{\columnwidth}{@{\extracolsep\fill}lcccccccc@{\extracolsep\fill}}
\toprule
{\bf Method} & {\bf Bias} & {\bf MSE}
& {\bf Cov} & {\bf CI-Len} & {\bf Bias} & {\bf MSE}
& {\bf Cov} & {\bf CI-Len} \\
\midrule
\multicolumn{1}{@{}l@{}}{$p = 10$} & \multicolumn{4}{@{}c@{}}{$s = 3$} & \multicolumn{4}{@{}c@{}}{$s = 10$} \\
\cline{1-1}\cline{2-5}\cline{6-9}
 {\tt Oracle}  & -0.006 & 0.008 & 0.936 & 0.333  &0.010 &0.020 & 0.944 &0.552 \\
 DRDB-S  & -0.006 &0.008 &0.936& 0.335 & 0.010 &0.020 & 0.944 &0.552 \\
 DRDB-R   & -0.008 & 0.008 & 0.934 & 0.340 & 0.010 &0.021 & 0.952 & 0.566 \\
 DRDB-B   & -0.010 & 0.009 & 0.958 & 0.396 & 0.014 &0.029 & 0.962 & 0.722 \\

\multicolumn{1}{@{}l@{}}{$p = 50$} & \multicolumn{4}{@{}c@{}}{$s = 7$} & \multicolumn{4}{@{}c@{}}{$s = 13$} \\
\cline{1-1}\cline{2-5}\cline{6-9}%
{\tt Oracle}   & 0.008 & 0.015 & 0.952 & 0.482  & 0.000 & 0.025 & 0.954 & 0.644 \\
DRDB-S   & 0.009 & 0.016 & 0.954 & 0.494  & 0.000 & 0.027 & 0.952 & 0.671 \\
DRDB-R   & 0.015 & 0.018 & 0.956 & 0.546  & -0.001 & 0.030 & 0.960 & 0.729 \\
DRDB-B   & 0.013 & 0.020 & 0.968 & 0.630  & -0.006 & 0.035 & 0.992 & 0.908 \\

\multicolumn{1}{@{}l@{}}{$p = 200$} & \multicolumn{4}{@{}c@{}}{$s = 14$} & \multicolumn{4}{@{}c@{}}{$s = 50$} \\
\cline{1-1}\cline{2-5}\cline{6-9}%
{\tt Oracle}   & 0.007  & 0.026 & 0.962 & 0.656   & -0.010 & 0.105 & 0.946 & 1.235 \\
DRDB-S   & 0.006  & 0.027 & 0.972 & 0.684   & -0.014 & 0.134 & 0.980 & 1.600 \\
DRDB-R   & 0.028  & 0.048 & 0.982 & 1.026   & 0.010  & 0.167 & 0.978 & 1.930 \\
DRDB-B   & 0.016  & 0.036 & 0.986 & 0.973   & -0.019 & 0.243 & 0.994 & 2.468 \\
\bottomrule
\end{tabular*}
\end{table}
\begin{figure}[ht!]
  \centering
  \begin{minipage}[b]{0.45\textwidth}
    \centering
    \includegraphics[width=\linewidth]{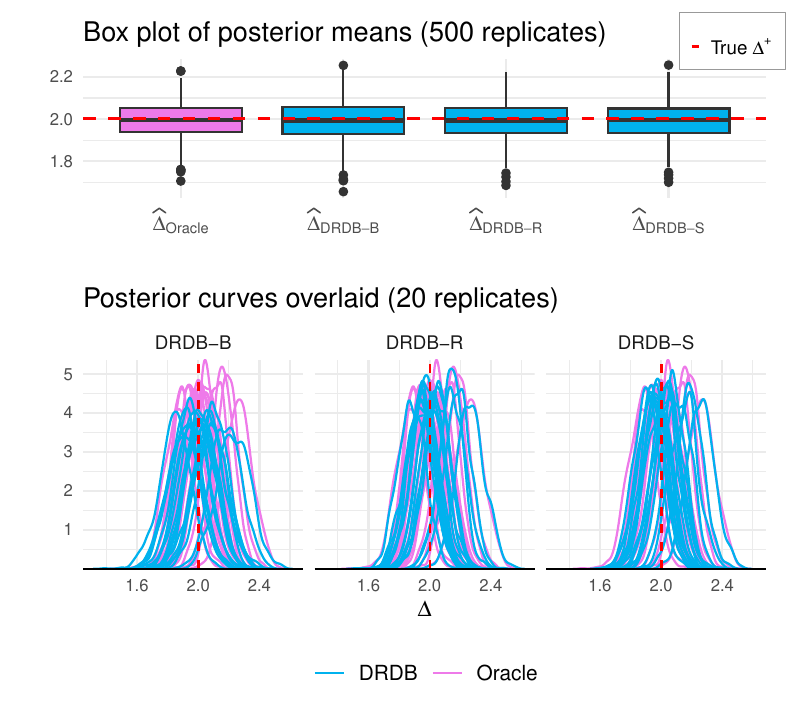}
    \par\vspace{2pt}
    (a) Setting: \(p = 10\) with \(\mathbf{s = 3}\).
  \end{minipage}
  \hfill
  \begin{minipage}[b]{0.45\textwidth}
    \centering
    \includegraphics[width=\linewidth]{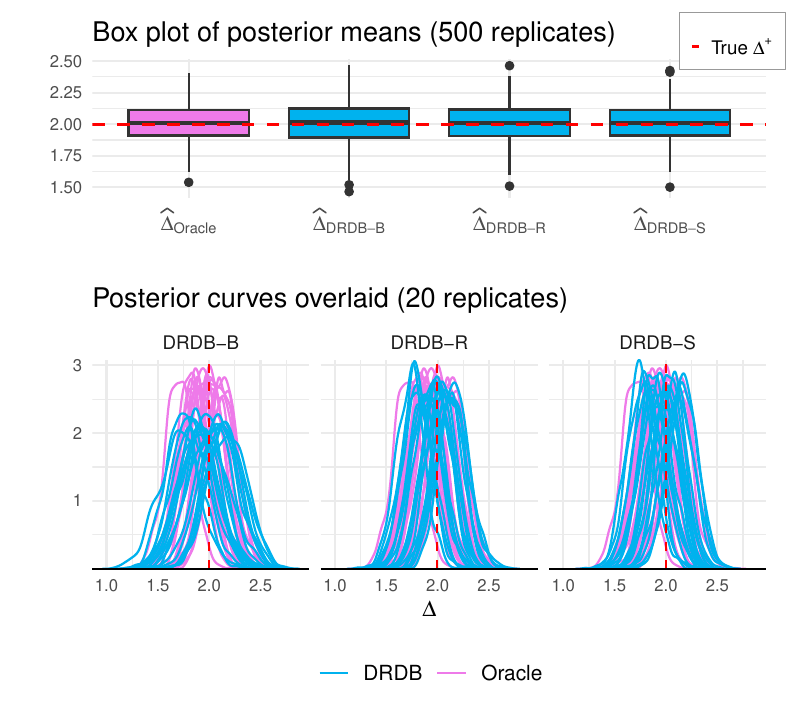}
    \par \vspace{2pt}
    (b) Setting: \(p = 10\) with \(\mathbf{s = 10}\).
  \end{minipage}

  \vspace{0.3cm} 

  \begin{minipage}[b]{0.45\textwidth}
    \centering
    \includegraphics[width=\linewidth]{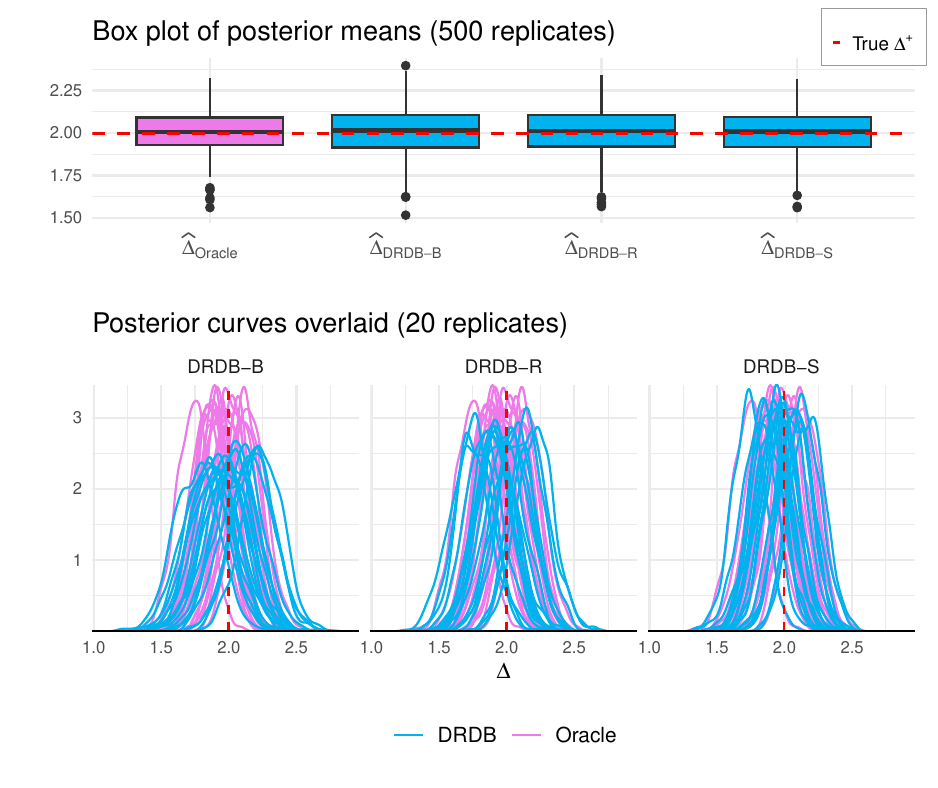}
    \par\vspace{2pt}
    (c) Setting: \(p = 50\) with \(\mathbf{s = 7}\).
  \end{minipage}
  \hfill
  \begin{minipage}[b]{0.45\textwidth}
    \centering
    \includegraphics[width=\linewidth]{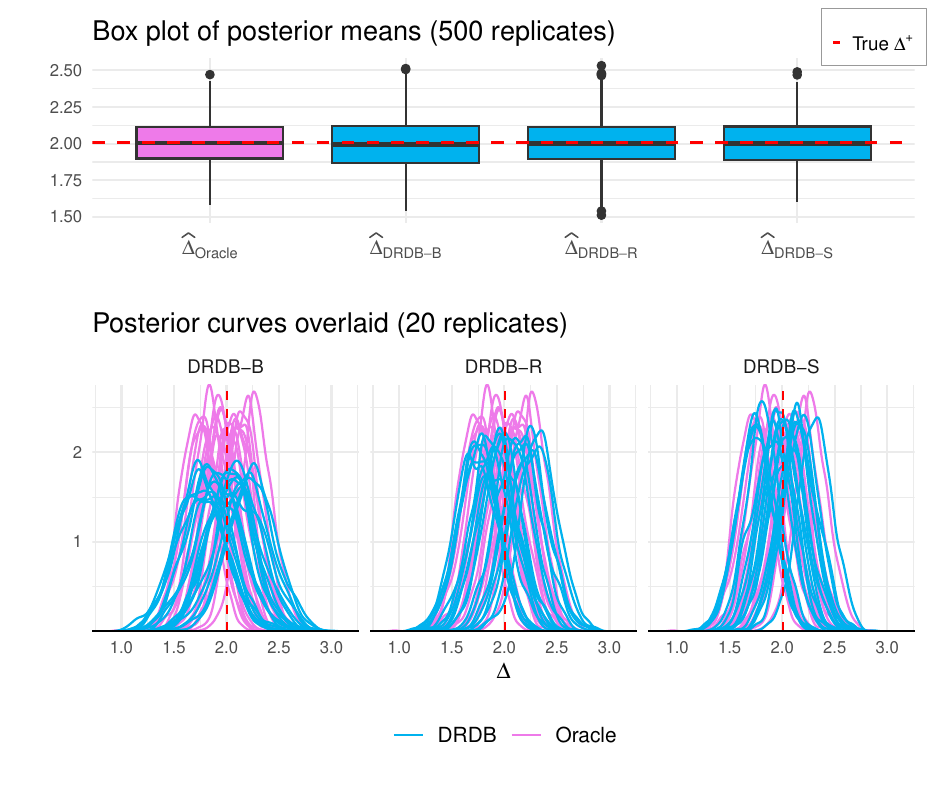}
    \par\vspace{2pt}
    (d) Setting: \(p = 50\) with \(\mathbf{s = 13}\).
  \end{minipage}
  \caption{
    Box plots of posterior means (based on 500 replications) and overlaid density curves (based on 20 iterations) for the posteriors \(\Pi_{\texttt{Oracle}}\) (pink) and \(\pATE\) (blue) of \(\rATE\). The plots show results from using three methods (\texttt{BART}, \texttt{BR}, \texttt{BS}) to obtain the nuisance posteriors. The subfigures correspond to different values of \(p\) and \(s\). Each density curve is generated using 1000 posterior samples of \(\rATE\). The red dashed vertical line indicates the true \(\tATE\) ($=2$ for all settings).
  }
  \label{fig_combined_p1050}
\end{figure}

\begin{figure}[ht!]
  \centering
  \begin{minipage}[b]{0.42\textwidth}
    \centering
    \includegraphics[width=\linewidth]{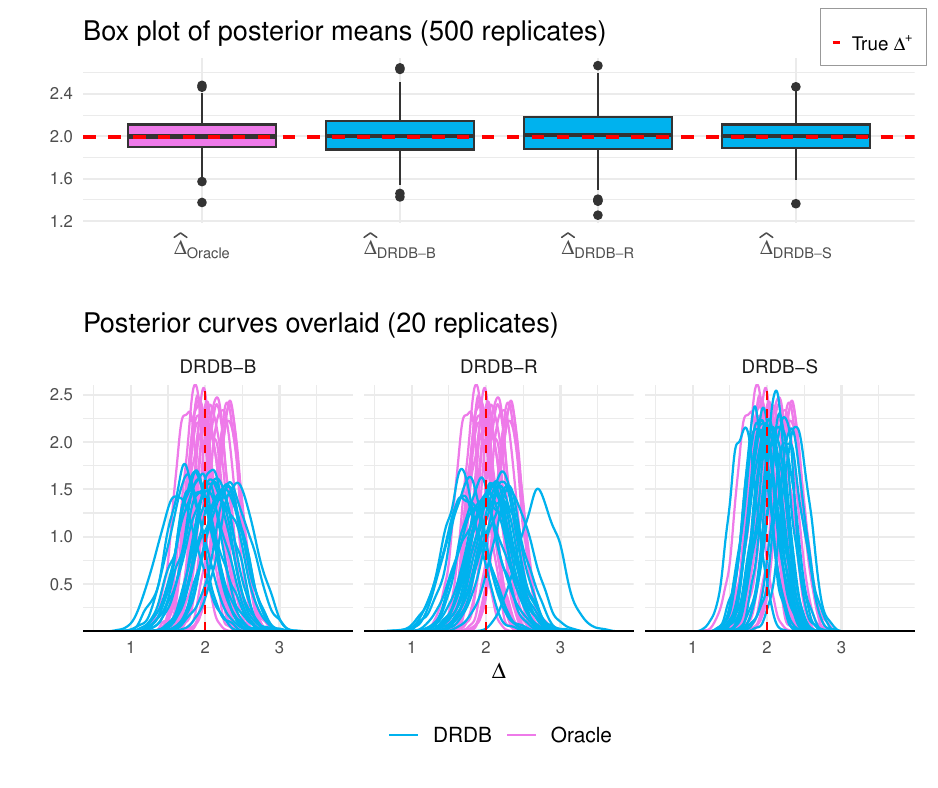}
    \par\vspace{2pt}
    (e) Setting: \(p = 200\) with \(\mathbf{s = 14}\).
  \end{minipage}
  \hfill
  \begin{minipage}[b]{0.42\textwidth}
    \centering
    \includegraphics[width=\linewidth]{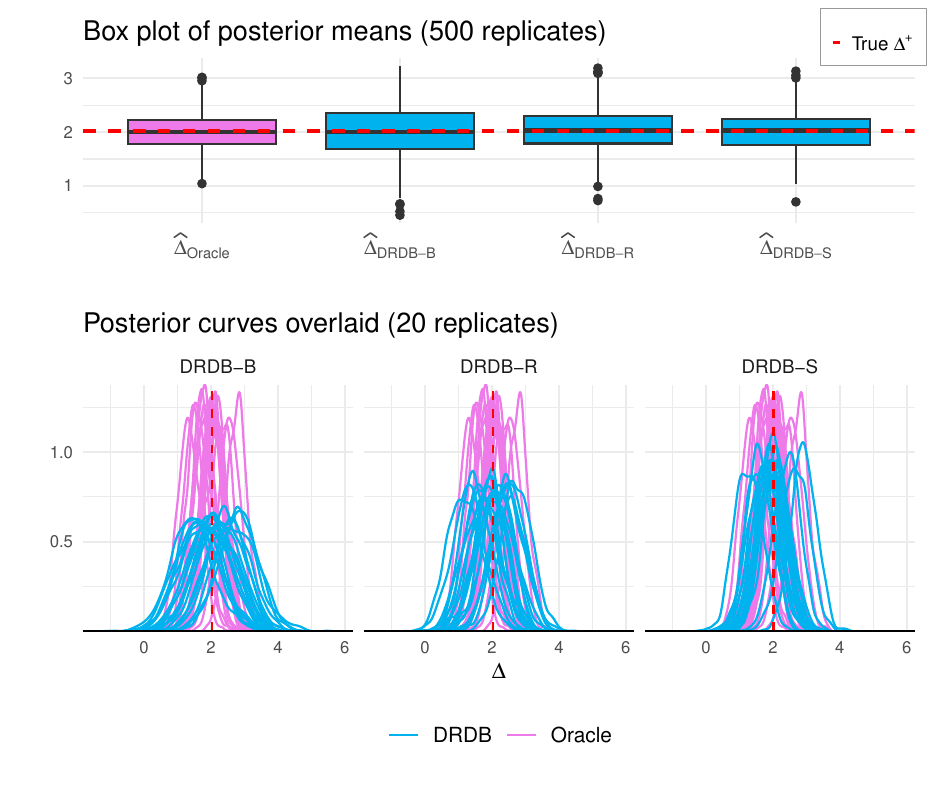}
    \par\vspace{2pt}
    (f) Setting: \(p = 200\) with \(\mathbf{s = 50}\).
  \end{minipage}
  \caption{
    Box plots of posterior means and overlaid density curves for the posteriors \(\Pi_{\texttt{Oracle}}\) and \(\pATE\) of \(\rATE\) for $p =200$ and $s= 14$ or $50$. The rest of the caption details are the same as in Figure~\ref{fig_combined_p1050}.}
  \label{fig_p200}
\end{figure}

Table~\ref{table_correctly_specified} reports the estimation and inference results for the ATE. Across all scenarios and nuisance estimation methods, DRDB performs nearly identically to the {\tt Oracle} estimator. In both low- and moderate-dimensional settings ($p = 10, 50$), and across sparse ($s = \sqrt{p}$) and moderately dense ($s = p/4$) regimes, all DRDB variants yield negligible differences from {\tt Oracle}. Even DRDB-B performs comparably well, though with slightly higher finite-sample bias and wider credible intervals, reflecting the slower convergence of nonparametric nuisance estimation. As the dimensionality increases to $p = 200$, the role of sparsity becomes more pronounced. The sparsity-adaptive DRDB-S {\it continues} to perform close to {\tt Oracle}, effectively leveraging the underlying sparse structure, whereas DRDB-R and DRDB-S exhibit increased finite-sample bias and wider CIs, with DRDB-B most affected due to slower convergence inherent to nonparametric procedures.

Table~\ref{table_correctly_specified} shows that DRDB {\it consistently} achieves coverage probabilities near the {\it nominal} 95\% level across nearly all settings with various dimensions, sparsity levels, and the nuisance estimation methods. In high-dimensional scenarios, DRDB-B tends to produce slightly conservative coverage and wider CIs, reflecting the effect of slower convergence rates and increased finite-sample bias associated with higher parameter dimensionality. Nevertheless, CI lengths {\it remain comparable} to those of the {\tt Oracle}, and DRDB-S consistently {\it maintains} coverage near the nominal level. These patterns highlight DRDB’s {\it robustness} and {\it adaptability}: when nuisance models capture the true
underlying structure, its performance remains near-optimal {\it even} in high-dimensional regimes.

Figures~\ref{fig_combined_p1050} and \ref{fig_p200} visually support these findings. Across all scenarios, DRDB posteriors exhibit approximately {\it Gaussian shapes}, always concentrated around the truth $\tATE = 2$, with posterior means (from box plots) {\it tightly} centered near the truth. In low or moderate-dimensional regimes ($p = 10$ or $p = 50$), all DRDB variants produce nearly {\it identical} posteriors that closely {\it align with} the Oracle, confirming the {\it stability} and {\it efficiency} of DRDB when nuisance parameters are accurately estimated. In the high-dimensional settings ($p = 200$), however, the differences between nuisance models become more pronounced: DRDB-S maintains the most concentrated posterior, nearly matching the Oracle, while DRDB-R and DRDB-B show wider and heavier-tailed posteriors, reflecting slower nuisance convergence and the inherent difficulty of ridge and nonparametric estimators in such settings. Yet, all density curves {\it remain} well centered around $\tATE = 2$, and the credible interval lengths are largely comparable to the Oracle’s. Overall, the plots confirm DRDB’s robust finite-sample performance, while clearly revealing how high dimensionality amplifies the effect of nuisance model choice on second-order properties of posterior concentration.

Finally, while our analysis herein focuses on correctly specified models, high-dimensional settings inherently introduce a {\it soft} form of misspecification due to finite-sample nuisance estimation bias. Even in such cases, DRDB {\it consistently} estimates the ATE across nuisance methods, demonstrating its double robustness: as long as the product of nuisance convergence rates exceeds the parametric rate, DRDB yields near-Oracle estimates with {\it accurate coverage}. Under explicit functional-form misspecification (see Section~\ref{supp_sec_sim_misspecified} in the \hyperref[sec_supplementary]{Supplementary Material}), DRDB continues to produce stable estimates across all nuisance models and maintains {\it valid coverage} with only mildly wider credible intervals. These findings support its {\it double robustness}, and highlight a distinct {\it advantage} of the {\it Bayesian} framework as well. Overall, the results confirm that DRDB offers robust, efficient, and {\it reliable inference} across diverse settings, highlighting its {\it insensitivity} to nuisance estimation.

\subsection{Real data application}
We apply the proposed DRDB approach to evaluate {\it the effect of smoking cessation on weight gain} using data from the National Health and Nutrition Examination Survey Epidemiologic Follow-up Study (NHEFS). The NHEFS is a longitudinal study initiated by the National Center for Health Statistics and the National Institute on Aging, in collaboration with other agencies of the U.S. Public Health Service \citep{Hernan2020}, and is well-studied in the causal inference literature \citep{Ertefaie2022,
zhang2023double}.
Our analysis focuses on a subset of $n = 1566$ individuals who were cigarette smokers aged 25--74 years at baseline in 1971, and had a follow-up visit in 1982. The treatment $T \in \{0,1\}$ ({\tt qsmk}) equals 1 if the individual quit smoking before the follow-up and 0 otherwise. The outcome $Y \in \bbR$ ({\tt wt82\_71}) is the weight gain (in kg), calculated as the difference between body weight at the follow-up and baseline weight. We use the same nine covariates as in \cite{Ertefaie2022}: {\tt sex}, {\tt race}, {\tt age}, {\tt education}, {\tt smokeintensity}, {\tt smokeyrs}, {\tt exercise}, {\tt active}, and {\tt wt71}. A detailed data description is available at \url{https://miguelhernan.org/whatifbook}.

Our goal is to estimate the ATE of the treatment on body weight gain. As a baseline for comparison, we include the naive estimator ({\tt Naive}), defined as the difference in empirical mean outcomes between the treatment groups \citep{tsiatis2007semiparametric}. Using the Bayesian nuisance estimation methods given in Section~\ref{sec_sim_correctly_specified}, we compute the DRDB posterior of $\rATE$ via Algorithm~\ref{algo} with $\bbK = 5$. For each posterior, we draw 1000 samples to compute Monte Carlo approximations of the posterior mean (as the point estimate) and the 2.5\% and 97.5\% quantiles to construct 95\% credible intervals (CIs).

Table~\ref{table_real_data} reports the ATE estimates, the 95\% CIs, and the respective CI lengths (CI-Leng). All methods yield a positive ATE, indicating that quitting smoking is associated with weight gain. A notable disparity exists between the \texttt{naive} estimate of 2.541 and the DRDB-based estimates, which are consistently higher and clustered in the 3.31 to 3.54 range. This gap strongly suggests the presence of confounding (given the observational setting), which the \texttt{naive} estimator fails to address. Further, our estimates {\it align} closely with those of \cite{Ertefaie2022}, who reported ATE estimates between 3.20 and 3.42 using IPW estimators on the same data, providing strong external validation for our method. Moreover, DRDB yields substantially narrower CIs, with lengths of approximately 1.55, compared with the \texttt{naive} estimator (length $= 1.91$) and the IPW-based CIs in \citet{Ertefaie2022} (length $\approx 2.34$ -- $2.41$), demonstrating {\it improved efficiency}. This consistency in results across various nuisance estimation methods highlights the {\it robustness} of DRDB to the choice of nuisance model. Overall, these findings indicate the presence of notable confounding via $\bX$, and support a causal effect of smoking cessation on weight gain after adjusting for confounders.
\begin{table}
\caption{Estimation and inference results for ATE in NHEFS data based on the naive and DRDB approaches. The rest of the caption details remain the same as in Table~\ref{table_correctly_specified}.}
\label{table_real_data}
\begin{tabular*}{\columnwidth}{@{\extracolsep\fill}lccc@{\extracolsep\fill}}
\toprule
{\bf Method} & {\bf Estimator} & {\bf 95\% Cred. Interval} & {\bf CI-Leng} \\
\midrule
{\tt Naive}   & 2.541 & (1.585, 3.496) & 1.911 \\
DRDB-S  & 3.368 & (2.560, 4.116) & 1.556 \\
DRDB-R  & 3.312 & (2.502, 4.124) & 1.622 \\
DRDB-B  & 3.544 & (2.699, 4.458) & 1.759 \\
\bottomrule
\end{tabular*}
\end{table}

\section{Concluding discussion}\label{sec_conclusion}

We proposed a DRDB procedure for estimating {\it causal functionals} within a Bayesian framework, focusing on the ATE and the single-arm mean $\mu^\d(1)$. DRDB builds on {\it two key ideas}: (i) explicit separation of the nuisance estimation from inference on the target,
through a {\it Bayesian debiasing} mechanism (coupled with {\it data splitting}); and (ii) a {\it targeted modeling} strategy, along with {\it hierarchical learning} and use of {\it posterior aggregation}. The debiasing step corrects nuisance-induced bias via a density-ratio–based retargeting, while targeted modeling ensures efficient use of relevant summaries. All these aspects put together make DRDB both theoretically robust/efficient and computationally scalable. Theorems~\ref{thm_BvM_mu1}--\ref{thm_BvM_ATE} establish BvM theorems for DRDB with matching frequentist guarantees when both nuisances are correctly specified, as well as a Bayesian analogue of the frequentist double robustness, while operating within the posterior framework. Computationally, DRDB is flexible, allowing for {\it off-the-shelf} nuisance model choices, admits a simple final posterior, and requires only a {\it single} posterior draw from each nuisance per CF fold, ensuring its scalability in high dimensions. Finally, our framework also readily extends to a wide class of policy-relevant causal estimands, as shown in Section \ref{sec_DRDB_conditional} of the \hyperref[sec_supplementary]{Supplementary Material}, with similar theoretical guarantees expected to hold therein as well. Overall, DRDB provides a {\it novel} Bayesian perspective on debiasing and double robustness, offering both theoretical guarantees and practical feasibility, and opens the door to broader applications of Bayesian targeted semiparametric learning.

\phantomsection
\addcontentsline{toc}{section}{Acknowledgments and funding}
\section*{Acknowledgments and funding}
 The authors gratefully acknowledge funding support from the NSF grants: NSF-DMS 2113768 (Abhishek Chakrabortty) and  NSF-DMS 2210689 (Anirban Bhattacharya) towards partial support of this research.

\phantomsection
\addcontentsline{toc}{section}{Data availability}
\section*{Data availability}
The dataset used in this paper is publicly available. The code used to generate the simulation results is available at \url{https://github.com/gozdesert/DRDB}.

\phantomsection
\addcontentsline{toc}{section}{Supplementary material}
\section*{Supplementary material}\label{sec_supplementary}
The Supplementary Material contains: (i) an extension of our methodology to other causal estimands, (ii) additional numerical results, and (iii) all technical details, including the complete proofs of the main results, that could not be included in the main paper.

\phantomsection
\addcontentsline{toc}{section}{References}
\bibliography{references}

\clearpage

\section*{\centering \large Supplement to ``Bayesian semiparametric causal inference: Targeted doubly robust estimation of treatment effects''}\label{supplementary_material}

\renewcommand{\thefootnote}{\arabic{footnote}}

{\centering
\large Gözde Sert, Abhishek Chakrabortty, and Anirban Bhattacharya \par }
\vspace{-0.01cm}
{\centering {\large \it Department of Statistics, Texas A\&M University} \footnote{{\it Email addresses:} \href{ mailto:gozdesert@stat.tamu.edu}{gozdesert@stat.tamu.edu} (Gözde Sert), \href{mailto:abhishek@stat.tamu.edu}{abhishek@stat.tamu.edu} (Abhishek Chakrabortty),
\href{mailto:anirbanb@stat.tamu.edu}{anirbanb@stat.tamu.edu} (Anirban Bhattacharya; corresponding author).} \par}

\renewcommand*{\theHsection}{\thesection}
\renewcommand*{\theHsubsection}{\thesubsection}
\setcounter{section}{0}
\setcounter{equation}{0}
\setcounter{subsection}{0}

\renewcommand \thesection{S\arabic{section}}
\renewcommand \thesubsection{\thesection.\arabic{subsection}}
\renewcommand \thesubsubsection{\thesubsection.\arabic{subsubsection}}
\renewcommand\thetable{S.\arabic{table}}
\renewcommand \thefigure{S.\arabic{figure}}
\renewcommand{\theequation}{S.\arabic{equation}}

\setcounter{table}{0}
\setcounter{figure}{0}
\setcounter{footnote}{0}

\setcounter{theorem}{0}
\setcounter{lemma}{0}
\setcounter{corollary}{0}

\renewcommand{\thetheorem}{S\arabic{theorem}}
\renewcommand{\thelemma}{S\arabic{lemma}}
\renewcommand{\thecorollary}{S\arabic{corollary}}

\par\medskip

This supplement (Sections \ref{sec_DRDB_conditional}--\ref{supp_sec_proof_of_preliminary}) provides methodological extensions, additional numerical analyses, and technical materials, including all proofs, that could not be accommodated in the main paper. {\bf (i)} Section~\ref{sec_DRDB_conditional} presents a detailed analysis of the {\it extension} of the DRDB procedure to a broader class of policy-relevant causal estimands. {\bf (ii)} Section~\ref{supp_sec_sim_misspecified} presents a supplementary table and a detailed discussion of simulation results for the {\it misspecified setting} introduced in Section~\ref{sec_numerics}. {\bf (iii)} Section~\ref{supp_sec_preliminary} contains preliminary results and intermediate lemmas used in proving the main theorems (Theorems~\ref{thm_BvM_mu1} and \ref{thm_BvM_ATE}). {\bf (iv)} Section~\ref{sup_sec_proofs_of_main_results} provides the full proofs of the main results of the paper. {\bf (v)} Finally, Section~\ref{supp_sec_proof_of_preliminary} includes proofs of the preliminary results and lemmas introduced in Section~\ref{supp_sec_preliminary}.

\section{Extension of DRDB to other causal estimands}\label{sec_DRDB_conditional}

DRDB naturally extends from the ATE to a broad class of causal estimands of the form:
\begin{align*}
    \tATE_h ~:=~ \mathbb{E}[Y(1) - Y(0)| h(\bX, T) = c],
\end{align*}
where $h(\bX, T)$ is a {\it known} measurable function of the covariates and the treatment, and $c$ specifies the condition of interest. Notably, for $\tATE_h$ to be well-defined, a standard overlap condition is required, ensuring the target set $\{(\bX, T): h(\bX, T) = c\}$ has positive probability.

Many classical causal estimands arise as special cases of this general formulation. For example, this includes: the {\it average treatment effect on the treated} (ATT), $\bbE[Y(1) - Y(0)| T = 1]$, and that on the control (ATC), $\bbE[Y(1) - Y(0)| T = 0]$, based on $h(\bX, T) = T$ and $c = t \in \{1,0\}$; as well as {\it subgroup-specific effects}, such as $\bbE[Y(1) - Y(0) | \bX_{[j]} > \delta]$ (subgroups under a specific
covariate threshold, e.g., `Age $>$ 60'), using $h(\bX, T) = 1(\bX_{[j]} > \delta)$ for a given $\delta$ and $j$, and $c = 1$. In general, $\tATE_h$ unifies many policy-relevant subgroup effects \citep{hirano2003, tao2019}.

To explain the DRDB extension, we fix an arm $t \in \{0, 1\}$ and consider $\mu_h^\d(t) := \mathbb{E}[Y(t) | h(\bX, T) = c]$. Under the NUC assumption, we have:
\begin{align*}
    \mu_h^\d(t) ~=~ \mathbb{E}_{\bX}[m_t^\d(\bX) w_h^\d(\mathbf{X})], ~~\mbox{where}~~ w_h^\d(\mathbf{X}) := \bbP\{h(\bX, T) = c \,|\, \bX\} /\bbP\{h(\bX, T) = c\}
\end{align*}
is the {\it adjustment} needed to {\it target} the subpopulation defined by $h, c$. DRDB for $\mu_h^\d(t)$ then parallels that in Section~\ref{sec_DRDB_for_mu1}, with two key adaptations: (i) the bias is defined by the weighted functional, and (ii) the retargeting step is adapted to the subgroup defined by $h, c$.

\paragraph{Bias decomposition and retargeting.} Following the notation of Section~\ref{sec_DRDB_for_mu1}, for a given nuisance draw $\underbar{\it m}_t \sim \Pi_{m_t}(\cdot; S^\-)$, we decompose $\mu_h^\d(t)$ as:
\begin{align*}
    \mu_h^\d(t) ~=~ b_t^\d(\underbar{\it m}_t, w_h^\d) ~+~ \bbE_{\bX \sim S}\{\underbar{\it m}_t(\bX) w_h^\d(\bX)\mid \underbar{\it m}_t\}, \quad \mbox{where:}
\end{align*}
$b_t^\d \equiv b_t^\d(\underbar{\it m}_t, w_h^\d) := \bbE_{\bX \sim S}[\{m_t^\d(\bX) - \underbar{\it m}_t(\bX)\} w_h^\d(\bX)| \underbar{\it m}_t]$
is the nuisance estimation bias. This equality follows from the independence of $S$ and $S^\-$. Under the NUC condition and the definition of $w_h^\d(\bX)$ as above, we further obtain that:
\begin{align*}
    b_t^\d = \bbE_{\bX \sim S}[w_h^\d(\bX)\bbE\{Y(t) -\underbar{\it m}_t(\bX) | \bX, T \!=\! t\}] = \bbE_{\bX \sim f_A}[\bbE\{Y(t) -\underbar{\it m}_t(\bX)| \bX, T \!=\! t\}],
\end{align*}
where $A := \{ h(\bX, T) = c\}$ is the {\it subgroup of interest} with covariate distribution $f_A(\bX) \equiv f(\bX|A)$. This shows that {\it correct} estimation of $b_t^\d$ requires access to $(Y(t),\bX)$ for all units in $S$, but $Y(t)$ is observed {\it only} for the subgroup $S_t \subset S$.

To address this distributional mismatch, DRDB introduces a {\it density ratio} adjustment as follows. Let $r_h^\d(\bX):= f_A(\bX)/f(\bX|T= t)$ be the unknown density ratio, which reweights the observed treated (or control) sample to match the covariate distribution $f_A$. Then, the bias admits the {\it estimable representation:} $b_t^\d = \bbE_{(Y, \bX)|T=t}[r_h^\d(\bX)\{Y - \underbar{\it m}_t(\bX)\}|T=t]$.

To estimate $r_h^\d \equiv r_h^\d(\cdot)$, notice that Bayes' theorem yields:
\begin{align*}
    r_h^\d(\bX) ~=~ \{\bbP(A\mid \bX)\bbP(T = t)\}/\{\bbP(A)\bbP(T = t\mid\bX)\}.
\end{align*}
This reveals that estimating $r_h^\d$ requires {\it two} flexible binary regression models -- for the subgroup probability $\bbP(A|\bX)$ and the treatment assignment probability $\bbP(T = t|\bX)$, avoiding explicit density estimation, along with simple point estimators for $\bbP(T = t)$ and $\bbP(A)$, similar to the discussion in Remark \ref{remark_post_for_r_and_PS}. It is also worth noting that the original ATE corresponds to the special/trivial case: $\bbP(A|\bX) = 1$ and $\bbP(A) = 1$, which simplifies the density ratio to the form \eqref{eqn_r1_as_PS} in Remark \ref{remark_post_for_r_and_PS}.

Given {\it one} draw $\underbar{\it r}_h$ from $\Pi_{r_h}(\cdot; S^\-)$, the bias $b_t^\d$ can be modeled using the observables $\underbar{\it r}_h(\bX)\{Y - \underbar{\it m}_t(\bX)\} \in S_t$ via the targeted modeling strategy. Once the bias is estimated, DRDB applies hierarchical learning to model $\mu_h^\d(t) - b_t^\d$ using the weighted observables: $\{\underbar{\it w}_h(\bX)\underbar{\it m}_t(\bX)\}_{\bX \in S}$, where $\underbar{\it w}_h(\cdot)$ denotes {\it a sample} from the posterior $\Pi_{w_h}$. Analogous to the estimation of $r_h^\d$, posterior samples of $w_h \equiv w_h(\cdot)$ are obtained by fitting a flexible binary regression for $\bbP(A|\bX)$ on $S^\-$, along with a simple point estimator for $\bbP(A)$, similar to the procedure described in Remark~\ref{remark_post_for_r_and_PS}. The details for both steps are parallel to those in Section~\ref{sec_DRDB_for_mu1} and are omitted for brevity. Finally, the present discussion addresses the one-arm case. The two-arm version requires a similar appropriate modification, in the above spirit, to the method of Section~\ref{sec_DRDB_extended_ATE} via analogous {\it joint} debiasing steps.

Hence, DRDB offers a unified and principled framework for valid Bayesian inference across a broad class of treatment effects. The essence of this seamless extension is in expressing the target as a weighted functional and appropriately extending the original debiasing and retargeting steps.

\section{\large Additional simulation results: Misspecified nuisance models}\label{supp_sec_sim_misspecified}
To illustrate DRDB performance under model misspecification, we adapt the simulation setup from Section~\ref{sec_sim_correctly_specified}. The true regression function is now {\it quadratic:} \(m_1^\d(\bX) = 5 + 2\bX'\beta_1 + (\bX^2)'\beta_{12}\), where \(\bX^2\) is the coordinate--wise square of \(\bX\). We choose \(\beta_{12}\) to satisfy \(\Var(2\bX'\beta_1)/\Var\{(\bX^2)'\beta_{12}\} = 3\), which balances the contributions from the linear and quadratic terms. All other aspects of the data-generating mechanism and the nuisance functions remain the same as in Section~\ref{sec_sim_correctly_specified}.

For parametric {\it nuisance model estimators}, in addition to the Gaussian linear regression employed in Section~\ref{sec_sim_correctly_specified}, we also consider the following Gaussian {\it quadratic regression} model: for $i = 1, \dots, n$, $Y_i| \bX_i, T_i = 1, \gamma_1, \theta_1, \theta_2,\tau \iid \calN(\gamma_1 + \bX_i'\theta_1 + (\bX_i^2)'\theta_2, \tau^2)$ where the inclusion of $(\bX_i^2)'\theta_2$ allows the model to capture the nonlinear effects. This quadratic model represents the {\it correctly specified} nuisance setting, since the true regression function $m_1^\d(\cdot)$ is quadratic.

As in Section~\ref{sec_sim_correctly_specified}, we fit this model using two Bayesian approaches. The Bayesian ridge quadratic regression ({\tt BR-q}) extends the Bayesian ridge regression ({\tt BR}) method to include quadratic terms, with Gaussian priors on $(\gamma_1, \theta_1, \theta_2)$ and an improper prior on $\tau^2$, where the ridge parameter $\lambda$ is estimated using an empirical Bayes approach, with the point estimate $\widehat{\lambda}$ obtained via the \texttt{glmnet} R package. Similarly, the Bayesian sparse quadratic regression ({\tt BS-q}) extends the sparse Bayesian linear regression ({\tt BS}) method by including both linear and quadratic covariates $(\bX, \bX^2)$, while employing the same nonlocal prior structure implemented via the {\tt mombf} R package.

For comparison, the original linear models {\tt BS} and {\tt BR}, which omit the quadratic terms, represent {\it misspecified} nuisance settings, as the posterior $\Pi_{m_1}$ now contracts to a limiting function $m^*_1$, which may differ from the true nonlinear regression function $m_1^\d$. In contrast, the quadratic ({\tt BS-q}, {\tt BR-q}) and nonparametric ({\tt BART}) models correctly specify $m_1^\d$, and therefore provide consistent
nuisance estimates. To distinguish between the resulting DRDB approaches, we use the following {\it notation:} DRDB-S and DRDB-R correspond to DRDB with misspecified linear nuisance models ({\tt BS} and {\tt BR}); DRDB-Sq and DRDB-Rq correspond to DRDB with correctly specified quadratic nuisance models ({\tt BS-q} and {\tt BR-q}); and DRDB-B corresponds to DRDB with the nonparametric {\tt BART} model. This setup allows a clear comparison of DRDB's performance under misspecified (DRDB-S, DRDB-R) versus correctly specified (DRDB-Sq, DRDB-Rq, DRDB-B) nuisance models.

Table~\ref{table_misspecified} reports the results for the data-generating mechanisms presented in Section~\ref{sec_sim_correctly_specified}. For DRDB-Sq, DRDB-Rq, and DRDB-B, where the nuisance model for $m_1$ is correctly specified (quadratic or nonparametric), the results closely match those in Section~\ref{sec_sim_correctly_specified}, with low bias, low MSE, coverage probabilities near the nominal 95\% level, and near-optimal credible interval (CI) lengths across all settings, supporting the theoretical properties of DRDB in Theorem~\ref{thm_BvM_ATE}.

\begin{table}[!ht]
\caption{Estimation and inference results for the ATE based on DRDB under the misspecified nuisance model setting given in Section~\ref{supp_sec_sim_misspecified}, where
$n = 1000$, $p \in \{10, 50, 200\}$ with $s \approx \sqrt{p}$ or $s \approx p/4$ and $\bbK = 5$. For $p = 10$, we set $s \approx \sqrt{p}$ or $s = p$. Each variant of DRDB are denoted by DRDB-M, where ``M'' indicates the specific nuisance method and fitted model: S = {\tt BS}, Sq = {\tt BS} with a quadratic model, R = {\tt BR}, Rq = {\tt BR} with a quadratic model, and B = {\tt BART}. \label{table_misspecified}}
\begin{tabular*}{\columnwidth}{@{\extracolsep\fill}lcccccccc@{\extracolsep\fill}}
\toprule
{\bf Method} & {\bf Bias} & {\bf MSE}
& {\bf Cov} & {\bf CI-Len} & {\bf Bias} & {\bf MSE}
& {\bf Cov} & {\bf CI-Len} \\
\midrule
\multicolumn{1}{@{}l@{}}{$p = 10$} & \multicolumn{4}{@{}c@{}}{$s = 3$} & \multicolumn{4}{@{}c@{}}{$s = 10$} \\
\cline{1-1}\cline{2-5}\cline{6-9}
Oracle   & 0.000 & 0.009 & 0.948 & 0.371   & 0.005 & 0.023 & 0.950 & 0.617 \\
DRDB-S   & 0.006 & 0.011 & 0.972 & 0.429   & 0.017 & 0.034 & 0.972 & 0.723 \\
DRDB-Sq  & 0.000 & 0.009 & 0.944 & 0.380   & 0.004 & 0.025 & 0.962 & 0.642 \\
DRDB-R   & 0.006 & 0.011 & 0.964 & 0.426   & 0.019 & 0.033 & 0.960 & 0.713 \\
DRDB-Rq  & -0.001 & 0.009 & 0.952 & 0.385  & 0.003 & 0.026 & 0.942 & 0.643 \\
DRDB-B   & -0.002 & 0.010 & 0.964 & 0.434  & -0.001 & 0.033 & 0.968 & 0.789 \\
\multicolumn{1}{@{}l@{}}{$p = 50$} & \multicolumn{4}{@{}c@{}}{$s = 7$} & \multicolumn{4}{@{}c@{}}{$s = 13$} \\
\cline{1-1}\cline{2-5}\cline{6-9}%
Oracle   & 0.003 & 0.019 & 0.940 & 0.539   & -0.004 & 0.033 & 0.952 & 0.723 \\
DRDB-S   & 0.017 & 0.026 & 0.940 & 0.628   & 0.018 & 0.046 & 0.952 & 0.860 \\
DRDB-Sq  & 0.003 & 0.021 & 0.938 & 0.559   & -0.003 & 0.035 & 0.960 & 0.760 \\
DRDB-R   & 0.015 & 0.030 & 0.964 & 0.686   & 0.015 & 0.048 & 0.954 & 0.917 \\
DRDB-Rq  & 0.001 & 0.024 & 0.964 & 0.657   & -0.001 & 0.043 & 0.960 & 0.877 \\
DRDB-B   & 0.001 & 0.027 & 0.962 & 0.693   & -0.005 & 0.051 & 0.962 & 0.997 \\
\multicolumn{1}{@{}l@{}}{$p = 200$} & \multicolumn{4}{@{}c@{}}{$s = 14$} & \multicolumn{4}{@{}c@{}}{$s = 50$} \\
\cline{1-1}\cline{2-5}\cline{6-9}%
Oracle   & 0.005 & 0.034 & 0.958 & 0.731   & 0.018 & 0.131 & 0.954 & 1.383 \\
DRDB-S   & 0.010 & 0.046 & 0.970 & 0.873   & 0.049 & 0.227 & 0.976 & 2.187 \\
DRDB-Sq  & 0.004 & 0.037 & 0.968 & 0.771   & 0.013 & 0.168 & 0.974 & 1.848 \\
DRDB-R   & 0.050 & 0.064 & 0.982 & 1.215   & 0.089 & 0.248 & 0.986 & 2.308 \\
DRDB-Rq  & 0.121 & 0.109 & 0.970 & 1.350   & 0.180 & 0.348 & 0.972 & 2.544 \\
DRDB-B   & 0.014 & 0.051 & 0.982 & 1.057   & 0.062 & 0.309 & 0.978 & 2.634 \\
\bottomrule
\end{tabular*}
\end{table}

Examining the misspecified cases, DRDB-S and DRDB-R, which use linear models for a truly quadratic $m_1$, estimation accuracy remains reasonable in low and moderate dimensions ($p = 10$ and $p = 50$). As dimension and sparsity increase ($p = 200$), bias and MSE increase modestly compared to the Oracle and correctly specified variants (DRDB-Sq, DRDB-Rq, and DRDB-B), reflecting the effects of model misspecification and finite-sample error, consistent with theoretical expectations. {\it Despite misspecification}, DRDB-S and DRDB-R maintain {\it valid inference}, with coverage probabilities consistently at or above the nominal 95\% level across all settings. However, the CIs are consistently wider for DRDB-S and DRDB-R, compared to their correctly specified counterparts, particularly in high-dimensional settings, reflecting increased finite-sample nuisance estimation error and the effects of model misspecification. Overall, these results provide empirical support for the double robustness property of DRDB: when only one model is correctly specified, DRDB {\it still} yields reliable {\it consistent estimation}, and also {\it valid inference}, albeit with wider CIs.

\section{Preliminary results}\label{supp_sec_preliminary}
{\bf Notational conventions.} Throughout this supplement, we use the following notation. For a sequence $b_n > 0$ and a sequence of random variables $W_n$, we write $W_n = \op(b_n)$ if and only if (iff) $|W_n|/b_n \cvP 0$ as $n \to \infty$. In particular, if $W_n \cvP 0$, we denote this as $W_n = \op(1)$. Similarly, $W_n = O_{\bbP}(b_n)$ iff for every $\varepsilon > 0$, there exist constants $B_\varepsilon > 0$ and $n_\varepsilon$ such that $\bbP(|W_n| \leq B_{\varepsilon} \hspace{0.5mm} b_n) > 1 - \varepsilon$ for all $n \geq n_{\varepsilon}$. Moreover, $W_n = \op(1)$ iff there exists a sequence $b_n \to 0$, $W_n = \Op(b_n)$. Finally, we follow the standard empirical process notation \citep{van2000asymptotic}: For any function $f(\cdot) \in \bbL_2(\bW)$ of the random variable $\bW$, we define the empirical mean operator on the dataset $\calD$ with index set $\calI$ of size $N = |\calI|$ as $\bbP_N\{f(\bW)\} := N^{-1} \sum_{i \in \calI} f(\bW_i)$.

\begin{lemma}\label{lemma_TV_distance_btw_two_posteriors} For $i = 1,2$,
    let $f_i(\cdot)$ denoted the pdfs of the corresponding distribution $\calP_i$ satisfying
    \begin{align*}
        f_i(\rATE) ~:=~ \int q_i(\rATE \mid \theta) \hspace{0.5mm}g_i(\theta) \hspace{0.5mm}\dd \theta,
    \end{align*}
where $g_i(\cdot)$ is the pdf of the distribution $\calP_\theta^{(i)}$ of $\theta$ and $q_i(\cdot \mid \theta)$ is the pdf of the conditional distribution $\calP_{\rATE \mid \theta}^{(i)}$ of $\rATE$ given $\theta$ which belongs to a location family generated by a density $\psi_i(\cdot)$ with $\theta + C$ being the location parameter for some $C \in \bbR$. Then,
    \begin{align*}
       d_{\TV}(\calP_1, \calP_2) ~\leq ~ d_{\TV}(\calP_{\rATE \mid \theta}^{(1)}, \calP_{\rATE \mid \theta}^{(2)}) \, + \, d_{\TV}(\calP_\theta^{(1)}, \calP_\theta^{(2)})
    \end{align*}
\end{lemma}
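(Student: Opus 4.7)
The plan is to use a standard triangle-inequality mixture argument, exploiting the location-family structure only at the final step so that the conditional TV term cleanly decouples from the mixing distribution.

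First, I would represent the TV distance via the $L_1$ identity $d_{\TV}(\calP_1, \calP_2) = \frac{1}{2}\int |f_1(\Delta) - f_2(\Delta)| \, \dd\Delta$. I would then introduce the hybrid density $\tilde f(\Delta) := \int q_1(\Delta \mid \theta)\, g_2(\theta)\, \dd\theta$ and split via the triangle inequality:
\begin{equation*}
|f_1(\Delta) - f_2(\Delta)| \;\leq\; \Bigl|\int q_1(\Delta \mid \theta)\, [g_1(\theta) - g_2(\theta)]\, \dd\theta\Bigr| + \Bigl|\int [q_1(\Delta \mid \theta) - q_2(\Delta \mid \theta)]\, g_2(\theta)\, \dd\theta\Bigr|.
\end{equation*}

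Next, I would integrate both pieces with respect to $\Delta$, apply Fubini, and use $\int q_1(\Delta \mid \theta)\, \dd\Delta = 1$ on the first piece. This gives the first term $\int |g_1(\theta) - g_2(\theta)|\, \dd\theta = 2\, d_{\TV}(\calP_\theta^{(1)}, \calP_\theta^{(2)})$. For the second piece, Fubini plus moving the absolute value inside yields $\int g_2(\theta)\, \bigl(\int |q_1(\Delta\mid\theta) - q_2(\Delta\mid\theta)|\,\dd\Delta\bigr)\, \dd\theta$.

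The key step, and the only place the location-family hypothesis is used, is to observe that $q_i(\Delta\mid\theta) = \psi_i(\Delta - \theta - C)$, so by translation invariance of Lebesgue measure, the inner integral equals $\int |\psi_1(u) - \psi_2(u)|\,\dd u = 2\, d_{\TV}(\calP_{\Delta\mid\theta}^{(1)}, \calP_{\Delta\mid\theta}^{(2)})$, which is independent of $\theta$. Pulling this constant out and using $\int g_2(\theta)\, \dd\theta = 1$, collecting all terms, and dividing by $2$ yields the stated bound.

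I do not expect any real obstacle here; the only subtle point is recognizing that the RHS term $d_{\TV}(\calP_{\Delta\mid\theta}^{(1)}, \calP_{\Delta\mid\theta}^{(2)})$ is well-defined (i.e., unambiguous in $\theta$) precisely \emph{because} both conditionals share the same location parameter $\theta + C$, so that the common shift cancels. Without this shared-location assumption one would only obtain the weaker bound with $\sup_\theta d_{\TV}(\calP_{\Delta\mid\theta}^{(1)}, \calP_{\Delta\mid\theta}^{(2)})$ or an average against $g_2$; the lemma is stated at exactly the sharpness the location-family hypothesis permits.
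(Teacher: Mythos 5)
Your proof is correct and follows essentially the same route as the paper's: add and subtract the hybrid mixture $\int q_1(\Delta\mid\theta)g_2(\theta)\,\dd\theta$, apply the triangle inequality and Fubini, and use the shared location parameter $\theta + C$ to make the conditional TV term constant in $\theta$. Your closing remark about why the location-family hypothesis is exactly what makes the bound well-posed (versus needing a supremum over $\theta$) matches the paper's own concluding observation.
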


\begin{lemma}\label{lemma_TV_convolution}
Let $P$, $Q$ be probability distributions (pds) with densities $p(\cdot)$, $q(\cdot)$ where $
p(x)=(p_1 * p_2)(x)$ and $q(x)= (q_1 * q_2)(x)$, for some pdfs $p_i(\cdot), q_i(\cdot)$ of pds $P_i, Q_i$ for $i = 1, 2$, and $*$ denotes the convolution operator. Then, $
d_{\TV}( P, Q) ~\leq~ d_{\TV}( P_1, Q_1) + d_{\TV}( P_2, Q_2)$.
\end{lemma}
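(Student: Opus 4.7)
The plan is to work directly from the $L_1$ characterization $d_{\TV}(P,Q) = \tfrac{1}{2}\int |p(x) - q(x)|\,\dd x$ and exploit the bilinearity of the convolution operator. The key algebraic identity I would use to split the difference $p - q$ is
\begin{align*}
p_1 * p_2 - q_1 * q_2 ~=~ (p_1 - q_1) * p_2 \, + \, q_1 * (p_2 - q_2),
\end{align*}
which follows by adding and subtracting the cross term $q_1 * p_2$. Substituting into the $L_1$ integral and invoking the triangle inequality immediately yields
\begin{align*}
\int |p(x) - q(x)|\,\dd x ~\leq~ \int \bigl|\bigl((p_1 - q_1) * p_2\bigr)(x)\bigr|\,\dd x ~+~ \int \bigl|\bigl(q_1 * (p_2 - q_2)\bigr)(x)\bigr|\,\dd x.
\end{align*}

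Next, I would bound each of these two convolution integrals by the $L_1$ norm of the signed factor. Specifically, for any signed integrable function $f$ and any pdf $g \geq 0$, Fubini's theorem gives $\int |(f * g)(x)|\,\dd x \leq \int\!\!\int |f(x - y)| g(y)\,\dd y\,\dd x = \|f\|_1 \cdot \|g\|_1 = \|f\|_1$. Applying this with $(f, g) = (p_1 - q_1, p_2)$ and $(f, g) = (p_2 - q_2, q_1)$ respectively yields
\begin{align*}
\int |p(x) - q(x)|\,\dd x ~\leq~ \int |p_1(x) - q_1(x)|\,\dd x ~+~ \int |p_2(x) - q_2(x)|\,\dd x.
\end{align*}
Dividing both sides by $2$ and recognizing each term as $2\,d_{\TV}(P_i, Q_i)$ gives the claimed inequality.

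There is no real obstacle here, since the argument is a short manipulation based on the triangle inequality and Young's convolution inequality (or equivalently Fubini). The only item worth being careful about is that $p_1 - q_1$ and $p_2 - q_2$ are signed rather than nonnegative, so the convolution bound must be stated for signed $L_1$ functions (which is immediate since $|f * g|(x) \leq (|f| * g)(x)$ pointwise when $g \geq 0$). Alternatively, one could give a one-line coupling-based proof: take independent optimal TV couplings $(X_1, Y_1)$ of $(P_1, Q_1)$ and $(X_2, Y_2)$ of $(P_2, Q_2)$; then $(X_1 + X_2, Y_1 + Y_2)$ is a coupling of $(P, Q)$, and a union bound gives $\bbP(X_1 + X_2 \neq Y_1 + Y_2) \leq \bbP(X_1 \neq Y_1) + \bbP(X_2 \neq Y_2)$, which by the coupling characterization of TV yields the same inequality. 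I would present the analytic proof for self-containedness, with the coupling viewpoint mentioned as a remark.
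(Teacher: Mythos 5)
Your proof is correct and follows essentially the same route the paper takes: the paper's own proof is just a one-line appeal to the integral ($L_1$) representation of the TV distance plus Fubini's theorem, which is exactly the add-and-subtract decomposition and Young-type bound $\|f * g\|_1 \le \|f\|_1\|g\|_1$ that you spell out in full. Your version is simply a more explicit writeup of the same argument, and the coupling remark is a valid alternative but not what the paper does.
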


\begin{lemma}\label{lemma_TV_bound_theta_as_eta_minus_beta}
Let $\theta = \eta - \beta$ with $\eta \perp\!\!\!\perp \beta$. For $i=1,2$, let $\eta^{(i)} \sim \mathcal{P}_\eta^{(i)}$ and $\beta^{(i)} \sim \mathcal{P}_\beta^{(i)}$ be independent, define $\theta^{(i)} := \eta^{(i)} - \beta^{(i)}$ with law $\mathcal{P}_\theta^{(i)} = \mathcal{P}_\eta^{(i)} * \mathcal{P}_{-\beta}^{(i)}$. Suppose the conditional laws $\mathcal{P}_{\rATE\mid\theta}^{(i)}$ admits densities $q_i(\rATE\mid\theta)$ belonging to a location family generated by $\psi_i$, i.e.,
\(
q_i(\rATE\mid\theta) \;=\; \psi_i(\rATE - \theta - C),
\)
for a fixed $C \in \mathbb{R}$. Let $\mathcal{P}_i$ denote the distribution of $\rATE$ with density
$f_i(\rATE) = \int q_i(\rATE\mid\theta)\, g_i(\theta)\, d\theta$,
where $g_i$ is the density of $\mathcal{P}_\theta^{(i)}$ for $i = 1, 2$. Then,
\[
d_{\mathrm{TV}}(\mathcal{P}_1,\mathcal{P}_2)
~\le~
d_{\mathrm{TV}}(\mathcal{P}_\eta^{(1)},\mathcal{P}_\eta^{(2)})
~+~
d_{\mathrm{TV}}(\mathcal{P}_\beta^{(1)},\mathcal{P}_\beta^{(2)})
~+~
d_{\mathrm{TV}}(\mathcal{P}_{\rATE\mid\theta}^{(1)},\mathcal{P}_{\rATE\mid\theta}^{(2)}).
\]
\end{lemma}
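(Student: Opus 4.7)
\medskip

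\noindent \textbf{Proof plan for Lemma~\ref{lemma_TV_bound_theta_as_eta_minus_beta}.} The plan is to combine the two previously stated lemmas, Lemma~\ref{lemma_TV_distance_btw_two_posteriors} and Lemma~\ref{lemma_TV_convolution}, with the elementary fact that total variation is invariant under the measurable bijection $x \mapsto -x$. The structure is to first pass from a bound on $d_{\mathrm{TV}}(\mathcal{P}_1,\mathcal{P}_2)$ in terms of the mixing distribution on $\theta$, and then further decompose the mixing distribution through its convolution structure $\theta = \eta - \beta$ with $\eta \perp\!\!\!\perp \beta$.

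First, since the conditional densities $q_i(\rATE\mid\theta) = \psi_i(\rATE - \theta - C)$ satisfy the location-family hypothesis of Lemma~\ref{lemma_TV_distance_btw_two_posteriors}, I would apply that lemma directly to obtain
\[
d_{\mathrm{TV}}(\mathcal{P}_1,\mathcal{P}_2) \;\le\; d_{\mathrm{TV}}\bigl(\mathcal{P}_{\rATE\mid\theta}^{(1)},\mathcal{P}_{\rATE\mid\theta}^{(2)}\bigr) \;+\; d_{\mathrm{TV}}\bigl(\mathcal{P}_\theta^{(1)},\mathcal{P}_\theta^{(2)}\bigr).
\]
The first term on the right is already one of the three terms in the target bound, so the remaining task is to control $d_{\mathrm{TV}}(\mathcal{P}_\theta^{(1)},\mathcal{P}_\theta^{(2)})$ by the sum of the $\eta$- and $\beta$-total-variation distances.

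Second, I would use the assumption that $\eta^{(i)} \perp\!\!\!\perp \beta^{(i)}$ so that the law of $\theta^{(i)} = \eta^{(i)} - \beta^{(i)}$ is exactly the convolution $\mathcal{P}_\theta^{(i)} = \mathcal{P}_\eta^{(i)} * \mathcal{P}_{-\beta}^{(i)}$. Applying Lemma~\ref{lemma_TV_convolution} with $P_1 = \mathcal{P}_\eta^{(1)}$, $P_2 = \mathcal{P}_{-\beta}^{(1)}$ and $Q_1 = \mathcal{P}_\eta^{(2)}$, $Q_2 = \mathcal{P}_{-\beta}^{(2)}$ then yields
\[
d_{\mathrm{TV}}\bigl(\mathcal{P}_\theta^{(1)},\mathcal{P}_\theta^{(2)}\bigr) \;\le\; d_{\mathrm{TV}}\bigl(\mathcal{P}_\eta^{(1)},\mathcal{P}_\eta^{(2)}\bigr) \;+\; d_{\mathrm{TV}}\bigl(\mathcal{P}_{-\beta}^{(1)},\mathcal{P}_{-\beta}^{(2)}\bigr).
\]
Finally, since the map $x \mapsto -x$ is a measurable bijection with measurable inverse, it preserves the total variation distance between the corresponding pushforward measures, i.e.\ $d_{\mathrm{TV}}(\mathcal{P}_{-\beta}^{(1)},\mathcal{P}_{-\beta}^{(2)}) = d_{\mathrm{TV}}(\mathcal{P}_\beta^{(1)},\mathcal{P}_\beta^{(2)})$. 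Substituting this identity and chaining the two displayed inequalities produces the claimed three-term bound.

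I do not anticipate any serious obstacle: the proof is essentially a two-step combination of the preceding lemmas plus the trivial negation-invariance of total variation. The only small points that require care are (i) verifying that the location-family hypothesis of Lemma~\ref{lemma_TV_distance_btw_two_posteriors} is met (it is, by the given form of $q_i$), and (ii) justifying the convolution identity $\mathcal{P}_\theta^{(i)} = \mathcal{P}_\eta^{(i)} * \mathcal{P}_{-\beta}^{(i)}$, which follows immediately from the independence $\eta^{(i)} \perp\!\!\!\perp \beta^{(i)}$ and the definition of the convolution of two independent random variables.
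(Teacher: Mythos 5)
Your proposal is correct and follows essentially the same route as the paper's proof: both decompose $d_{\mathrm{TV}}(\mathcal{P}_1,\mathcal{P}_2)$ into the conditional-law term plus the mixing-distribution term, and then split the latter via subadditivity of TV under convolution together with negation/location invariance. The only cosmetic difference is that the paper re-derives the first bound inline (triangle inequality plus Fubini, then noting the conditional TV is constant in $\theta$ under the location-family assumption) rather than citing Lemma~\ref{lemma_TV_distance_btw_two_posteriors} directly as you do.
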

\begin{lemma}[Nuisance contraction rates]\label{lemma_nuisance_convergence_rates}
Under Assumption~\ref{assumption_nuisance} and the setups of Theorems \ref{thm_BvM_mu1} and \ref{thm_BvM_ATE},
\begin{align*}
   & \| \underbar{\it m}_t(\bX) - m_t^*(\bX) \|_{\bbL_2(\bbP_{\bX})} ~=~ O_{\bbP_{m_t}}(\varepsilon_{m,n}) \quad \text{and }\\
   & \|\underbar{\it r}_t(\bX) - r_t^*(\bX)\|_{\bbL_2(\bbP_{\bX})} ~=~  O_{\bbP_{r_t}}(\varepsilon_{r,n}).
\end{align*}
where for $t = 0,1$, $m^*_t(\cdot)$ and $r^*_t(\cdot)$ are non-random limiting functions around which the nuisance posteriors $\Pi_{m_t}$ and $\Pi_{r_t}$ contract, respectively.
\end{lemma}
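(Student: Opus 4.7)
The plan is to promote the conditional posterior contraction in Assumption~\ref{assumption_nuisance}(a), which is stated as a statement in $\bbP_{\calD_k^\-}$-probability about a conditional posterior mass, into a stochastic bound of order $O_\bbP$ under the joint distribution of the training data and the posterior draw. Since the argument is symmetric in $m_t$ and $r_t$, I will sketch it once with generic notation and then note that the same reasoning applies verbatim to the density-ratio posteriors.

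First I would pass from conditional to joint probabilities via the tower property. For any deterministic sequence $M_n \geq 0$,
\begin{align*}
\bbP_{m_t}\bigl\{\|\underbar{m}_t(\bX) - m_t^*(\bX)\|_{\bbL_2(\bbP_{\bX})} > M_n \varepsilon_{m,n}\bigr\} = \bbE_{\calD_k^\-}\!\left[\Pi_{m_t}^{(k)}\bigl\{\|\underbar{m}_t(\bX) - m_t^*(\bX)\|_{\bbL_2(\bbP_{\bX})} > M_n \varepsilon_{m,n} \,\big|\, \calD_k^\-\bigr\}\right].
\end{align*}
The integrand on the right is a $[0,1]$-valued random variable, and by Assumption~\ref{assumption_nuisance}(a) it converges to zero in $\bbP_{\calD_k^\-}$-probability whenever $M_n \to \infty$. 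Bounded convergence therefore upgrades this to convergence in mean, so the left-hand side vanishes along every diverging sequence $M_n$.

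Next I would convert this ``vanishing along all $M_n \to \infty$'' statement into the desired $O_\bbP$ tail bound. Writing $F_n(B) := \bbP_{m_t}\{\|\underbar{m}_t(\bX) - m_t^*(\bX)\|_{\bbL_2(\bbP_{\bX})} > B \varepsilon_{m,n}\}$, the previous step gives $F_n(M_n) \to 0$ for every $M_n \to \infty$. It suffices to exhibit, for every $\delta > 0$, constants $B_\delta$ and $n_\delta$ with $F_n(B_\delta) < \delta$ for all $n \geq n_\delta$. I would argue by contradiction: if this failed for some $\delta > 0$, then for each integer $k$ the set $\{n : F_n(k) \geq \delta\}$ would be unbounded; picking $n_1 < n_2 < \cdots$ with $F_{n_k}(k) \geq \delta$ and defining $M_n$ to equal $k$ for $n = n_k$ (and, on the remaining indices, say $M_n = \min\{k : n_k \geq n\}$, so $M_n \to \infty$ globally) yields a diverging sequence with $F_{n_k}(M_{n_k}) \geq \delta$ along a subsequence, contradicting the previous paragraph. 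This establishes $\|\underbar{m}_t(\bX) - m_t^*(\bX)\|_{\bbL_2(\bbP_{\bX})} = O_{\bbP_{m_t}}(\varepsilon_{m,n})$.

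The density-ratio bound follows by the identical argument with $(\underbar{r}_t, r_t^*, \Pi_{r_t}^{(k)}, \varepsilon_{r,n})$ replacing $(\underbar{m}_t, m_t^*, \Pi_{m_t}^{(k)}, \varepsilon_{m,n})$. The only real subtlety is the subsequence construction in the contradiction step: one must choose the interpolating values of $M_n$ on indices outside $\{n_k\}$ so that $M_n \to \infty$ holds globally, not merely along $\{n_k\}$. The remaining ingredients (the tower decomposition and bounded convergence) are standard, so the entire argument is short once this translation is in hand.
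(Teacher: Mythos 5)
Your proposal is correct and follows essentially the same route as the paper's proof: both pass from the conditional posterior mass to the joint probability via the tower property and then invoke boundedness of the $[0,1]$-valued conditional probability to upgrade convergence in probability to convergence in mean. The only difference is that you additionally spell out the standard equivalence between ``vanishing along every diverging $M_n$'' and the $O_{\bbP}$ tail-bound definition via a subsequence/contradiction argument, a step the paper treats as immediate.
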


\begin{lemma}[Convergence of posterior variance of $\Pi_{\mu_1}$.]\label{lemma_pvar_convergence} Under the assumptions and setup of Theorem~\ref{thm_BvM_mu1}, and assuming that Case~\textbf{C1} holds, the posterior variance $\hat{c}_n^2(\m, \r)$ of $\Pi_{\mu_1}$ converges in probability to the variance $c^2(m_1^\d, r_1^\d)$ of the limiting distribution at rate $1/n$:
\begin{align}
    n\big\{\hat{c}_n^2(\m, \r) - c^2(m_1^\d, r_1^\d)\big\} ~=~ \op(1).\label{eqn_pvar_convergence_for_mu1}
\end{align}
\end{lemma}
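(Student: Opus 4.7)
My plan is to expand $\hat c_n^2(\m,\r)$ using the closed forms from Propositions~\ref{prop_posterior_b1}--\ref{prop_posterior_mu1_given_b1}, and show term-by-term that it matches $c^2(m_1^\d,r_1^\d)$ up to $\op(1/n)$. By independence across folds of the draws defining $\mu_1^{\CF}$,
\[
\hat c_n^2(\m,\r) \,=\, \frac{1}{\bbK^2}\sum_{k=1}^{\bbK}\Var\!\big(\mu_1^{(k)}\,\big|\,\calD_k,\calD_k^{-}\big),
\]
and within each fold the hierarchical representation $\mu_1^{(k)} = \eta_{\m_k}^{(k)} + b_1^{(k)} + c_{S,k}\,\xi_{S,k}$, with $b_1^{(k)} \ind \xi_{S,k}$, combined with the law of total variance yields $\Var(\mu_1^{(k)}\mid\calD_k,\calD_k^{-}) = \tfrac{\nu_{1,k}}{\nu_{1,k}-2}c_{1,k}^2 + \tfrac{\nu_{S,k}}{\nu_{S,k}-2}c_{S,k}^2$. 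Rewriting $c_{1,k}^2 = \hat\sigma_{W,k}^2/n_{\bbK,1,k}$ and $c_{S,k}^2 = \hat\sigma_{\m_k}^2/n_\bbK$ as (Bessel--corrected) sample variances of $W(\bZ_i,\r_k,\m_k)$ and $\m_k(\bX_i)$, respectively, using $n/n_\bbK=\bbK$, and absorbing the corrections $(n-1)/(n-3)=1+O(1/n_\bbK)$, reduces the claim to
\[
\frac{1}{\bbK}\sum_{k=1}^{\bbK}\Big\{\tfrac{n_\bbK}{n_{\bbK,1,k}}\,\hat\sigma_{W,k}^2 + \hat\sigma_{\m_k}^2\Big\} \,\cvP\, \frac{\sigma_W^2}{p_1}+\sigma_m^2,
\]
where $\sigma_W^2 := \Var\{W(\bZ,r_1^\d,m_1^\d)\mid T=1\}$ and $\sigma_m^2 := \Var\{m_1^\d(\bX)\}$.

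Next, for each fold I would establish the in-probability limits of the two sample variances. Since $\calD_k \ind \calD_k^{-}$ by construction, conditional on the nuisance draws $(\m_k,\r_k)$ the summands inside $\hat\sigma_{W,k}^2$ are i.i.d.\ with a finite second moment by Assumption~\ref{assumption_nuisance}(b), so a conditional LLN yields $\hat\sigma_{W,k}^2 = \Var[W(\bZ,\r_k,\m_k)\mid T=1,(\m_k,\r_k)] + \op(1)$. Expanding the population conditional variance around $(m_1^\d,r_1^\d)$ and repeatedly applying Cauchy--Schwarz, together with the essential--supremum bound on $\r_k$ and the $\bbL_2$--contraction rates $\varepsilon_{m,n},\varepsilon_{r,n}\to 0$ from Lemma~\ref{lemma_nuisance_convergence_rates} (both available under Case~\textbf{C1}), then shows $\Var[W(\bZ,\r_k,\m_k)\mid T=1,(\m_k,\r_k)] = \sigma_W^2 + \op(1)$. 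An entirely analogous argument in terms of $\m_k(\bX)-m_1^\d(\bX)$ gives $\hat\sigma_{\m_k}^2 = \sigma_m^2 + \op(1)$, while $n_\bbK/n_{\bbK,1,k}\cvP 1/p_1$ by the LLN applied to the treatment indicator.

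Assembling these pieces yields $n\,\hat c_n^2(\m,\r) \cvP \sigma_W^2/p_1 + \sigma_m^2$. Under Case~\textbf{C1} the cross--covariance $\Cov\{m_1^\d(\bX),\psi(\bZ,m_1^\d,r_1^\d)\}$ vanishes, because $r_1^\d(\bX)=p_1/e^\d(\bX)$ and $\bbE[Y-m_1^\d(\bX)\mid\bX,T=1]=0$; and a direct calculation gives $\Var(\psi)=\sigma_W^2/p_1$. Hence $\sigma_W^2/p_1+\sigma_m^2 = \Var\{m_1^\d(\bX)+\psi(\bZ,m_1^\d,r_1^\d)\} = n\,c^2(m_1^\d,r_1^\d)$, which is the claim.

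The main technical obstacle is the plug-in step: controlling expressions such as $\bbE_{\bX}\{[(\r_k-r_1^\d)(\bX)(Y-m_1^\d(\bX))]^2\mid\m_k,\r_k\}$ and the analogous cross terms \emph{uniformly in the random nuisance draws}, without invoking Donsker-type conditions. This is handled by expanding the square, applying Cauchy--Schwarz, and exploiting the moment and boundedness conditions in Assumption~\ref{assumption_nuisance}(b) to translate the $\bbL_2$--contraction of the nuisance posteriors into $\op(1)$ control of the relevant second-moment plug-in errors; the same moment bounds also deliver the uniform integrability needed to upgrade the conditional LLN on $\hat\sigma_{W,k}^2$ to the desired in-probability rate.
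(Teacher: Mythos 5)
Your proposal is correct and follows essentially the same route as the paper: decompose the fold-level posterior variance into the (degrees-of-freedom--corrected) sample variance of $\m(\bX_i)$ plus the sample variance of the weighted residuals $W(\bZ_i,\r,\m)$ scaled by the treated fraction, establish convergence of each piece via a conditional second-moment/LLN argument under the $\bbL_4$ bounds of Assumption~\ref{assumption_nuisance}(b), control the plug-in error by Cauchy--Schwarz together with the $\bbL_2$ contraction rates from Lemma~\ref{lemma_nuisance_convergence_rates}, and identify the limit with $n\,c^2(m_1^\d,r_1^\d)$ using the vanishing cross-covariance under ignorability. The only differences are presentational (you invoke the law of total variance for the hierarchical $t$-representation where the paper writes the variance formula directly), so no substantive gap remains.
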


\begin{corollary}[Convergence of posterior variance of $\pATE$.]\label{cor_pvar_convergence_for_ATE} Under the assumptions and setup of Theorems~\ref{thm_BvM_mu1}--\ref{thm_BvM_ATE}, and Case~\textbf{C1}, the posterior variance $\hat{c}_n^2(\bm, \br)$ of $\pATE$ converges in probability to the variance $c^2(m^\d, r^\d)$ of the limiting distribution at rate $1/n$:
\begin{align}
    n\big\{\hat{c}_n^2(\bm, \br) - c^2(m^\d, r^\d)\big\} ~=~ \op(1).\label{eqn_pvar_convergence_for_ATE}
\end{align}
\end{corollary}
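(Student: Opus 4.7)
The plan is to adapt the one-arm argument from Lemma~\ref{lemma_pvar_convergence} to the two-arm hierarchical construction of $\pATE$, then aggregate across folds.

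Fix a single fold $k$ and work conditionally on $\calD_k^\-$. Using the hierarchical representation~\eqref{eqn_posterior_for_mu_for_S}, the factorization $\Pi_b^{(k)} = \Pi_{b_1}^{(k)} \otimes \Pi_{b_0}^{(k)}$ from~\eqref{eqn_posterior_for_bias}, and the fact that the conditional mean $\eta_S = \eta_{\underline{\varphi}} + \underline{\lambda}$ in Proposition~\ref{prop_posterior_mu1_given_b1} is \emph{linear} in the mixing variable, the law of total variance yields
\begin{align*}
\Var\bigl(\rATE^{(k)} \mid \calD_k\bigr) \;=\; \tfrac{\nu_S}{\nu_S-2}\,c_{S,k}^2 \,+\, \tfrac{\nu_1}{\nu_1-2}\,c_{1,k}^2 \,+\, \tfrac{\nu_0}{\nu_0-2}\,c_{0,k}^2,
\end{align*}
where $c_{S,k}^2$ is the Student-$t$ scale from~\eqref{eqn_conditional_posterior_mu1_given_b1} computed with $\underline{\varphi} = \bm = \m - \ubm$, and $c_{t,k}^2$ is the scale from~\eqref{eqn_marginal_posterior_b1}. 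Since $\nu/(\nu-2) = 1 + O(n_\bbK^{-1})$ and each $c^2 = O_\bbP(n^{-1})$, the Student correction contributes only $o_\bbP(n^{-1})$.

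Next, I would establish, following the template of Lemma~\ref{lemma_pvar_convergence},
\begin{align*}
n_S\,c_{S,k}^2 = \Var\{m_1^\d(\bX) - m_0^\d(\bX)\} + o_\bbP(1), \quad n_t\,c_{t,k}^2 = \Var_{T=t}\{r_t^\d(\bX)(Y - m_t^\d(\bX))\} + o_\bbP(1),
\end{align*}
for $t \in \{0,1\}$. The sample variances are expanded around the true functions using Lemma~\ref{lemma_nuisance_convergence_rates} and the moment bounds in Assumption~\ref{assumption_nuisance}(b); the only genuinely new ingredient is the bivariate decomposition $\bm(\bX) - m^\d(\bX) = \{\m - m_1^\d\}(\bX) - \{\ubm - m_0^\d\}(\bX)$ inside $c_{S,k}^2$, whose cross term is bounded via the independence $\m \!\ind\! \ubm$ (independent draws from the two arm posteriors) together with a conditional Cauchy--Schwarz argument. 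Under ignorability and Case~\textbf{C1}, $\bbE\{Y - m_t^\d \mid \bX, T=t\}=0$ and $T(1-T)=0$, which combine to give $\Var_{T=t}\{r_t^\d(Y-m_t^\d)\} = p_t^\dagger \Var\{\psi_t\}$ and $\Cov\{m^\d(\bX), \gamma(\bZ)\}=0$; hence $n\,c^2(m^\d, r^\d) = \Var\{m_1^\d - m_0^\d\} + \Var\{\psi_1\} + \Var\{\psi_0\}$. Using $n_t/n_S \to p_t^\dagger$ and $n_S = n/\bbK$, assembly of the three limits gives $n_S\Var(\rATE^{(k)} \mid \calD_k) = n\,c^2(m^\d, r^\d) + o_\bbP(1)$.

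Finally, the CMC aggregation~\eqref{eqn_CF_version_mu}, combined with the mutual independence of the fold-specific draws, gives $\hat{c}_n^2(\bm,\br) = \bbK^{-2}\sum_{k=1}^\bbK \Var(\rATE^{(k)} \mid \calD_k)$, from which the per-fold expansion immediately yields~\eqref{eqn_pvar_convergence_for_ATE}. The main obstacle will be the uniform control of the bivariate cross term
$n_S^{-1}\sum_{i\in\calI_k}\{\m(\bX_i)-m_1^\d(\bX_i)\}\{\ubm(\bX_i)-m_0^\d(\bX_i)\}$ inside $c_{S,k}^2$: conditionally on $\calD_k^\-$, its expectation over $\bX$ involves the product of two independent $\bbL_2$-errors and is thus $O_\bbP(\varepsilon_{m,n}^2)$, but showing that the conditional variance is negligible requires an additional Markov-type bound using the fourth-moment conditions in Assumption~\ref{assumption_nuisance}(b). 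This step has no analogue in the single-arm setup and is the only substantive addition to the argument of Lemma~\ref{lemma_pvar_convergence}.
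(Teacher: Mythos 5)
Your proposal is correct and follows essentially the same route as the paper's proof: a per-fold decomposition of the posterior variance into the three components $c_S^2, c_1^2, c_0^2$ with degrees-of-freedom factors tending to one, convergence of each scaled component to its population counterpart via the template of Lemma~\ref{lemma_pvar_convergence}, vanishing of the covariance terms in $c^2(m^\d,r^\d)$ through $\bbE\{\phi_t(\bD)\mid\bX\}=0$ and $T(1-T)=0$, and aggregation across folds. The only point of divergence is your treatment of the cross term in $c_{S,k}^2$: the paper sidesteps it entirely by treating $\bm = \m - \ubm$ as a single draw with $\|\bm - m^\d\|_{\bbL_2(\bbP_\bX)} = \Op(\varepsilon_{m,n})$ and $\|\bm\|_{\bbL_4(\bbP_\bX)} = \Op(1)$ (both via the triangle inequality), so the one-arm argument applies verbatim and the separate Cauchy--Schwarz/independence analysis you flag as the ``only substantive addition'' is not actually needed.
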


\section{Proofs of the main results} \label{sup_sec_proofs_of_main_results}

\phantomsection
\addcontentsline{toc}{subsection}{Proof of Propositions~\ref{prop_posterior_b1} and \ref{prop_posterior_mu1_given_b1}}

\begin{proof}[Proof of Propositions~\ref{prop_posterior_b1} and \ref{prop_posterior_mu1_given_b1}]
Both results follow directly from Bayes' theorem, applied to the Normal likelihood with an improper prior on the mean and variance parameters, which yields a $t$-distribution for the posterior of the mean. Since both propositions rely on Normal likelihoods with the same form of prior, we present a unified proof.

For notational simplicity, let $\underline{\eta}$ be a sample from the nuisance posterior $\Pi_{\eta}$ and define $W(\bZ,\underline{\eta})$ as the observable from the data $\mathcal{L}$ with $N = |\mathcal{L}|$. For Proposition~\ref{prop_posterior_b1}, we have $\underline{\eta} = (\mt, \rt)$, $W(\bZ,\underline{\eta}) = W(\bZ,\mt, \rt) = \rt(\bX)\{Y - \mt(\bX)\}$ and $\mathcal{L} = S_t$ for $t = 0, 1$ where $\bZ = (Y, \bX)$. For Proposition~\ref{prop_posterior_mu1_given_b1}, we have $\underline{\eta} = \underline{\varphi}$, $W(\bZ,\underline{\eta}) = W(\bZ,\underline{\varphi}) = \underline{\varphi}(\bX)$ and $\mathcal{L} = S$ where $\bZ = \bX$.

In both settings, the observables $W(\bZ_i, \underline{\eta})$ are modeled as $W(\bZ_i, \underline{\eta}) | \delta, \tau^2, \underline{\eta} \sim \calN(\delta, \tau^2)$ for $i = 1, \dots ,N$, with improper prior $\pi(\delta | \tau^2) \propto 1, \ \pi(\tau^2) \propto (\tau^2)^{-1}$.

By Bayes' theorem, the joint posterior for $(\delta, \tau^2)$ is given by:
\begin{align*}
\pi(\delta, \tau^2 | \mathcal{L}) & ~\propto~ \frac{1}{(\tau^2)^{N/2}}\exp\!\left[-\frac{1}{2\tau^2} \sum_{i = 1}^N \{ W(\bZ_i, \underline{\eta}) \}^2\right] (\tau^2)^{-1} \\
& ~\propto~  \frac{1}{(\tau^2)^{N/2 + 1}}\exp\!\Big(\!-\frac{\mathcal{S} + N(\delta-\overline{W})^2}{2\tau^2}\Big),
\end{align*}
where the parameters are defined as $\overline{W} := N^{-1}\sum_{i=1}^N W(\bZ_i, \underline{\eta})$ and $\mathcal{S} := \sum_{i=1}^N\{W(\bZ_i, \underline{\eta})-\overline{W}\}^2$.

By integrating out $\tau^2$, we obtain the marginal posterior of $\delta$: $\pi(\delta|\mathcal{L}) \propto (\mathcal{S} + N(\delta - \overline{W})^2)^{-N/2}$. Notably, this expression has exactly the kernel of a $t$-distribution. Therefore, $\delta | \mathcal{L} \sim t_{\nu}(\theta, c^2)$ where the degrees of freedom $\nu = N - 1$, the location parameter $\theta = \overline{W}$ and the scale parameter $c^2 = \mathcal{S}/\{N(N - 1)\}$, which establishes the desired result. For additional derivation details, we refer to Section~3.2 of \cite{gelman2014}.
\end{proof}

\phantomsection
\addcontentsline{toc}{subsection}{Proof of Theorem~\ref{thm_BvM_mu1}}
\begin{proof}[Proof of Theorem~\ref{thm_BvM_mu1}.]
We divide the proof of Theorem~\ref{thm_BvM_mu1} into two parts. First, we establish the Bernstein-von Mises (BvM) result under the correct specification of both nuisance models. Subsequently, we present a unified proof of the posterior concentration result when only one of the nuisance models is correctly specified.

{\bf Proof of the BvM result:} Let $\mu_1^{(1)}, \dots, \mu_1^{(\bbK)}$ be independent samples from the corresponding posteriors $\{\Pi_{\mu_1}^{(k)}\}_{k = 1}^\bbK$. Define the random variable $\tilde\mu_1 := \sum_{k = 1}^\bbK \mu_1^{(k)}$ and let $\widetilde \Pi_{\mu_1}$ denote its posterior. By the location-scale invariance of TV distance, proving the BvM result in Theorem~\ref{thm_BvM_mu1} is equivalent to showing that $d_\TV\!\left(\widetilde \Pi_{\mu_1}, \calN(\bbK \mu_1(m_1^\d, r_1^\d), \bbK^2 c^2(m_1^\d, r_1^\d))\right) \cvP 0$ as $n \to \infty$.

Since $\mu_1^{(1)}, \dots, \mu_1^{(\bbK)}$ are independent, by construction, $\widetilde \Pi_{\mu_1}$ is the convolution of $\Pi_{\mu_1}^{(1)} , \dots, \Pi_{\mu_1}^{(\bbK)}$. By Lemma~\ref{lemma_TV_convolution}, we have
$$
\mathbb{T}~:=~ d_\TV\left(\widetilde \Pi_{\mu_1}, \, \calN(\bbK \mu_1(m_1^\d, r_1^\d), \bbK^2 c^2(m_1^\d, r_1^\d))\right) ~\leq~ \mathbb{T}_1 + \cdots + \mathbb{T}_{\bbK},$$
where, for $k = 1, \dots, \bbK$, $\mathbb{T}_k := d_\TV\left(\Pi_{\mu_1}^{(k)}, \ \calN(\mu_1^{(k)}(m_1^\d, r_1^\d), c^{2, (k)}(m_1^\d, r_1^\d))\right)$ is the TV distance between the DRDB posterior $\Pi_{\mu_1}^{(k)}$ based on the $k$-th split $(\calD_k, \calD_k^\-)$ and the corresponding limiting distribution $\calN(\mu_1^{(k)}(m_1^\d, r_1^\d), c^{2, (k)}(m_1^\d, r_1^\d))$, with parameters analogously to those in Equation~\ref{eqn_pmean_pvar_for_mu1} in Section~\ref{sec_main_results}. Thus, the proof of Theorem~\ref{thm_BvM_mu1} (a) reduces to showing $\mathbb{T}_k \cvP 0$ for each $k \in \{1, \dots, \bbK\}$.

\underline{Roadmap of the BvM proof.} {\it The proof proceeds by replacing the conditional and marginal posteriors $\Pi_{\mu_1 \mid b_1}$ and $\Pi_{b_1}$, shown in Propositions~\ref{prop_posterior_b1} and \ref{prop_posterior_mu1_given_b1} to be $t$-distributions with Gaussian proxies $Q_{\mu_1|b_1}$ and $Q_{b_1}$ that share their location and scale parameters. This yields a Gaussian marginal posterior $Q_{\mu_1}$ for $\mu_1$. By adding and subtracting $Q_{\mu_1}$, we decompose $\mathbb{T}_1$ in \eqref{eqn_original_TV_for_mu1} into two terms $R$ and $T$ as given in \eqref{eqn_original_TV_for_mu1}. The term $T$, comparing two Gaussians, reduces to bounding differences in means and variances, which follow from Corollary~\ref{cor_pmean_convergence} and Lemma~\ref{lemma_pvar_convergence}. The main challenge lies in controlling $R$, which arises from the hierarchical structure of $\Pi_{\mu_1}$ as defined in \eqref{eqn_posterior_for_mu_for_S}, where the TV distance itself involves an embedded integral. Traditional approaches for controlling such distances usually necessitate strong, often impractical, constraints on the prior or posterior. However, this is precisely where the strengths of our debiasing mechanism and targeted modeling strategy become essential. More specifically, since the bias $b_1$ appears is a part of the location parameter in both $\Pi_{\mu_1|b_1}$ and $Q_{b_1}$, we can further bound $R$ as a sum of two TV distances between Normal and $t$-distributions with matched centers and scales, referring to \eqref{eqn_bound_for_R_for_mu1}. This allows us to exploit known TV distance bounds between Normal and  $t$-distribution pairs. The remainder of the proof follows by carefully applying Assumptions~\ref{assumptions_standard_causal_assump} and \ref{assumption_nuisance}, along with several key preliminary results from Section~\ref{supp_sec_preliminary}. Below, we present a detailed and rigorous argument for each step.}

Without loss of generality, we take $k = 1$. Set $S := \calD_1$ with $n_S := |S| = n/\bbK$ and $S^\- = \calD \setminus \calD_1$. We prove the case $k = 1$, as other cases follow by the same argument.

For notational simplicity, we drop the superscript in $ \calN(\mu_1^{(1)}(m_1^\d, r_1^\d), c^{2,{(1)}}(m_1^\d, r_1^\d))$ and $\Pi_{\mu_1}^{(1)}$ and accordingly, write $\mathbb{T}_1 = d_\TV\left(\Pi_{\mu_1}, \calN(\mu_1(m_1^\d, r_1^\d), c^{2}(m_1^\d, r_1^\d))\right).$
The first step of the proof is to replace the posteriors $\Pi_{\mu_1 \mid b_1}$ and $\Pi_{b_1}$ with Normal distributions whose location and scale parameters match those of their corresponding $t$-distribution forms. Under this substitution, the marginal posterior for $\mu_1$ becomes a Normal distribution constructed in direct analogy to Step~\eqref{eqn_integral_representation_of_marginal_post_for_mu1_with_b1m}. Specifically, define $Q_{\mu_1|b_1} := \calN(\eta_S, c_S^2)$ and $Q_{b_1} := \calN(\eta_1, c_1^2)$, where the parameters are given in \eqref{eqn_marginal_posterior_b1} and \eqref{eqn_conditional_posterior_mu1_given_b1}. Then, the marginal distribution for $\mu_1$ is $Q_{\mu_1} =  \calN(\eta_1 + \eta_{m_1}, c^2_1 + c^2_S) := \calN(\mu_1(\m, \r), c^{2}(\m, \r))$ with parameters defined analogously to those in Section~\ref{sec_likelihood_and_posterior_calc}.

By the triangle inequality,
\begin{align}
    \mathbb{T}_1 ~\leq~ & ~ d_\TV\left(\Pi_{\mu_1}, Q_{\mu_1}\right) \,+\,  d_\TV( Q_{\mu_1}, \calN(\mu_1(m_1^\d, r_1^\d), c^{2}(m_1^\d, r_1^\d))) ~:=~ R + T. \label{eqn_original_TV_for_mu1}
\end{align}

\noindent{\bf Analysis of $T= d_\TV(Q_{\mu_1}, \calN(\mu_1(m_1^\d, r_1^\d), c^{2}(m_1^\d, r_1^\d)))$.}\\
Since both distributions are Normal, we obtain the following upper bound for $T$: For some universal constant $C < \infty$,
\begin{align}
    T & ~\leq C \left| \frac{c^{2}(\m, \r) - c^{2}(m_1^\d, r_1^\d)}{c^{2}(m_1^\d, r_1^\d)} \right|
        + \left| \frac{\mu_1(\m, \r) - \mu_1(m_1^\d, r_1^\d)}{\sqrt{c^{2}(m_1^\d, r_1^\d)}} \right| \nonumber \\
& ~ := C V(\m,\r) + \Omega(\m,\r).
    \label{eqn_upper_bound_for_T}
\end{align}
Since $\sqrt{n_S\, c^{2}(m_1^\d, r_1^\d)} \geq \delta$ for some $\delta > 0$ by definition, the analysis of $\Omega(\m,\r)$ reduces to showing that
$\big| \sqrt{n_S}\,\{\mu_1(\m, \r) - \mu_1(m_1^\d, r_1^\d)\} \big| \to 0$
in $\bbP_\calD$-probability. Further, since $\mu_1(\m, \r)$ is the mean of $\Pi_{\mu_1}$, Corollary~\ref{cor_pmean_convergence} and its proof applied to the test-training pair $(S, S^\-)$ yields $| \sqrt{n_S}\{\mu_1(\m, \r) - \mu_1(m_1^\d, r_1^\d)\}| = o_{\PD}(1)$, establishing the claim.

A similar argument applies to $V(\m,\r)$. Since $n_S\, c^{2}(m_1^\d, r_1^\d) \geq \delta > 0$ by definition, establishing $V(\m,\r) = o_{\PD}(1)$ is equivalent to showing
\(\big| n_S\,\{c^2(\m, \r) - c^{2}(m_1^\d, r_1^\d)\} \big| \to 0
\)
in $\PD$-probability. This follows directly from Lemma~\ref{lemma_pvar_convergence} applied to $(S, S^\-)$, in complete analogy with the posterior variance case. Thus,
\(\big| n\,\{c^2(\m, \r) - c^{2}(m_1^\d, r_1^\d)\} \big| = o_{\bbP}(1)
\). Combining these results with inequality in \eqref{eqn_upper_bound_for_T}, we conclude that $T \to 0$ in probability under $\PD$.

\noindent{\bf Analysis of $R := d_\TV\left(\Pi_{\mu_1}, Q_{\mu_1}\right)$.}\\
Using the construction of the DRDB posterior $\Pi_{\mu_1}$ and the distribution $Q_{\mu_1}$, we note that the location parameter $\ell(b_1)$ of both $\Pi_{\mu_1|b_1}$ and $Q_{\mu_1|b_1}$ are equivalent. In both cases, it has the form $b_1 + \omega$ for some $\omega \in \mathbb{R}$ given the data $(S, S^\-)$, specifically, $\ell(b_1) := b_1 + \eta_{m_1}$. Hence, by directly applying Lemma~\ref{lemma_TV_distance_btw_two_posteriors}, we obtain the decomposition:
\begin{equation}
    R ~\leq~ d_\TV(Q_{\mu_1 \mid b_1}, \Pi_{\mu_1\mid b_1}) \, + \, d_\TV(Q_{b_1}, \Pi_{b_1}) ~:=~ R_2 + R_1. \label{eqn_bound_for_R_for_mu1}
\end{equation}
Moreover, by construction, $Q_{\mu_1 \mid b_1}$ and $Q_{b_1}$ are Normal distributions whose location and scale parameters {\it match} those of $\Pi_{\mu_1 \mid b_1}$ and $\Pi_{b_1}$, as given in Section~\ref{sec_likelihood_and_posterior_calc}. By the {\it location-scale invariance} of the TV distance (refer to Lemma~S4.1 of \cite{sert2025}) applied to both $R_2$ and $R_1$, we have
\begin{align*}
R_2 & ~=~  d_\TV(t_{\nu_S}(\eta_S, c^2_S), \,\calN(\eta_S, c^2_S) )
    ~=~ d_\TV(
        \calN(0,1), \,
        t_{\nu_S}(0,1)
        );
~~~~\mbox{and} \\
R_1 & ~=~ d_\TV\left(t_{\nu_1}(\eta_1,c^2_1) , \, \calN(\eta_1, c^2_1) \right)
    ~=~ d_\TV\left(
    \calN(0,1), \, t_{\nu_1}(0,1)
       \right).
\end{align*}
By directly applying Lemma S4.4 from \cite{sert2025}, which gives an explicit bound for the TV distance between a Normal and a $t$ distribution with matching location and scaling parameters, we obtain that for universal some constants $C_1, C_S < \infty$, $R_1 \leq C_1/ \sqrt{\nu_1}$ and $R_2 = C_S/\nu_S$.

Note that as $n\to \infty$, for fixed $\bbK$, $\nu_S \to \infty$ by definition, which implies $R_2 \to 0$. Moreover, by definition, $\nu_1  = n_1 - 1 = \sum_{i = 1}^n T_i -1$. Defining $\ptilde_n:=  n_S^{-1}\sum_{i \in \calI} T_i$, we can rewrite $\nu_1 = n_S \ptilde_n -1$. As $n \to \infty$ with $\bbK$ fixed, we have $\ptilde_n \to p_1 = \bbP(T = 1)$ in $\PD$-probability with $p_1 > 0$ by Assumption~\ref{assumptions_standard_causal_assump}. Thus, as $n \to \infty$, $\nu_1 \to \infty$ which in turn implies $R_1 \to 0$ in probability.

Combining this with the convergence in probability result for $T$, and recalling that $\mathbb{T}_1$ is bounded by the sum of $T$ and $R$, we conclude that
\(
    \mathbb{T}_1 \to 0
\)
in probability under $\bbP_\calD$.

Following the same steps and applying the same proof strategy, we obtain for each $k = 1, \dots, \bbK$, $\mathbb{T}_k \cvP 0$, which finally establishes the BvM result stated in Theorem~\ref{thm_BvM_mu1}.

\smallskip
\noindent\textbf{Proof of posterior contraction results.}
For brevity, we provide a unified proof covering both one-correctly-specified nuisance cases: either the regression function is well-specified (Case \textbf{C2}: $m_1^* = m_1^\d$) or the density ratio is well-specified (Case \textbf{C3}: $r_1^* = r_1^\d$). The argument is identical in both cases, with the contraction rate set to $\epsilon_n = \varepsilon_{m,n}$ in Case \textbf{C2} and $\epsilon_n = \varepsilon_{r,n}$ in Case \textbf{C3}.

First, by the construction of $\mu_1^\CF$ in \eqref{eqn_CF_version_mu1}, for any $\epsilon > 0$ and fixed $\bbK < \infty$, we have:
\[
\Pi_{\mu_1}^\CF\big(|\mu_1 - \mu^\d(1)| > \bbK \epsilon | \calD\big) ~\leq~ \Pi_{\mu_1}^{(1)}\big(|\mu_1 - \mu^\d(1)| > \epsilon | \calD\big)  + \cdots + \Pi_{\mu_1}^{(\bbK)}\big(|\mu_1 - \mu^\d(1)| > \epsilon | \calD\big).
\]
This decomposition suggests that it suffices to prove the desired result for
\(
    \Pi_{\mu_1}^{(k)}\big(|\mu_1 - \mu^\d(1)| > \epsilon \mid \calD\big)
\)
for an arbitrary $k \in \{1,\dots,\bbK\}$.

By suppressing the superscript for simplicity, to show posterior contraction results in Theorem~\ref{thm_BvM_mu1}, we aim to prove that, for any $M_n \to \infty$, $\Pi_{\mu_1}\big(|\mu_1 - \mu^\d(1)| > M_n \epsilon_n \mid \calD\big) \;\to\; 0$, in $\bbP_\calD$-probability, where $\epsilon_n >0$ denotes the contraction rate of the correctly specified nuisance parameter.
In particular, $\epsilon_n = \varepsilon_{m,n}$ in Case \textbf{C2}, and $\epsilon_n = \varepsilon_{r,n}$ in Case \textbf{C3}.

Let $\muhat_1 \equiv \mu_1(\m, \r)$ denote the posterior mean of $\Pi_{\mu_1}$. By the triangle inequality, we obtain that:
\begin{align*}
 \Pi_{\mu_1}\big(|\mu_1 - \mu^\d(1)| > 2M_n \epsilon_n |  \calD\big) & ~\leq~ \Pi_{\mu_1}\big(|\mu_1 - \muhat_1| > M_n \epsilon_n | \calD\big) \\
 & ~~~~ + \Pi_{\mu_1}\big(|\muhat_1 - \mu^\d(1)| > M_n \epsilon_n | \calD\big)  \nonumber \\
 & ~:=~ V(\m, \r, S) + \mathbb{D}(\m, \r, S),
\end{align*}
where $S = \calD_k$ for some $k \in \{1,\dots,\bbK\}$. Thus, it suffices to establish that both terms $V(\m,\r, S)$ and $\mathbb{D}(\m,\r, S)$ converge to zero in probability.

First, note that the second probability $\mathbb{D}(\m,\r,S)$ is the indicator of the event $\{|\muhat_1 - \mu^\d(1)| > M_n \epsilon_n\}$. In Corollary~\ref{cor_pmean_convergence} (b), applied to the split $(S, S^\-) = (\calD_k, \calD_k^\-)$ with properly adjusting the parameters, we have already established that the posterior mean $\muhat_1$ converges to $\mu^\d(1)$ at the rate $\epsilon_n$, where $\epsilon_n = \varepsilon_{m,n}$ in Case \textbf{C2}, and $\epsilon_n = \varepsilon_{r,n}$ in Case \textbf{C3}. Hence, Corollary~\ref{cor_pmean_convergence} (b) directly gives:
\begin{align}
    \mathbb{D}(\m,\r,S) \;\to\; 0
    \; \text{ in probability under } \bbP_\calD.
    \label{eqn_post_contraction_pmean_part}
\end{align}
Next, consider the first probability $V(\m,\r,S)$. By Chebyshev’s inequality,
\begin{align}
    V(\m,\r,S)
    ~\equiv~ \Pi_{\mu_1}\big(|\mu_1 - \muhat_1| > M_n \epsilon_n \mid \calD\big) ~\leq~ (M_n \epsilon_n)^{-2} \Var(\mu_1 \mid \calD).
    \label{eqn_V_bound_by_variance}
\end{align}
Thus, since $M_n \to \infty$, it suffices to show that  $\Var(\mu_1 \mid \calD) = \Op(\delta_n^2)$ under both Cases \textbf{C2} and \textbf{C3} where $\delta_n \to 0$ faster than or equal to $\epsilon_n$.

To this end, recalling the construction of the posterior $\Pi_{\mu_1}$ in Section~\ref{sec_DRDB_for_mu1} with parameters given in Section~\ref{sec_likelihood_and_posterior_calc}, we explicitly calculate the posterior variance $\hat{c}_n^2(\m,\r)$. Specifically, given $\m \sim \Pi_{m_1}$ and $\r \sim \Pi_{r_1}$, and $ W(\bZ,\m,\r) := \r(\bX)\{Y - \m(\bX)\}$, we obtain:
\begin{align*}
    \hat{c}_n^2(\m,\r)
    & = \frac{n_S}{n_S-2} c_S^2 + \frac{n_1}{n_1-2} c_1^2 ~:=~ \lambda_n c_S^2 + \lambda_{n_1} c_1^2 \\
    & = \frac{\lambda_n}{n_S(n_S-1)} \sum_{i \in \calI} \{\m(\bX_i) - \eta_{m_1}\}^2
      +  \frac{\lambda_{n_1}}{n_1(n_1-1)}\! \sum_{i \in \calI_1} \{ W(\bZ_i, \m,\r) - \eta_1 \}^2,
\end{align*}
where $n_S = |\calI|$, $n_1 = |\calI_1|$, $\lambda_n = n_S/(n_S-2)$, $\lambda_{n_1} = n_1/(n_1-2)$, and $n_1 = \sum_{i \in \calI_1}T_i$.

Since $n_S \to \infty$ (as $\bbK$ is fixed), we have $\lambda_n \to 1$. Moreover, since $n_1/n \to p_1 \in (0,1)$, we obtain $\lambda_{n_1} \to 1$ in $\PD$-probability; in particular, $\lambda_{n_1} = \Op(1)$.

Next, for the first component $c_S^2$, we observe that for any $t > 0$, Markov's inequality
\[
\bbP_S(c_S^2 > t | \m) ~\leq~ t^{-1}\,\bbE_S(c_S^2 | \m)
      ~=~ n_S^{-1}\Var_\bX\{\m(\bX)| \m\} ~\leq~ n_S^{-1}\,\bbE_\bX\{\m^2(\bX)| \m\},
\]
where the equality in the first step follows from the definition of $c_S^2$.

To bound $\bbE_\bX\{\m^2(\bX)\mid \m\}$, we write $\m(\bX) = \{\m(\bX) - m_1^*(\bX)\} + m_1^*(\bX)$. Then, the inequality $(a+b)^2 \le 2a^2 + 2b^2$ gives $\bbE_\bX\{\m^2(\bX) | \m\} \le 2\|\m(\bX) - m_1^*(\bX)\|_{\bbL_2(\bbP_\bX)}^2
        + 2\|m_1^*(\bX)\|_{\bbL_2(\bbP_\bX)}^2$. Since $\|\m(\bX) - m_1^*(\bX)\|_{\bbL_2(\bbP_\bX)} = \Op(\varepsilon_{m,n})$ by Lemma~\ref{lemma_nuisance_convergence_rates} and $\|m_1^*(\bX)\|_{\bbL_2(\bbP_\bX)} < \infty$ by Assumption~\ref{assumption_nuisance}, we obtain $\bbE_\bX\{\m^2(\bX)\mid \m\} = \Op(\varepsilon_{m,n}) + O(1)$. Thus, applying Lemma~S4.6 of \cite{sert2025} to upgrade from the conditional bound to an unconditional one, we obtain
\(
    c_S^2 = \Op(n_S^{-1})
\), as $\varepsilon_{m,n} \to 0$.
Since $\lambda_n = n_S/(n_S-2) \to 1$ and $\bbK < \infty$ is fixed, we conclude that:
\begin{equation}
       \frac{n_S}{n_S-2}\,c_S^2 ~=~ \Op(n^{-1}).
    \label{c_S2_rate_of_variance}
\end{equation}
For the treated component, we write $c_1^2 = \widehat{\sigma}_1^2 / n_1$, where $\widehat{\sigma}_1^2 = \sum_{i \in \calI_1} \{ W(\bZ_i, \m,\r) - \eta_1 \}^2/ (n_1-1)$ and $n_1 = n_S \ptilde_n$ with $\ptilde_n \cvP p_1 \in (0,1)$ under Assumption~\ref{assumptions_standard_causal_assump}. Hence $1/n_1 = \Op(n^{-1})$ as $n \to \infty$ (since $\bbK < \infty$ is fixed). Then, for any $t>0$, Markov’s inequality yields:
\begin{align}
    \bbP_S(\widehat{\sigma}_1^2 > t \mid \m, \r)
    & ~\leq~ t^{-1}\,\bbE_S(\widehat{\sigma}_1^2 \mid \m, \r)
      ~=~ t^{-1}\,\Var_{\bZ\mid T=1}\{ W(\bZ,\m,\r) \mid \m,\r\} \nonumber \\
    & ~\leq~ t^{-1}\,\bbE_{(Y,X)\mid T=1}\!\left[\r^2(\bX)\{Y-\m(\bX)\}^2\right],\label{eqn_analysis_of_c12_for_mu1}
\end{align}
where we omit the conditioning argument in the last step for notational simplicity.

By the triangle inequality, we decompose $\bbE_{(Y,X)\mid T=1}\!\left[\r^2(\bX)\{Y-\m(\bX)\}^2\right]  = E_r + E_m$,  where:
\begin{align*}
E_r & ~:=~ \bbE_{S_1}\!\left[ \{\r^2(\bX) - r_1^{*2}(\bX)\}\{Y-m_1^*(\bX)\}^2 \right], ~~\mbox{and}\\
E_m & ~:=~ \bbE_{S_1}\!\left( \r^2(\bX)\left[\{Y-m_1^*(\bX)\}^2 - \{Y-\m(\bX)\}^2\right]\right).
\end{align*}
For $E_r$, by Assumption~\ref{assumption_nuisance}, $\Gamma^2_m = \sup_{\bx}\bbE[\{Y -m_1^*(\bX)\}^2|\bX = \bx] < \infty$, and using Assumption~\ref{assumptions_standard_causal_assump}, we obtain that: by the Cauchy–Schwarz (CS) inequality,
\[E_r ~\leq~ \Gamma^2_m \, \| \r(\bX) - r_1^{*}(\bX) \|_{\bbL_2(\bbP_\bX)} \, \{  \| \r(\bX) - r_1^{*}(\bX) \|_{\bbL_2(\bbP_\bX)} + 2  \|r_1^{*}(\bX) \|_{\bbL_2(\bbP_\bX)}\}.
\]
Since $\|r_1^{*}(\bX) \|_{\bbL_2(\bbP_\bX)} < \infty$, and by Lemma~\ref{lemma_nuisance_convergence_rates}, we have:
\begin{equation}
    E_r ~=~ \Op(\varepsilon_{r, n}). \label{eqn_Er_rate_for_variance_control}
\end{equation}
For the $E_m$ term, we first obtain the following identity
\begin{align*}
    E_m & ~=~ 2\bbE_{S_1}\left(\r^2(\bX)\{\m(\bX)-m_1^*(\bX)\}\{Y-m_1^*(\bX)\}\right)
          - \bbE_{S_1}\left(\r^2(\bX)\{\m(\bX)-m_1^*(\bX)\}^2\right).
\end{align*}
By the CS inequality and Assumptions~\ref{assumptions_standard_causal_assump} and~\ref{assumption_nuisance}, we have:
\begin{align*}
        & \left| \bbE_{S_1}\left(\r^2(\bX)\{\m(\bX)-m_1^*(\bX)\}\{Y-m_1^*(\bX)\}\right)\right|\\
    & ~~~\leq~ \Gamma_m \|\r(\bX)\|_{\bbL_\infty(\bbP_\bX)}^2\|\m(\bX)-m_1^*(\bX)\|_{\bbL_2(\bbP_\bX)},
\end{align*}
and similarly, $\bbE_\bX\left(\r^2(\bX)\{\m(\bX)-m_1^*(\bX)\}^2\right)
    \leq \|\r(\bX)\|_{\bbL_\infty(\bbP_\bX)}^2 \|\m(\bX)-m_1^*(\bX)\|_{\bbL_2(\bbP_\bX)}^2$.
By Assumption~\ref{assumption_nuisance}, together with Lemma~\ref{lemma_nuisance_convergence_rates}, we therefore conclude that:
\begin{equation}
E_m ~=~ \Op(\varepsilon_{m, n}) + \Op(\varepsilon_{m, n}^2) ~=~ \Op(\varepsilon_{m, n}). \label{eqn_Em_rate_for_variance_control}
\end{equation}
Combining \eqref{eqn_Er_rate_for_variance_control} and \eqref{eqn_Em_rate_for_variance_control}, and applying Lemma~S4.6 of \cite{sert2025} to upgrade from conditional to unconditional statements, we obtain $\widehat{\sigma}_1^2 = \Op(\varepsilon_{m,n}) + \Op(\varepsilon_{r,n})
    = \Op \left(\max\{\varepsilon_{m,n},\varepsilon_{r,n}\}\right)$. Since $1/n_1 = \Op(n^{-1})$ by construction, it follows that:
\begin{equation}
    c_1^2 \;=\; \frac{\widehat{\sigma}_1^2}{n_1}
    \;=\; \Op\left(n^{-1}\varepsilon_{m,n}\right) \,+\, \Op\left(n^{-1}\varepsilon_{r,n}\right)
    \;=\; \Op\left(n^{-1}\max\{\varepsilon_{m,n},\,\varepsilon_{r,n}\}\right).\label{eqn_final_analysis_of_c12_for_mu1}
\end{equation}
Together with \eqref{c_S2_rate_of_variance}, which gives $\frac{n_S}{n_S-2}\,c_S^2 = \Op(n^{-1})$, and recalling that $\lambda_n = n_S/(n_S-2) = \Op(1)$ and $\lambda_{n_1} = n_1/(n_1-2) = \Op(1)$, we conclude that:
\begin{equation}
    \hat{c}_n^2(\m,\r) ~=~ O_{\bbP_\calD}(n^{-1}) \;+\; O_{\bbP_\calD}\left(n^{-1}\max\{\varepsilon_{m,n},\,\varepsilon_{r,n}\}\right)
     \;=\; O_{\bbP_\calD}(n^{-1}), \label{eqn_final_bound_for_variance}
\end{equation}
where the last equality uses $\varepsilon_{m,n} \to 0$ and $\varepsilon_{r,n} \to 0$ by Assumption~\ref{assumption_nuisance}.

Finally, recalling the inequality given in \eqref{eqn_V_bound_by_variance}, and setting $\epsilon_n = \varepsilon_{m,n}$ in Case \textbf{C2} and $\epsilon_n = \varepsilon_{r,n}$ in Case \textbf{C3}, and combining with \eqref{eqn_final_bound_for_variance}, we obtain $V(\m,\r,S) \to 0 \text{ in probability under } \bbP_{\calD}$. This completes the proof of the posterior contraction at rate $\epsilon_n$, with $\epsilon_n=\varepsilon_{m,n}$ in Case \textbf{C2} and $\epsilon_n=\varepsilon_{r,n}$ in Case \textbf{C3}; that is, whenever only one of the two nuisance models is correctly specified.
\end{proof}

\phantomsection
\addcontentsline{toc}{subsection}{Proof of Theorem~\ref{thm_BvM_ATE}}

\begin{proof}[Proof of Theorem~\ref{thm_BvM_ATE}.]
    The argument proceeds similarly to the proof of Theorem~\ref{thm_BvM_mu1} with the necessary modifications. We first establish the proof of the BvM result in Theorem~\ref{thm_BvM_ATE}, and then provide a unified proof for the posterior contraction statements therein.

\noindent{\bf Proof of the BvM result:} \\
\noindent\underline{Roadmap of the BvM proof:} {\it The overall proof strategy follows the same structure as that of Theorem~\ref{thm_BvM_mu1}, with one additional step where our methodology plays a crucial role. In the extended DRDB framework, the bias involves joint learning of the individual bias components. However, thanks to our targeted modeling approach and efficient use of the data, we can express this joint posterior as a product of the individual bias posteriors (refer to \eqref{eqn_posterior_for_bias}). This key simplification enables us to decompose the TV distance into terms corresponding to each component and to bound them using known results on the TV distance between Normal and $t$-distributions, as shown in \eqref{eqn_analysis_of_R_for_ATE}. The remainder of the proof then proceeds analogously to the argument in Theorem~\ref{thm_BvM_mu1}.}

By recalling the construction of $\rATE^\CF$ with its posterior $\pATE^\CF$ presented in Section~\ref{sec_DRDB_extended_ATE}, and applying Lemma~\ref{lemma_TV_convolution}, it suffices to prove the BvM result for a single test-training split $(\calD_k, \calD_k^\-)$ for any $k \in \{1, \dots, \bbK\}$ with corresponding Normal limiting distribution. Specifically, for a fixed $k$, we aim to show:
\[
d_\TV\Big(\pATE^{(k)}, \calN\big(\rATE^{(k)}(m^\d,r^\d), c^{2,(k)}(m^\d,r^\d)\big)\Big) \, \cvP \, 0, \, \  {\text as } \ n \to \infty, \,
\]
where $\pATE^{(k)}$ denotes the posterior of $\rATE$ obtained by applying the DRDB procedure to $(\calD_k,\calD_k^\-)$, and $\rATE^{(k)}(m^\d,r^\d)$ and $c^{2,(k)}(m^\d,r^\d)$ are the mean and variance, defined analogously to \eqref{eqn_pmean_pvar_for_ATE} in Section~\ref{sec_theory}.

For notational simplicity, we suppress the superscript $(k)$ and define $S :=\calD_k$ and $S^\- :=\calD_k^\-$, adopting the notational conventions of Sections~\ref{sec_methodology} and~\ref{sec_theory}. Under this convention, we want to show:
\[
\mathbb{T} ~:=~ d_\TV\!\Big(\pATE,\ \calN\!\big(\rATE(m^\d,r^\d),\ c^2(m^\d,r^\d)\big)\Big) ~\cvP~ 0.
\]
The key idea of the proof is to replace the $t$-distributions used in constructing the posterior $\pATE$ in \eqref{eqn_posterior_for_mu_for_S} by Normal distributions with matching center and scale. Specifically, set $Q_{\rATE|b}:= \calN(\eta_S, c^2_S)$ and $Q_{b}:= \calN(b_1; \eta_1, c^2_1) *  \calN(b_0;-\eta_0, c^2_0)$ where the parameters are define analogous to Section~\ref{sec_likelihood_and_posterior_calc} based on $S$ with $\bm \sim \Pi_{m}(\cdot; S^\-)$ and $\br:= (\ubr, \r)$ where $\underbar{\it r}_t \sim \Pi_{r_t}(\cdot; S^\-)$ for $t = 0, 1$.

By the {\it conditional independence}
structure (given the nuisance posterior draws $(\bm, \r)$ where $S \ind \bm$ and $S \ind \underbar{\it r}_t$ for $t = 0, 1$) detailed in Section~\ref{sec_DRDB_extended_ATE}, the posteriors for $b_1$ and $b_0$ are independent, and since $b=b_1-b_0$, the Normal distribution $Q_b$ is well-defined as the convolution above. Then, let $Q_{\rATE}$
denote the marginal distribution and is equal to:
\[
Q_\rATE ~=~ \calN(\eta_S+ \eta_1 - \eta_0, c^2_S + c^2_1 + c_0^2) ~:=~ \calN(\rATE(\bm,\br),\ c^2(\bm,\br)).
\]
By using the triangle inequality,
\begin{equation}
    \mathbb{T} ~\le~ d_\TV\big(\pATE,\ Q_\rATE \big) +\, d_\TV\Big(Q_\rATE,\ \calN\big(\rATE(m^\d,r^\d),\ c^2(m^\d,r^\d)\big)\Big) ~:=~ R + T. \label{eqn_first_bound_for_TV_for_mu}
\end{equation}
{\bf Analysis of $R$.} By the construction of $\pATE$ in Section~\ref{sec_DRDB_extended_ATE} and the independence of $\Pi_{b_1}$ and $\Pi_{b_0}$, a direct application of Lemma~\ref{lemma_TV_bound_theta_as_eta_minus_beta} yields:
\begin{equation}
    R
\;\le\;
d_\TV\big(\Pi_{\rATE\mid b},\, Q_{\rATE\mid b}\big)
\,+\,
d_\TV\big(\Pi_{b_1},\, Q_{b_1}\big)
\,+\,
d_\TV\big(\Pi_{b_0},\, Q_{b_0}\big)
\;=:\;
R_1 + R_2 + R_3. \label{eqn_analysis_of_R_for_ATE}
\end{equation}
Note that $R_1, R_2, R_3$ represent TV distances between $t$ and Normal distributions with matching centers and scales. Applying the known bound for the TV distance between $t$ and Normal distributions with matching centers and scales given in Lemma S4.4 of \cite{sert2025}, we obtain $R_1 \leq C_1/\nu_S, \, R_2 \leq C_2/\nu_1$ and $C_3/\nu_0$ for universal constants $C_1,C_2,C_3<\infty$.

Since $\bbK$ is fixed, $\nu_S = n_S - 1 \to \infty$ as $n \to \infty$, hence $R_1 = O(n^{-1})$. Also, since $\nu_1 = n_1 -1$ with $n_1 = |S_1| = \sum_{i \in \calI_1} T_i = n_S \ptilde_n$ and $\ptilde_n \cvP p_1$, where $p_1 = \bbP(T = 1) > 0$ by Assumption~\ref{assumptions_standard_causal_assump}, we have $R_2 = \Op(n^{-1})$ for fixed $\bbK$. An identical argument, with $p_0 := 1 - p_1 > 0$, gives $R_3 = \Op(n^{-1})$. This implies that $R$ converges to 0 in probability under $\bbP_\calD$.

\noindent{\bf Analysis of $T$.} Since $T$ is the TV distance between two Normal distributions, standard known bounds yield: for some universal constant $C < \infty$,
\begin{equation}
    T ~\le~ C\left| \frac{n_S\{c^{2}(\bm, \br) - c^{2}(m^\d, r^\d)\}}{n_Sc^{2}(m^\d, r^\d)} \right|
        ~\; + ~\; \left| \frac{\sqrt{n_S}\{\rATE(\bm, \br) - \rATE(m^\d, r^\d)\}}{\sqrt{n_Sc^{2}(m_1^\d, r_1^\d)}} \right|.
        \label{upper_bound_for_T_of_mu}
\end{equation}
Since $n_Sc^{2}(m^\d, r^\d) > \delta > 0$ by definition, it suffices to show both $\Omega(\bm,\br): = \sqrt{n_S}\{\rATE(\bm, \br) - \rATE(m^\d, r^\d)\}$ and $V(\bm,\br) := n_S\{c^{2}(\bm, \br) - c^{2}(m^\d, r^\d)\}$ tend to zero in probability under $\bbP_\calD$.

For $\Omega(\bm,\br)$, since $\rATE(\bm, \br)$ is the posterior mean of $\pATE(\cdot; S)$, applying Corollary~\ref{cor_pmean_convergence} to the pair $(S,S^\-)$ with fixed $\bbK < \infty$ and under correct specification of both nuisance models (Case \textbf{C1} in Theorem~\ref{thm_BvM_ATE}) yields
\(\Omega(\bm,\br) = \op(1)
\),
hence equivalently, $\Omega(\bm,\br)\to 0$ in $\bbP_{\calD}$-probability.

Similarly, by using Lemma~\ref{lemma_pvar_convergence} and Corollary~\ref{cor_pvar_convergence_for_ATE} for the single-split $(S,S^\-)$ with fixed $\bbK < \infty$ and under Case \textbf{C1}, we obtain $V(\bm,\br) = \op(1)$, which means $V(\bm,\br)\to 0$ in $\bbP_{\calD}$-probability. Combining these results with the bound obtained in \eqref{eqn_first_bound_for_TV_for_mu}, we conclude that $\mathbb{T} \to 0$ in probability under $\bbP_{\calD}$, completing the proof of the BvM result in Theorem~\ref{thm_BvM_ATE}.

{\bf Proof of posterior contraction results.} Instead of presenting two separate proofs, we provide unified arguments that address both Cases \textbf{C2} and \textbf{C3}, i.e., the settings in which only one nuisance model is correctly specified. The proof leverages the definition of $\rATE^{\CF}$ and the construction of the posterior $\pATE^{\CF}$. Specifically, for any $\epsilon > 0$ and fixed $\bbK < \infty$, we have:
\[
   \pATE^{\CF}\big(|\rATE - \tATE| > \bbK \epsilon \mid \calD\big) ~\leq~ \pATE^{(1)}\big(|\rATE - \tATE| > \epsilon \mid \calD\big)
          + \cdots
          + \pATE^{(\bbK)}\big(|\rATE - \tATE| > \epsilon \mid \calD\big).
\]
Using this decomposition, to establish posterior contraction results in Theorem~\ref{thm_BvM_ATE}, for any $k \in \{1, \dots, \bbK\}$, it suffices to show that for every $M_n \to \infty$, $\pATE^{(k)}(|\rATE - \tATE| > M_n\epsilon_n \mid \calD)$ converges to 0 in $\bbP_\calD$-probability, where $\epsilon_n >0$ is the contraction rate of the correctly specified nuisance model. In particular, $\epsilon_n = \varepsilon_{m,n}$ in Case \textbf{C2}, and $\epsilon_n = \varepsilon_{r,n}$ in Case \textbf{C3}.

For simplicity, we omit the superscript and set the test-training pair as $(S, S^\-):= (\calD_k, \calD_k^{-})$. Let $\widehat{\rATE} = \rATE(\bm, \br)$ be the posterior mean of $\pATE$. The triangle inequality gives:
\begin{align}
 \pATE\big(|\rATE - \tATE| > 2M_n \epsilon_n |  \calD\big) & \le \pATE\big(|\rATE - \widehat{\rATE}| > M_n \epsilon_n | \calD\big) + \pATE\big(|\widehat{\rATE} - \tATE| > M_n \epsilon_n | \calD\big)  \nonumber \\
 &:= V(\bm, \br, S) + \mathbb{D}(\bm, \br, S). \nonumber
\end{align}
Firstly, note that the probability $\mathbb{D}(\bm, \br, S)$ is the indicator of the event $\{|\widehat{\rATE} - \tATE| > M_n \epsilon_n\}$. By applying Corollary~\ref{cor_pmean_convergence} to the split $(S, S^\-)$ with proper parameter adjustments, we conclude that:
\begin{equation}
    \mathbb{D}(\bm, \br, S) \;\to\; 0
    \; \text{ in probability under } \bbP_\calD.    \label{eqn_post_contraction_pmean_part_for_ATE}
\end{equation}
Secondly, since $\widehat{\rATE}$ is the mean of the posterior $\pATE$, applying Chebyshev's inequality yields that $V(\bm,\br,S)
    \equiv \pATE(|\rATE - \widehat{\rATE}| > M_n \epsilon_n | \calD) \leq (M_n \epsilon_n)^{-2} \,\Var(\rATE | \calD)$.
    Hence, since $M_n \to \infty$, it suffices to show $\Var(\rATE | \calD) = \Op(\delta_n^2)$ under both Cases \textbf{C2} and \textbf{C3}, where $\delta_n \to 0$ at a rate faster than or equal to $\epsilon_n$.

By the analysis in the proof of Corollary~\ref{cor_pvar_convergence_for_ATE} and adopting the notation therein, we write the posterior variance $\hat{c}^2(\bm, \br):= \Var_\rATE(\rATE \mid \calD)$ as:
\begin{align*}
  \hat{c}^2(\bm, \br) & = \frac{n_S}{(n_S-2)}\frac{\sum_{i \in \calI}\{\bm(\bX) - \eta_m\}^2}{n_S(n_S-1)} + \frac{n_1}{(n_1-2)}\frac{\sum_{i \in \calI_1}\{W(\bZ_i,\m,\r) - \eta_1\}^2}{n_1(n_1-1)} \\
  & ~~~ + \frac{n_0}{(n_0 - 2)}\frac{1}{n_0(n_0-1)} \!\sum_{i \in \calI_0}\{U(\bZ_i, \ubm,\ubr) - \eta_0\}^2 := \lambda_n c_S^2 + \lambda_{n_1} c_1^2 + \lambda_{n_0} c_0^2,
\end{align*}
where $\bm(\cdot) = \m(\cdot) - \ubm(\cdot)$, $W(\bZ, \m, \r) := \r(\bX)\{Y - \m(\bX)\}$ and $U(\bZ, \ubm, \ubr) := \ubr(\bX)\{Y - \ubm(\bX)\}$ and $\lambda_n = n_S/(n_S-2)$ and $\lambda_t = n_t/(n_t-2)$ for $t = 0,1$ and $\bZ = (Y, \bX)$.

By Assumption~\ref{assumptions_standard_causal_assump} and its definition, as $\bbK<\infty$ fixed, it is clear that $\lambda_n = O(1)$ and $\lambda_t = \Op(1)$ for $t \in \{0,1\}$. For detailed arguments, we refer to the proof of Theorem~\ref{thm_BvM_mu1}. Thus, it is enough to show $c^2_t = \Op(\delta_n)$ for $t \in \{S, 0,1\}$.

For $c^2_S$\,, following the conditional probability argument, Markov's inequality yields:
\begin{align*}
    \bbP(c_S^2 > t \mid \bm)
    & ~\leq~ t^{-1}\,\bbE(c_S^2 \mid \bm)
      = n_S^{-1}\Var_\bX\{\bm(\bX)\mid \bm\}
      \leq n_S^{-1}\,\bbE_\bX\{\bm^2(\bX)\mid \bm\} \\
    & ~\leq~ n_S^{-1} 2\left\{\,\|\bm(\bX) - m^*(\bX)\|_{\bbL_2(\bbP_\bX)}^2 + \|m^*(\bX)\|_{\bbL_2(\bbP_\bX)}^2 \right\},
\end{align*}
where the equality in the first line follows from the construction of $c_S^2$ and $m^*$ is the limiting function at which the nuisance posterior $\Pi_{m}$ contracts.

Since $\|\bm(\bX) - m^*(\bX)\|_{\bbL_2(\bbP_\bX)} = \Op(\varepsilon_{m,n})$ by Lemma~\ref{lemma_nuisance_convergence_rates} and $\|m^*(\bX)\|_{\bbL_2(\bbP_\bX)} < \infty$ by Assumption~\ref{assumption_nuisance}, we apply Lemma~S4.6 of \cite{sert2025} allow us to go from the conditional statement to an unconditional one. As $\bbK < \infty$ is fixed and by definition of $n_S$, this yields:
\begin{equation}
       c_S^2 = \Op(n_S^{-1}) \ \implies \   \frac{n_S}{n_S-2}\,c_S^2 ~=~ \Op(n^{-1}).
    \label{c_S2_rate_of_variance_for_ATE}
\end{equation}
The analysis of $c_1^2$ follows from the same steps as in the proof of Theorem~\ref{thm_BvM_mu1}, specifically through Steps~\eqref{eqn_analysis_of_c12_for_mu1}--\eqref{eqn_final_analysis_of_c12_for_mu1}. To avoid repetition, we omit details here and refer to the corresponding parts of the proof of Theorem~\ref{thm_BvM_mu1}. Therefore, we conclude that:
\begin{equation}
    \frac{n_1}{n_1-2}\,c_1^2 ~=~ \Op\left(n^{-1}\max\{\varepsilon_{m,n},\,\varepsilon_{r,n}\}\right).
    \label{c_12_rate_of_variance_for_ATE}
\end{equation}
Using analogous arguments from Steps~\eqref{eqn_analysis_of_c12_for_mu1}--\eqref{eqn_final_analysis_of_c12_for_mu1}, now applied to the control group $S_0$ with $(\ubm, \ubr)$ (by symmetry between the constructions of $c_0^2$ and $c_1^2$), we also obtain:
\begin{equation}
    \frac{n_0}{n_0-2}\,c_0^2 ~=~ \Op \left( n^{-1}\max\{\varepsilon_{m,n},\,\varepsilon_{r,n}\} \right ) .
    \label{c_02_rate_of_variance_for_ATE}
\end{equation}
Combining \eqref{c_S2_rate_of_variance_for_ATE}--\eqref{c_02_rate_of_variance_for_ATE}, since both $\varepsilon_{m,n}$ and $\varepsilon_{r,n}$ tends to zero as $n\to \infty$, we have:
\begin{equation}
\hat{c}^2(\bm,\br) ~=~ O_{\bbP_\calD}(n^{-1}) \;+\; O_{\bbP_\calD}\!\big(n^{-1}\max\{\varepsilon_{m,n},\,\varepsilon_{r,n}\}\big)
     ~=~ O_{\bbP_\calD}(n^{-1}). \label{eqn_final_bound_for_variance_for_ATE}
\end{equation}
Combining \eqref{eqn_post_contraction_pmean_part_for_ATE} and \eqref{eqn_final_bound_for_variance_for_ATE}, we obtain that for either $\epsilon_n = \varepsilon_{m,n}$ or $\epsilon_n = \varepsilon_{r,n}$, the posterior $\pATE$ contracts around $\tATE$ at rate $\epsilon_n$, where $\epsilon_n$ is the contraction rate of the well-specified model.
\end{proof}

\phantomsection
\addcontentsline{toc}{subsection}{Proof of Corollary~\ref{cor_pmean_convergence}}

\begin{proof}[Proof of Corollary~\ref{cor_pmean_convergence}.]
By construction of the final DRDB posterior $\pATE^\CF$ in \eqref{eqn_CF_version_mu1} and linearity of expectation, the posterior mean, denoted by $\muhat_1(\m, \r)$, is given by:
\begin{equation*}
\muhat_1(\m, \r)
~=~ \frac{1}{\bbK} \sum_{k = 1}^\bbK \muhat_1^{(k)}(\m^{(k)}, \r^{(k)}),
\end{equation*}
where $\muhat_1^{(k)}(\m^{(k)}, \r^{(k)})$ is the posterior mean of $\pATE^{(k)}$, with $\m^{(k)}$ and $\r^{(k)}$ denoting nuisance posterior draws from $\Pi_{m_1}^{(k)}$ and $\Pi_{r_1}^{(k)}$, respectively, for $k = 1,\dots,\bbK$.

Since $\bbK < \infty$ is fixed, it suffices to establish the results of Corollary~\ref{cor_pmean_convergence} for each $\muhat_1^{(k)}$ based on a split $(\calD_k, \calD_k^\-)$, $k = 1,\dots,\bbK$. Since the arguments are identical across $k$ by the construction of the DRDB procedure, for brevity, we present the proof for one posterior mean $\muhat_1^{(k)}$.

Fix $k \in \{1,\dots,\bbK\}$ and set the test-training pair $(S, S^\-) = (\calD_k, \calD_k^\-)$ with index sets $(\calI, \calI^\-)$ and $n_S := |\calI|$ as in Section~\ref{sec_methodology}. For notational simplicity, we drop the superscript $(k)$ and denote the posterior mean $\muhat_1 := \muhat_1^{(k)}(\m^{(k)}, \r^{(k)})$.

Let $S_1$ denote the treated subdata of the test data $S$ with sample size $n_1$. From the construction of the DRDB posterior $\Pi_{\mu_1}$, the posterior mean $\muhat_1 \equiv \muhat_1(\m, \r)$ can then be computed explicitly as:
\begin{align*}
\muhat_1 ~=~ \eta_{m_1} + \eta_1 ~:=~ \frac{1}{n_S} \sum_{i = 1}^{n_S} \m(\bX_i) \, + \, \frac{1}{n_1} \sum_{i \in \calI_1}\r(\bX_i)\{Y_i - \m(\bX_i)\},
\end{align*}
where $\m \sim \Pi_{m_1}(\cdot; S^\-)$ and $\r \sim \Pi_{r_1}(\cdot; S^\-)$. Then, the mean $\mu_1(m_1^*, r_1^*)$ of the limiting Gaussian distribution in Theorem~\ref{thm_BvM_mu1} can be expressed as
\begin{align*}
    \mu_1^* ~\equiv~ \mu_1(m_1^*, r_1^*) & ~=~ \frac{1}{n_S} \sum_{i=1}^{n_S} m_1^*(\bX_i) + \frac{1}{n_S} \sum_{i=1}^{n_S} \frac{r_1^*(\bX_i)}{p_1} T_i \{Y_i - m_1^*(\bX_i)\} \\
    & ~=~ \bbP_{n_S}\big\{m_1^*(\bX)\big\} + \frac{1}{p_1}\, \bbP_{n_S}\big[r_1^*(\bX) T \{Y - m_1^*(\bX)\}\big],
\end{align*}
where $m_1^*$ and $r_1^*$ are the deterministic limiting functions to which the nuisance posteriors $\Pi_{m_1}$ and $\Pi_{r_1}$ contract. When both nuisance models are correctly specified (Case \textbf{C1}), $m_1^* = m_1$ and $r_1^* = r_1$, corresponding to their true nuisance functions. Similarly, $\muhat_1$ can be rewritten as:
\begin{align*}
      \muhat_1 & ~=~ \frac{1}{n_S} \sum_{i = 1}^{n_S} \m(\bX) + \frac{1}{\left(n_S^{-1}\sum_{i = 1}^{n_S} T_i\right)}\frac{1}{n_S} \sum_{i = 1}^{n_S} \r(\bX_i)T_i\{Y_i - \m(\bX_i)\} \\
      & ~=~ \bbP_{n_S}(\m(\bX)) + \frac{1}{\ptilde_n}\bbP_{n_S}(\r(\bX)T\{Y - \m(\bX)\}) ~~~~{\left[\mbox{with}~\;{\ptilde_n} ~:=~ \frac{1}{n_S}\sum_{i = 1}^{n_S
      } T_i\right].}
\end{align*}
Then, the posterior mean $\muhat_1$ admits the following generalized decomposition:
\begin{align}
   \muhat_1 -  \mu_1^* ~=~ & ~ \bbP_{n_S\!}\left(\!\{\m(\bX) - m_1^*(\bX)\}\left\{1-\frac{Tr_1^\d(\bX_i)}{p_1}\right\}\!\right) \nonumber \\
   & ~+~ \bbP_{n_S}\!\left(\!\{Y - m_1^\d(\bX)\}T\left\{ \frac{\r(\bX)}{\ptilde_n} - \frac{r_1^*(\bX)}{p_1}\right\}\!\right) \nonumber \\
    & ~+~ \bbP_{n_S}\left(\{\m(\bX) - m_1^*(\bX)\}T\left\{ \frac{\r(\bX)}{\ptilde_n} - \frac{r_1^*(\bX)}{p_1}\right\}\right) \nonumber \\
    & ~+~ \bbP_{n_S}\left(\{m_1^*(\bX) - m_1^\d(\bX)\}T\left\{ \frac{\r(\bX)}{\ptilde_n} - \frac{r_1^*(\bX)}{p_1}\right\}\right) \nonumber \\
    & ~+~ \bbP_{n_S}\left(\{\m(\bX) - m_1^*(\bX)\}T\left\{ \frac{r_1^\d(\bX)}{p_1} - \frac{r_1^*(\bX)}{p_1}\right\}\right)  \nonumber\\
    & ~:=~  R_1 + R_2 + R_3 + R_4 + R_5. \label{eqn_extended_decomposition_of_pmean}
\end{align}
Notice that under the correct specification of nuisance models (Case \textbf{C1}), we have $R_4 = 0$ and $R_5 = 0$. By definition, $R_4 = 0$ also holds in Case \textbf{C2} ($m_1^* = m_1^\d$), while $R_5 = 0$ holds in Case \textbf{C3} ($r_1^* = r_1^\d$). Moreover, in all three cases (\textbf{C1}--\textbf{C3}), $R^* := \mu_1^* - \mu^\d(1)$ can be viewed as the sample average of centered i.i.d. random variables, implying that $R^* = \Op(n^{-1/2})$ (as $\bbK < \infty$ is fixed).

\textbf{Case C1:} $m_1^* = m_1^\d$ and $r_1^* = r_1^\d$. Then, $R_4 = 0$ and $R_5 = 0$. To prove Corollary \ref{cor_pmean_convergence} (a)(i), it suffices--by the dominated convergence theorem (DCT), or equivalently Lemma~S4.6 of \citet{sert2025}, and using the independence condition $S \!\ind\! S^\-$ (in particular, $S \!\ind\! (\m, \r)$) -- to verify that the conditional probabilities of $R_1, R_2$, and $R_3$ converge to zero in probability \footnote{{\it The same proof techniques with conditioning arguments, along with the independence between nuisance posterior draws and the test data $S$, is applied repeatedly in subsequent proofs; for brevity, we will not restate them each time.}}.

\noindent{\bf Analysis of $R_1$}. We first note that $\bbE(R_1|\m) = 0$ by the definition of $r_1(\cdot)$. Then, for any $t > 0$, Chebyshev's inequality gives:
\begin{align*}
\bbP_S(|R_1| > t | \m) & ~\leq ~ t^{-2} n_S^{-1} \Var_S(\{\m(\bX) - m_1^\d(\bX)\}\{1-Tr_1^\d(\bX)/p_1\}|\m) \\
& ~=~  t^{-2} n_S^{-1} \bbE_S(\{\m(\bX) - m_1^\d(\bX)\}^2\{1-Tr_1^\d(\bX)/p_1\}^2|\m) \\
& ~\leq~ M_r \, t^{-2} n_S^{-1} \|\m(\bX) - m_1^\d(\bX)\|_{\bbL_2(\bbP_\bX)}^2,
\end{align*}
where $M_r < \infty$ follows from the definition of $r_1^\d(\bX)$ and $p_1$ and by Assumption~\ref{assumptions_standard_causal_assump}. Next, by Lemma~S4.6 of \cite{sert2025} and Lemma~\ref{lemma_nuisance_convergence_rates}, as $\bbK < \infty$ is fixed, we conclude that:
\begin{align}
    R_1 ~=~ \Op\left(n_S^{-1/2}\varepsilon_{m,n}\right). \label{R_1_bigO_rate}
\end{align}
Also, as $\bbK$ is fixed, by Assumption~\ref{assumption_nuisance} and the DCT, $\sqrt{n} R_1 = o_{\bbP}(1)$, yielding the result.

\noindent{\bf Analysis of $R_2$}. Note that $R_2$ is {\it not} the average of independent random variables, since each summand depends on the common factor
\( \ptilde_n
\). Yet, the summands are identically distributed, which will simplify subsequent calculations.

We first establish that $\bbE(R_2 | \m) = 0$ to use Chebyshev's inequality. Since $\r \ind S$, we, for notational simplicity, suppress explicit conditioning on $\r$ in the expectation. Then, we can write:
\begin{align}
\bbE_S(R_2|\r) \, \equiv \, \bbE_S(R_2)
& ~=~ \bbE_S\left[\bbP_{n_S}\left(\{Y - m_1^\d(\bX)\}\frac{T\r(\bX)}{\ptilde_n}\right)\right] \nonumber \\
& ~~~~~ -~ \bbE_S\left(\{Y - m_1^\d(\bX)\} \frac{Tr_1^*(\bX)}{p_1}\right) \nonumber \\
& ~=~ \bbE_S\left[\bbP_{n_S}\left(\{Y - m_1^\d(\bX)\} \frac{T\r(\bX)}{\ptilde_n}\right)\right], \label{eqn_R2_expansion}
\end{align}
where the last step follows from the NUC condition in Assumption~\ref{assumptions_standard_causal_assump}. Thus, it remains to show that the first term in \eqref{eqn_R2_expansion} is zero, which will establish $\bbE_S(R_2|\r) \equiv \bbE_S[R_2] = 0$.

Towards that, define $\calT_n:= (T_1, \dots, T_{n_S})$ and $\mathcal{X}_n:= (\bX_1, \dots, \bX_{n_S})$. Then, we have:
\begin{align*}
\bbE_S & \left[\bbP_{n_S}\left(\{Y - m_1^\d(\bX)\}T \frac{\r(\bX)}{\ptilde_n}\right)\right] \\
& \, = \, n_S^{-1} \bbE\left(\bbE\left[\sum_{i = 1}^{n_S} \{Y_i(1) - m_1^\d(\bX_i)\}T_i \frac{\r(\bX_i)}{\ptilde_n}\mid \calT_n, \mathcal{X}_n\right]\right)\\
& \, = \, n_S^{-1} \bbE\left(\frac{1}{\ptilde_n}\sum_{i = 1}^{n_S} \bbE\left[\{Y_i(1) - m_1^\d(\bX_i)\}T_i \r(\bX_i)\mid \bX_i, T_i\right]\right) \\
& \, = \, \bbE\left(\frac{\r(\bX_1)}{\ptilde_n}\bbE\left[\{Y_1(1) - m_1^\d(\bX_1)\}\mid \bX_1, T_1 = 1\right]\right) \, = \, 0,
\end{align*}
since the inner expectation in the last step is zero by the NUC condition in Assumption~\ref{assumptions_standard_causal_assump}. This proves that $\bbE(R_2 \, | \, \r) = 0$, as desired.

Next, define $W(\bZ, \r) := \r(\bX)T\{Y - m_1^\d(\bX)\}/\ptilde_n$. Since $\bbE_S(W(\bZ, \r) |\r) = 0$, by Chebyshev's inequality, we have: for any $t>0$,
\begin{align*}
    \bbP(|R_2| > t | \r) & ~\leq~ \frac{\Var_S(R_2|\r)}{t^2} \, \equiv \, \frac{\Var_S(R_2)}{t^2} \, = \, \frac{\bbE_S(R_2^2)}{t^2} \\
    & \, = \, \frac{1}{t^2 \, n_S^2}\bbE_S\left[\left\{\sum_{i = 1}^{n_S} W(\bZ_i, \r)\right\}^2\right]\\
    & ~=~ \frac{\bbE_S\{W^2(\bZ_1, \r)\}}{t^2\, n_S} \, + \, \frac{(n_S-1)}{t^2 \, n_S} \bbE_S\{W(\bZ_1, \r)W(\bZ_2, \r)\}.
\end{align*}

To complete the analysis of $R_2$,
it remains to compute the quaantity: $\bbE_S\{W(\bZ_1, \r) W(\bZ_2, \r)\} \equiv \bbE_S\{W(\bZ_1, \r) W(\bZ_2, \r)| \r\}$. For notational simplicity, we omit conditioning on $\r$, as $S \ind \r$. Define $R_{12} := \bbE_S\{W(\bZ_1, \r) W(\bZ_2, \r)\}$. Then,
\begin{align*}
   \hspace{-3ex} R_{12}& = \bbE_S\!\left[ \{Y_1 - m_1^\d(\bX_1)\}T_1\left\{ \frac{\r(\bX_1)}{\ptilde_n} - \frac{r^*(\bX_1)}{p_1}\right\} \{Y_2 - m_1^\d(\bX_2)\}T_2\left\{ \frac{\r(\bX_2)}{\ptilde_n} - \frac{r^*(\bX_2)}{p_1}\right\}\right]\\
   & ~=~  \bbE_S(A_1T_1B_1A_2T_2B_2) = \bbE\left[\bbE(A_1T_1B_1A_2T_2B_2 \mid \bX_1, \bX_2, T_1, T_2, \ptilde_n)\right] \\
   & ~=~ \bbE\left[B_1T_1T_2B_2~\bbE_S(A_1A_2 \mid \bX_1, \bX_2, T_1, T_2, \ptilde_n)\right],
\end{align*}
where for $i = 1,2$, $A_i = \{Y_i - m_1^\d(\bX_i)\}$ and $B_i =\{ \r(\bX_i)/\ptilde_n - r^*(\bX_i)/p_1\}$.
Notice that for $i = 1,2$, $B_i$ is random through $\bX_i$ and $\ptilde_n$; conditioning on these quantities makes them deterministic. Thus, the only remaining randomness in the conditional expectation arises from $A_1$ and $A_2$, which depend on $Y_1$ and $Y_2$, respectively. Since $(Y_1, \bX_1, T_1)$ and $(Y_2, \bX_2, T_2)$ are independent, we can write:
\begin{align*}
    \bbE\left(A_1 A_2 \mid \bX_1, \bX_2, T_1, T_2, \ptilde_n\right)
    ~=~ \bbE\left(A_1 \mid \bX_1, \bX_2, T_1, T_2, \ptilde_n\right) \,
      \bbE\left(A_2 \mid \bX_1, \bX_2, T_1, T_2, \ptilde_n\right).
\end{align*}

Since $A_i$ depends only on $Y_i$ once $\bX_i$ and $T_i$ are given, conditioning on extra variables does not alter its conditional expectation. In particular, for $i=1$, by using Assumption~\ref{assumptions_standard_causal_assump},
\begin{align*}
\bbE\left(A_1| \bX_1, \bX_2, T_1, T_2, \ptilde_n\right)
    = \bbE\left(A_1 | \bX_1, T_1\right)
    = \bbE\{Y_1 - m_1^\d(\bX_1) | \bX_1, T_1 = 1\} = 0.
\end{align*}
Hence, we obtain that $\bbE_S\{W(\bZ_1, \r) W(\bZ_2, \r) \mid \r\} = 0$.

Returning to the analysis of $R_2$, using the derivation above, we now have:
\begin{align*}
    \bbP(|R_2| > t \mid \r) & \, \leq~ \frac{\bbE_S\big[W^2(\bZ_1, \r) \mid \r\big]}{t^2 n_S} \\
    & ~=~ \frac{1}{t^2 n_S} \bbE_S\!\left[ \{Y - m_1^\d(\bX)\}^2T^2\left\{ \frac{\r(\bX)}{\ptilde_n} - \frac{r^*(\bX)}{p_1}\right\}^2 \right]\\
    & \, = ~ \frac{M\, \Gamma_{m}^*}{t^2 n_S}\, O_{\bbP}\left(\max\left\{\|p_1 - \ptilde_n \|_2^2, \|\r(\bX) - r_1^\d(\bX)\|_{\bbL_2(\bbP_\bX)}^2\right\}\right),
\end{align*}
where $M< \infty$ due to the positivity assumption in Assumption~\ref{assumptions_standard_causal_assump} and $\Gamma_{m}^* = \sup_{\bx} \bbE[\{Y - m_1^\d(\bX)\} | \bX = \bx] < \infty$ by Assumption~\ref{assumption_nuisance}. Since $\bbE_S(\ptilde_n) = p_1$, we have $\|p_1 - \ptilde_n \|_2^2 = \Var_S(\ptilde_n) = p_1(1-p_1)/n_S$. Hence, by Lemma~S4.6 of \cite{sert2025} and Lemma~\ref{lemma_nuisance_convergence_rates}, we conclude that:
\begin{align}
    R_2 ~=~ \Op\left(n_S^{-1/2}\varepsilon_{r,n}\right). \label{R_2_bigO_rate}
\end{align}
Lastly, since $\bbK$ is fixed, by Assumption~\ref{assumption_nuisance} and the DCT, we obtain $\sqrt{n} R_2 = o_{\bbP}(1)$.

\textbf{Analysis of $R_3$.} Let $U(\bZ, \m, \r):= \{m_1^\d(\bX) - \m(\bX)\} T\{ \r(\bX)/\ptilde_n -r_1^\d(\bX)/p_1 \}$. Then, we rewrite $R_3 = \bbP_{n_S}\{U(\bZ, \m, \r)\}$. Although, given $\m, \r$, $U(\bZ_1, \m, \r),\ldots, U(\bZ_{n_S}, \m, \r)$ are not independent due to the shared random factor $\ptilde_n$, they are identically distributed. Notably, $U(\bZ, \m, \r)$ is a \emph{product-type} random variable with $\bbE\{U(\bZ, \m, \r) | \m, \r\} \neq 0$. Thus, the preceding zero-mean arguments do not directly apply here. Omitting conditioning on $(\m,\r) (\ind S)$ for brevity, for any $t>0$, we have:
\begin{align*}
    \bbP\left(|R_3| > t | \m, \r \right)
    & ~\leq~ t^{-1} \bbE_S\left[\left| \frac{1}{n} \sum_{i = 1}^n \{m_1^\d(\bX_i) - \m(\bX_i)\} T_i \left\{ \frac{\r(\bX_i)}{\ptilde_n} - \frac{r_1^\d(\bX_i)}{p_1} \right\} \right| \right] \\
    & ~\leq~ t^{-1} \bbE_S\left[ \left| \{m_1^\d(\bX) - \m(\bX)\} T \left\{ \frac{\r(\bX)}{\ptilde_n} - \frac{r_1^\d(\bX)}{p_1} \right\} \right| \right] \\
    & ~\leq~ M \| m_1^\d(\bX) - \m(\bX) \|_2 \| p_1 \r(\bX) - \ptilde_n r_1^\d(\bX) \|_2,
\end{align*}
where the last bound follows from the CS inequality and the positivity condition in Assumption~\ref{assumptions_standard_causal_assump}. Following similar arguments to those used in the analysis of $R_2$, and applying Lemma~S4.6 of \cite{sert2025} together with Lemma~\ref{lemma_nuisance_convergence_rates}, we obtain:
\begin{align}
    R_3 ~=~ O_{\bbP}\left(\varepsilon_{m,n}\varepsilon_{r,n}\right). \label{R_3_bigO_rate}
\end{align}

As $\bbK < \infty$ is fixed, by Assumption~\ref{assumption_nuisance} and the DCT, $\sqrt{n} R_3 = o_{\bbP}(1)$, establishing the desired result. Combining the bounds derived for $R^*$, $R_1$, $R_2$, and $R_3$ through \eqref{R_1_bigO_rate}--\eqref{R_3_bigO_rate}, and recalling that $R_4 = R_5 = 0$ under Case \textbf{C1}, we obtain the conclusion of Corollary~\ref{cor_pmean_convergence} (a)(i) for this setting.

\noindent{\bf Case C2:} $m_1^* = m_1^\d$ but $r_1^* \neq r_1^\d$.
By the generalized decomposition of $\muhat_1$ in \eqref{eqn_extended_decomposition_of_pmean},
\begin{align}
    \muhat_1 - \mu^\d(1) ~=~ R^* + R_1 + R_2 + R_3 + R_5, \label{eqn_decomposition_of_pmean_for_C2}
\end{align}
as $R_4 = 0$ when $m_1^* = m_1$. As shown in \eqref{R_1_bigO_rate}, \eqref{R_2_bigO_rate} and \eqref{R_3_bigO_rate}, we have $R_1 = \Op(n^{-1/2}\varepsilon_{m,n})$, $R_2 = \Op(n^{-1/2}\varepsilon_{r,n})$, $R_3 = \Op(\varepsilon_{m,n}\varepsilon_{r,n})$, and, by construction, $R^* = \Op(n^{-1/2})$. Hence, the analysis for Case \textbf{C2} reduces to controlling the behavior of the remaining remainder term $R_5$.

For any $t>0$, using the same conditioning argument as above, suppressing it here for brevity, Markov’s inequality gives:
\begin{align*}
    \bbP_S\big(|R_5| > t \,\big|\, \m \big)
    & ~\le~ t^{-1}\,\bbE_S\!\left( \big|\m(\bX) - m_1^\d(\bX)\big|\,T
          \left| \frac{r_1^\d(\bX)}{p_1} - \frac{r_1^*(\bX)}{p_1} \right| \right) \\
    & ~=~ t^{-1}\,\bbE_S\!\left( \big|\m(\bX) - m_1^\d(\bX)\big|\left| 1 - \frac{r_1^*(\bX)}{r_1^\d(\bX)} \right| \right) \\
    & ~\le~ t^{-1}\,\|\m(\bX) - m_1^\d(\bX)\|_{\mathbb{L}_2(\bbP_\bX)}
          \,\big\| 1 - r_1^*(\bX)/r_1^\d(\bX) \big\|_{\mathbb{L}_2(\bbP_\bX)},
\end{align*}
where the second step uses $r_1^\d(\bX) > 0$ and $p_1 > 0$ and the last step follows from the CS inequality.

By Assumption~\ref{assumption_nuisance}, $\|1 - r_1^*(\bX)/r_1^\d(\bX)\|_{\mathbb{L}_2(\bbP_\bX)} < \infty$.
Applying Lemma~S4.6 of \cite{sert2025} together with Lemma~\ref{lemma_nuisance_convergence_rates}, we obtain:
\begin{align}
    R_5 ~=~ \Op(\varepsilon_{m,n}).
    \label{R_5_bigO_rate}
\end{align}

Now, take any sequence $M_n \to \infty$ and set $\epsilon_n = \varepsilon_{m,n}$, the contraction rate of the well-specified regression model in Case~\textbf{C2}. From the decomposition in~\eqref{eqn_decomposition_of_pmean_for_C2}, we have:
\begin{align*}
    \bbP\!\left(|\muhat_1 - \mu^\d(1)| > M_n \epsilon_n\right)
    & ~\le~ \bbP(|R^*| > M_n \epsilon_n)
        \, + \, \bbP(|R_1| > M_n \epsilon_n)
        \, + \, \bbP(|R_2| > M_n \epsilon_n) \\
    &\quad + \, \bbP(|R_3| > M_n \epsilon_n)
        + \bbP(|R_5| > M_n \epsilon_n).
\end{align*}
Combining Steps~\eqref{R_1_bigO_rate}--\eqref{R_5_bigO_rate}, since $M_n \to \infty$, we obtain $\bbP\!\left(|\muhat_1 - \mu^\d(1)| > M_n \epsilon_n\right) \to 0$, which implies $|\muhat_1 - \mu^\d(1)| = \Op(\varepsilon_{m,n})$. Thus, under Case~\textbf{C2}, the posterior mean $\muhat_1$ is $\varepsilon_{m,n}^{-1}$-consistent estimator for $\mu^\d(1)$.

\noindent\textbf{Case C3:} $r_1^* = r_1^\d$ but $m_1^* \neq m_1^\d$.
Following analogous arguments to those in Case~\textbf{C2}, we first note that $R_5 = 0$ by construction and by the decomposition given in \eqref{eqn_extended_decomposition_of_pmean}, we have:
\begin{align}
    \muhat_1 - \mu^\d(1)
    ~=~ R^* + R_1 + R_2 + R_3 + R_4. \label{eqn_decomposition_of_pmean_for_C3}
\end{align}
As established in \eqref{R_1_bigO_rate}, \eqref{R_2_bigO_rate} and \eqref{R_3_bigO_rate}, we have $R_1 = \Op(n^{-1/2}\,\varepsilon_{m,n})$, $R_2 = \Op(n^{-1/2}\,\varepsilon_{r,n})$,
\( R_3 = \Op(\varepsilon_{m,n}\,\varepsilon_{r,n}),
\)
and, $R^* = \Op(n^{-1/2})$ by construction.
Hence, the analysis of Case \textbf{C3} reduces to controlling the behavior of the last term $R_4$.

We apply the same conditioning argument as before and follow the analysis of $R_2$ and $R_3$. For any $t>0$, Markov’s inequality gives:
\begin{align*}
    \bbP_S(|R_4| > t| \r) ~ & ~\leq~ t^{-1} \bbE_S\left(|m_1^*(\bX) - m_1^\d(\bX)|\, T \left| \frac{\r(\bX)}{\ptilde_n} - \frac{r_1^\d(\bX)}{p_1} \right| \right)\\
    ~ & ~\leq~ t^{-1} \|m_1^*(\bX) - m_1^\d(\bX)\|_{\bbL_2(\bbP_\bX)} \| p_1 \r(\bX) - \ptilde_n r_1^\d(\bX) \|_{\bbL_2(\bbP_S)} \\
   ~ & ~\leq~ t^{-1} \|m_1^*(\bX) - m_1^\d(\bX)\|_{\bbL_2(\bbP_\bX)} p_1\|\r(\bX) -  r_1^\d(\bX) \|_{\bbL_2(\bbP_\bX)} \\
   ~&~~~ + t^{-1} \|m_1^*(\bX) - m_1^\d(\bX)\|_{\bbL_2(\bbP_\bX)} M_r\Var(\ptilde_n),
\end{align*}
where the second inequality follows from the CS inequality, and the third from the definition of $r_1(\cdot)$ together with $p_1>0$ and Assumption~\ref{assumptions_standard_causal_assump}. The detailed analysis of $\| p_1 \r(\bX) - \ptilde_n r_1^\d(\bX) \|_{\bbL_2(\bbP_S)}$ parallels that in the analysis of $R_2$ and $R_3$.

By Assumption~\ref{assumption_nuisance}, $\|m_1^*(\bX) - m_1^\d(\bX)\|_{\bbL_2(\bbP_\bX)} < \infty$. Further, we note $\Var(\ptilde_n) = p_1(1-p_1)/n = \Op(n^{-1})$. Then, applying Lemma~S4.6 of \cite{sert2025} together with Lemma~\ref{lemma_nuisance_convergence_rates}, we conclude
\begin{align}
    R_4 ~=~ \Op(\varepsilon_{r,n}).
    \label{R_4_bigO_rate}
\end{align}
Finally, take any sequence $M_n \to \infty$ and set $\epsilon_n = \varepsilon_{r, n}$, nuisance contraction rate in Case \textbf{C3}. From the decomposition in \eqref{eqn_decomposition_of_pmean_for_C3}, $\bbP\!\left(|\muhat_1 - \mu^\d(1)| > M_n \epsilon_n\right) \le \bbP(|R^*| > M_n \epsilon_n)
        + \bbP(|R_1| > M_n \epsilon_n)
        + \bbP(|R_2| > M_n \epsilon_n) + \bbP(|R_3| > M_n \epsilon_n)
        + \bbP(|R_4| > M_n \epsilon_n)$. Combining \eqref{R_1_bigO_rate}, \eqref{R_2_bigO_rate}, \eqref{R_3_bigO_rate}, and \eqref{R_4_bigO_rate}, we have $\bbP(|\muhat_1 - \mu^\d(1)| > M_n \epsilon_n) \to 0$, which implies $|\muhat_1 - \mu^\d(1)| = \Op(\varepsilon_{m,n})$. Thus, under Case \textbf{C3}, the posterior mean $\muhat_1$ is an $\varepsilon_{r,n}^{-1}$‑consistent estimator of $\mu^\d(1)$. This gives a complete proof of Corollary~\ref{cor_pmean_convergence} for $\mu^\d(1)$.

The proof for the second part of Corollary~\ref{cor_pmean_convergence} is omitted for brevity, as it follows from a symmetric argument to the first part. The posterior mean of \(\pATE^\CF\) is $\widehat{\rATE}(\bm, \br) = \muhat_1(\m, \r) - \muhat_0(\ubm, \ubr)$, where \(\muhat_0(\ubm, \ubr)\) is the posterior mean for the control group, defined as $\muhat_0(\ubm, \ubr) := \bbP_{n_S}\{\ubm(\bX)\} + \bbP_{n_S}\{\ubr(\bX)(1-T)\{Y - \ubm(\bX)\}\}/(1-\ptilde_n)$. Here, $\ubm(\cdot)$ and $\ubr(\cdot)$ denote nuisance samples drawn from posteriors $\Pi_{m_0}(\cdot; S^\-)$ and $\Pi_{r_0}(\cdot; S^\-)$. For clarity, $\Pi_{m_0}(\cdot; S^\-)$ and $\Pi_{r_0}(\cdot; S^\-)$ are the nuisance posteriors obtained applying the DRDB procedure to estimate the mean of control arm, $\mu^\d(0) = \bbE[Y(0)]$ similar to described in Section~\ref{sec_DRDB_for_mu1}.

Due to the symmetry in the constructions of \(\muhat_1(\m,\r)\) and \(\muhat_0(\ubm,\ubr)\), all arguments and results established for \(\muhat_1(\m,\r)\) in Steps~\ref{eqn_extended_decomposition_of_pmean}–\ref{R_4_bigO_rate} across Cases \textbf{C1}–\textbf{C3} apply directly to \(\muhat_0(\ubm,\ubr)\). This symmetry completes the proof of the second part of Corollary~\ref{cor_pmean_convergence}, establishing the desired properties of the posterior mean $\widehat{\rATE}(\bm, \br)$ as a point estimator of the ATE $\tATE$.
\end{proof}

\section{Proofs of the preliminary lemmas}\label{supp_sec_proof_of_preliminary}
This section provides proofs of the intermediate results used in the proofs of the main results.

\phantomsection
\addcontentsline{toc}{subsection}{Proof of Lemma~\ref{lemma_TV_distance_btw_two_posteriors}}
\begin{proof}[Proof of Lemma~\ref{lemma_TV_distance_btw_two_posteriors}]
The proof follows from the integral representation of the TV distance, Fubini's theorem, and the triangle inequality. To see this clearly, we observe that:
\begin{align*}
  d_{\TV}(\calP_1, \calP_2) & \, = \, \frac{1}{2}\int |f_1(\rATE) - f_2(\rATE)|\hspace{0.5mm} \dd \rATE =  \frac{1}{2}\int \left|   \int q_1(\rATE | \theta)\hspace{0.5mm}  g_1(\theta) - q_2(\rATE | \theta)\hspace{0.5mm} g_2(\theta) \hspace{0.5mm} \dd \theta \right| \dd \rATE \\
 & \, = \, \frac{1}{2}\int |g_1(\theta) - g_2(\theta)| \dd \theta \\
 & ~~~~+~ \frac{1}{2} \int \int |\psi_1(\rATE - \theta - C) - \psi_2(\rATE - \theta - C)|\hspace{0.5mm} g_2(\theta) \dd\theta \hspace{0.5mm}\dd \rATE \\
& \, = \, d_{\TV}(\calP_\theta^{(1)}, \calP_\theta^{(2)}) + \frac{1}{2} \int |\psi_1(w) - \psi_2(w)| \dd w \\
& \, = \, d_{\TV}(\calP_\theta^{(1)}, \calP_\theta^{(2)}) +  d_{\TV}(\calP_{\rATE \mid \theta}^{(1)}, \calP_{\rATE \mid \theta}^{(2)}),
\end{align*}
where the last step uses the location-family assumption with a common shift, $d_{\mathrm{TV}}(\mathcal{P}_{\rATE\mid\theta}^{(1)}, \mathcal{P}_{\rATE\mid\theta}^{(2)})$ does not depend on $\theta$ and equals $d_{\mathrm{TV}}(\psi_1,\psi_2)$, yielding the stated bound.
\end{proof}

\phantomsection
\addcontentsline{toc}{subsection}{Proof of Lemma~\ref{lemma_TV_convolution}}
\begin{proof}[Proof of Lemma~\ref{lemma_TV_convolution}]
The proof follows from a direct application of the integral representation of the total variation distance and Fubini's theorem.
\end{proof}

\phantomsection
\addcontentsline{toc}{subsection}{Proof of Lemma~\ref{lemma_TV_bound_theta_as_eta_minus_beta}}
\begin{proof}[Proof of Lemma~\ref{lemma_TV_bound_theta_as_eta_minus_beta}.]
By the triangle inequality and Fubini’s theorem,
\begin{align}
2d_{\mathrm{TV}}(\mathcal{P}_1,\mathcal{P}_2)
& \,= \, \int \Big|\int q_1(\rATE|\theta)\,g_1(\theta) - q_2(\rATE|\theta)\,g_2(\theta)\, d\theta\Big|\, d\rATE \nonumber\\
& \, \le \, \int\!\!\int \big|q_1(\rATE\mid\theta)\big|\,\big|g_1(\theta)-g_2(\theta)\big|\, d\theta\, d\rATE \\
& ~~~~ + \int\!\!\int \big|q_1(\rATE\mid\theta)-q_2(\rATE\mid\theta)\big|\, g_2(\theta)\, d\theta\, d\rATE  \nonumber \\
& \, = \, d_{\mathrm{TV}}(\mathcal{P}_\theta^{(1)},\mathcal{P}_\theta^{(2)})
\;+\; \int d_{\mathrm{TV}}(\mathcal{P}_{\rATE\mid\theta}^{(1)},\mathcal{P}_{\rATE\mid\theta}^{(2)})\, g_2(\theta)\, d\theta \nonumber \\
& \, \le \, d_{\mathrm{TV}}(\mathcal{P}_\theta^{(1)},\mathcal{P}_\theta^{(2)})
\;+\; \sup_\theta d_{\mathrm{TV}}(\mathcal{P}_{\rATE\mid\theta}^{(1)},\mathcal{P}_{\rATE\mid\theta}^{(2)}). \label{eqn_TV_marginal_and_conditional_lemma3}
\end{align}

Since $\theta^{(i)} = \eta^{(i)} - \beta^{(i)}$ with $\eta^{(i)} \perp\!\!\!\perp \beta^{(i)}$, we have $\mathcal{P}_\theta^{(i)} = \mathcal{P}_\eta^{(i)} * \mathcal{P}_{-\beta}^{(i)}$. By applying subadditivity of the TV distance under convolution and invariance under location shifts,
\[
d_{\mathrm{TV}}(\mathcal{P}_\theta^{(1)},\mathcal{P}_\theta^{(2)})
\;\le\; 
d_{\mathrm{TV}}(\mathcal{P}_\eta^{(1)},\mathcal{P}_\eta^{(2)})
\;+\; d_{\mathrm{TV}}(\mathcal{P}_\beta^{(1)},\mathcal{P}_\beta^{(2)}).
\]
Under the location-family assumption with common shift (and scale), $d_{\mathrm{TV}}(\mathcal{P}_{\rATE\mid\theta}^{(1)},\mathcal{P}_{\rATE\mid\theta}^{(2)})$ in \eqref{eqn_TV_marginal_and_conditional_lemma3} does not depend on $\theta$. Thus, combining the two bounds above yields the desired result.
\end{proof}

\phantomsection
\addcontentsline{toc}{subsection}{Proof of Lemma~\ref{lemma_nuisance_convergence_rates}}
\begin{proof}[Proof of Lemma~\ref{lemma_nuisance_convergence_rates}.]
We present a unified proof for a generic nuisance function $\psi(\cdot)$ whose posterior $\Pi_{\psi}$ is constructed on the training data $S^\- := \calD\setminus\calD_k$ for some $k \in \{1, \dots, \bbK \}$ as described in Section~\ref{sec_methodology}. The same proof steps directly apply to the cases $\psi(\cdot)=\underbar{\it m}_t(\cdot)\sim \Pi_{m_t}$ and $\psi(\cdot)=\underbar{\it r}_t(\cdot) \sim \Pi_{r_t}$ for $t\in\{0,1\}$.

Recalling that the randomness of $\psi$ comes from both $S^\-$ and the posterior $\Pi_\psi$, we can express the distribution $\bbP_\psi = \bbP_{S^\-} \otimes \Pi_\psi$. Let $\psi^* \equiv \psi^*(\cdot) \in \bbL_2(\bbP_\bX)$ be a deterministic limiting function at which $\Pi_\psi$ contract. To show that the nuisance posterior $\Pi_\psi$ contracts around $\psi^*$ at rate $\epsilon_n$, it suffices to verify that for every $M_n \to \infty$, $\bbP_\psi\!\{\| \psi(\bX) - \psi^*(\bX)\|_{\bbL_2(\bbP_\bX)} \ge M_n \epsilon_n \} \to  0$.

By definition of $\bbP_\psi$ and the iterated expectations, $\mathbf{1}(\cdot)$ denoting the indicator function, we have:
\begin{align*}
    \bbP_\psi\!\left\{\| \psi(\bX) - \psi^*(\bX)\|_{\bbL_2(\bbP_\bX)} \ge M_n \epsilon_n \right\}  & \,=\, \bbE_\psi\left\{\mathbf{1}(\| \psi(\bX) - \psi^*(\bX)\|_{\bbL_2(\bbP_\bX)} \ge M_n \epsilon_n)\right\} \\
    & \,=\, \bbE_{S^\-}[\Pi_\psi\left\{\| \psi(\bX) - \psi^*(\bX)\|_{\bbL_2(\bbP_\bX)} \ge M_n \epsilon_n | S^\-\right\}].
\end{align*}
Define $A_n := \big\{\|\psi(\bX)-\psi^*(\bX)\|_{\bbL_2(\bbP_{\bX})} \ge M_n \epsilon_n\big\}$ and \(
Z_n(S^\-):= \Pi_{\psi}(A_n \mid S^\-) \in [0,1]
\), where $Z_n(S^\-)$ denotes the conditional probability of the set $A_n$ given $S^\-$. Under nuisance contraction condition given in Assumption~\ref{assumption_nuisance}\,(a), we have $Z_n(S^\-) \xrightarrow{\,\bbP_{S^{\-}}\,} 0$ for any sequence $M_n \to \infty$. Since $Z_n(S^\-)$ is itself random through $S^\-$ and by definition
$0 \leq Z_n(S^\-) \leq 1$, boundedness implies uniform integrability, hence $\bbE_{S^\-}\{Z_n(S^\-)\} \to 0$ (convergence in $\bbL_1(\bbP_{S^\-})$-sense), equivalently, $\bbP_{\psi}(A_n) \to 0$. Thus, $\bbP_{\psi}\!\{\|\psi(\bX)-\psi^*(\bX)\|_{\bbL_2(\bbP_{\bX})} \ge M_n \epsilon_n\} \to 0$, as desired. Finally, specializing the generic result with $(\psi,\epsilon_n)=(\underbar{\it m}_t,\varepsilon_{m,n})$ and $(\psi,\epsilon_n)=(\underbar{\it r}_t,\varepsilon_{r,n})$ for $t\in\{0,1\}$ yields the corresponding statements in Lemma~\ref{lemma_nuisance_convergence_rates}.
\end{proof}

\phantomsection
\addcontentsline{toc}{subsection}{Proof of Lemma~\ref{lemma_pvar_convergence}}
\begin{proof}[Proof of Lemma~\ref{lemma_pvar_convergence}.]\label{proof_lemma_pvar_convergence}
Following the construction of the final DRDB posterior $\Pi_{\mu_1}^\CF$ in \eqref{eqn_CF_version_mu}, the posterior variance of $\Pi_{\mu_1}^\CF$ can be computed explicitly. As in the proof of Corollary~\ref{cor_pmean_convergence}, we focus on the variance $\hat{c}_n^{2, (k)}(\m,\r)$ of the DRDB posterior $\Pi_{\mu_1}^{(k)}$ based on one test split $\calD_k$ since the arguments are identical across $k$ by construction of the DRDB procedure.

Fix $k \in \{1,\dots,\bbK\}$ and consider the test-training pair $(S, S^\-) = (\calD_k, \calD_k^\-)$ with index sets $(\calI, \calI^\-)$ and $n_S := |\calI|$. Let $S_1$ denotes the treated group in $S$ with index set $\calI_1$ and $n_1 = |\calI_1|$ as in Section~\ref{sec_methodology}. For notational simplicity, we drop the superscript $(k)$ and write $\hat{c}_n^2(\m, \r) \equiv \hat{c}_n^{2, (k)}(\m, \r)$.

Let $\m \sim \Pi_{m_1}(\cdot; S^\-)$ and $\r \sim \Pi_{r_1}(\cdot; S^\-)$ be nuisance posterior draws and define $W(\bZ,\m, \r) := \r(\bX)\{Y - \m(\bX)\}$ where $\bZ = (Y, \bX)$. By the construction of the DRDB posterior $\Pi_{\mu_1}(\cdot; S)$ and the law of total expectation, the posterior variance $\hat{c}_n^2(\m, \r)$ admits and explicit expression:
\begin{align*}
    \hat{c}_n^2(\m, \r)
    & ~:=~ \frac{n_S}{n_S-2} c_S^2 + \frac{n_1}{n_1-2} c_1^2 ~=~ \lambda_n c_S^2 \,+\, \lambda_{n_1} c_1^2 \\
    & ~=~ \lambda_n \frac{1}{n_S(n_S-1)} \sum_{i=1}^{n_S} \left\{\m(\bX_i) - \eta_{m_1}\right\}^2 \\
      & ~~~~~  \, + \, \lambda_{n_1} \frac{1}{n_1(n_1-1)} \sum_{i \in \calI_1} \left\{ W(\bZ_i,\m, \r) - \eta_1 \right\}^2,
\end{align*}
where $\lambda_n = n_S/(n_S-2)$ and $\lambda_{n_1} = n_1/(n_1-2)$  with $n_1 = \sum_{i=1}^{n_S} T_i$.

First, observe that $\lambda_n \to 1$ as $n \to \infty$. Further, we can write $\lambda_{n_1} = \ptilde_n / (\ptilde_n - 2/n_S)$ where $\ptilde_n := n_1 / n_S$. Since $\ptilde_n \to p_1 = \bbP(T=1) > 0$ as $n \to \infty$, it follows that $\lambda_{n_1} \to 1$ in $\PD$-probability. As a result, by the continuous mapping theorem (CMT), it suffices to establish the desired result for $\hat{c}^2: = c_S^2 + c_1^2$ in place of $\hat{c}_n^2(\m, \r)$.

Given $\mu_1^* \equiv \mu_1(m_1^\d, r_1^\d)$ denoting the posterior mean of the limiting Normal distribution in Theorem~\ref{thm_BvM_mu1}, the corresponding variance $c^2(m_1^\d, r_1^\d)$ is calculated as:
\begin{align*}
c^2(m_1^\d, r_1^\d) & \, = \, \Var_S(\mu_1^*) \, = \, \Var_S\left(\!\bbP_{n_S}(m_1^\d(\bX)) + \bbP_{n_S}\!\left[\frac{r_1^\d(\bX)}{p_1}T\{Y - m_1^\d(\bX)\}\right]\right)\\
& ~ \, := \, \frac{\sigma^2_m}{n_S} + \frac{\sigma^2_1}{n_S},
\end{align*}
where the covariance term vanishes under the NUC condition in Assumption~\ref{assumption_nuisance}.

Towards showing $n_S|\hat{c}^2 - c^2(m_1^\d, r_1^\d)| = \op(1)$, by the same conditioning argument employed in the proof of Corollary~\ref{cor_pmean_convergence} together with applying the DCT or Lemma S4.6 of \cite{sert2025}, it suffices to establish conditional convergence, that is, we show $\bbP_S(n_S|\hat{c}^2 - c^2(m_1^\d, r_1^\d)| \mid \r, \m) \to 0$ in probability under $\bbP_{(\m, \r)}$.

Firstly, the triangle inequality gives $n_S|\hat{c}^2 - c^2(m_1^\d, r_1^\d)| \leq | n_S c_S^2 - \sigma_m^2 | + | n_S c_1^2 - \sigma_1^2 | := T_S + T_1$. Therefore, the problem reduces to showing that both $T_S$ and $T_1$ converge to zero in probability.

\noindent{\bf Analysis of $T_S$.} Define $\widehat{\sigma}_1^2(\m) := \Var_\bX\{\m(\bX)|\m\}$. For any $t>0$, the triangle inequality yields $\bbP_S(T_S > t | \m) \leq \bbP_S( |n_S c_S^2 - \widehat{\sigma}_1^2(\m)| > t/2 | \m ) + \bbP_S( |\widehat{\sigma}_1^2(\m) - \Var_\bX\{m_1^\d(\bX)\}| > t/2 | \m )$.

By construction, $\bbE_S(n_S c_S^2 \mid \m) = \widehat{\sigma}_1^2(\m)$. Applying Chebyshev's inequality yields as $n\to \infty$, $\bbP_S( |n_S c_S^2 - \widehat{\sigma}_1^2(\m)| > t/2 | \m) \leq 4 \Var_S(\{\m(\bX) - \eta_{m_1}\}^2 | \m)/(n_S t^2)$, provided that $\| \m(\bX) \|_{\bbL_4(\bbP_\bX)} = O_{\bbP_{m_1}}(1)$ in Assumption~\ref{assumption_nuisance}.

Moreover, given $\m \sim \Pi_{m_1}$, the variable $\bbP_S\big(|\widehat{\sigma}_1^2(\m) - \Var_\bX\{m_1^\d(\bX)\}| > t/2 \mid \m \big)$ is the indicator of the event where $|\widehat{\sigma}_1^2(\m) - \Var_\bX\{m_1^\d(\bX)\}|>t/2$. Thus, to complete the proof, it suffices to show that the probability of this event converges to zero. We first notice that $|\widehat{\sigma}_1^2(\m) - \Var_\bX\{m_1^\d(\bX)\}| \leq 4\|\m(\bX) - m_1^\d(\bX) \|_{\bbL_2(\bbP_\bX)} \| m_1^\d(\bX) \|_{\bbL_2(\bbP_\bX)} + \|\m(\bX) - m_1^\d(\bX) \|^2_{\bbL_2(\bbP_\bX)}$. Since $\| m_1^\d(\bX) \|_{\bbL_2(\bbP_\bX)} < \infty$, by applying Lemma~\ref{lemma_nuisance_convergence_rates}, we have $\|\m(\bX) - m_1^\d(\bX) \|_{\bbL_2(\bbP_\bX)} = o_{\bbP_{m_1}}(1)$. By the CMT, this implies that $T_S$ converges to zero in probability.

\noindent{\bf Analysis of $T_1 = |n_Sc_1^2 - \sigma^2_1|$.} Let $W(\bZ, \m, \r) := \r(\bX)\{Y - \m(\bX)\}$ for $\bZ = (Y, \bX)$. Then, we write:
\begin{align*}
n_Sc_1^2 ~=~ \frac{1}{\ptilde_n} \frac{1}{(n_1-1)}\sum_{i \in \calI_1} \big[W(\bZ_i,\m, \r) - \eta_{1}\big]^2 ~:=~ \frac{1}{\ptilde_n} \,\mathbb{S}^2_{n_1}.
\end{align*}
By using the construction of $\ptilde_n$ and applying the CMT, we obtain that as $n \to \infty$, $1/\ptilde_n \to 1/p_1$. This result enables us to reduce the analysis of $T_1$ to focus solely on $\mathbb{S}^2_{n_1}$.

\noindent For notational simplicity, we omit explicit conditioning on $\m$ and $\r$. Define $\calT_n := (T_1, \dots, T_{n_S})$. We then observe that by the tower property of expectation,
\begin{align*}
\bbE_S(\mathbb{S}^2_{n_1} | \m, \r) & \, \equiv \, \bbE_S\big(\mathbb{S}^2_{n_1} \big) \, = \, \bbE\{\bbE(\mathbb{S}^2_{n_1} \mid \calT_n)\} \\
& \, = \, \bbE\!\left(\!\bbE\!\left[ \frac{1}{n_1-1} \!\sum_{i \in\calI_1} \big\{ W(\bZ_i,\m, \r) - \eta_1 \big\}^2 \mid  \calT_n \right] \right).
\end{align*}

Note that given $\m$ and $\r$, conditioning on $\calT_n$, the random variables $n_1$ and $\calI_1$ are fixed, and the randomness of $\left\{ W(\bZ_i,\m, \r) : i \in \calI_1 \right\}$ arises only from $\bZ = (Y, \bX)$; also, these random variables are i.i.d.. Following this observation, the inner expectation can be computed as $\bbE(\mathbb{S}^2_{n_1} | \calT_n) = \bbE[\sum_{i \in \calI_1} W(\bZ_i,\m, \r)^2 - n_1\eta_{1}^2 \mid \calT_n ]/(n_1-1)$.

Let $\theta_1: = \bbE[W(\bZ, \m, \r) | T=1]$ ~and~ $q_1:= \bbE[W(\bZ,\m, \r)^2 | T=1]$. Then, we observe that:
\begin{align*}
    \bbE\Big[\!\sum_{i \in \calI_1} W(\bZ_i,\m, \r)^2 | \calT_n\Big] & \, = \,  \sum_{i \in \calI_1} \!\bbE\left[ W(\bZ_i,\m, \r)^2 | \calT_n \right] \\
    & \, = \, \sum_{i \in \calI_1} \bbE\left[ W(\bZ_i,\m, \r)^2 | T_i=1 \right] \, = \, n_1 q_1.
\end{align*}
Also, by using the definition of $\eta_1$, we calculate that:
\begin{align*}
    \bbE\left[\eta_{1}^2| \calT_n \right] & ~=~ \frac{1}{n_1^2} \bbE\bigg[\sum_{i \in S_1} W(\bZ_i,\m, \r)^2 | \calT_n \bigg] + \frac{1}{n_1^2}  \bbE\bigg[\sum_{ i \neq j} W(\bZ_i,\m, \r)  W_j(\m, \r)  | \calT_n \bigg] \\
    & ~=~ \frac{1}{n_1} \bbE[W(\bZ,\m, \r)^2 | T = 1] + \frac{(n_1 -1)}{n_1} \bbE[W(\bZ_1, \m, \r)  W(\bZ_2,\m, \r) | \calT_n] \\
    & ~=~  \frac{q_1}{n_1} + \frac{(n_1 -1)}{n_1} \theta_1^2.
\end{align*}
Then, combining these derivations above, we obtain that:
\begin{align*}
    \bbE[\mathbb{S}_{n_1} \mid \calT_n] ~=~ \frac{1}{n_1-1}\left[n_1q_1 - n_1\left(\frac{q_1}{n_1} + \frac{(n_1 -1)}{n_1} \theta_1^2 \right) \right] ~=~ (q_1 -\theta_1^2) ~:=~ \underline{\sigma}_1^2,
\end{align*}
where $\underline{\sigma}_1^2:= \Var(W(\bZ, \m, \r) | T=1)$. Given $\m$ and $\r$, since $\underline{\sigma}_1^2$ is deterministic, by the iterated expectation, we obtain $\bbE[\mathbb{S}^2_{n_1}] = \underline{\sigma}_1^2$. By Assumption~\ref{assumption_nuisance}, specifically $\| W(\bZ, \m, \r) \|_{\bbL_4(\bbP_\bZ)} = O_{\bbP_{(\m, \r)}}(1)$, it follows that $\mathbb{S}^2_{n_1} \to \underline{\sigma}_1^2$ in probability. Finally, by the CMT, and conditioning on $\m$ and $\r$, we obtain $n_S c_1^2 \to \underline{\sigma}_1^2 / p_1$ in probability.

Next, by the triangle inequality, we can bound $T_1$ as $$
T_1 \leq  \left| n_Sc_1^2 - \frac{\underline{\sigma}_1^2}{p_1} \right| + \left| \frac{\underline{\sigma}_1^2}{p_1} -  \sigma^2_1\right| := T_{11} + T_{12}.
$$
Since we already established that $T_{11} \to 0$ in probability in our earlier analysis, it remains to show that $T_{12} \to 0$ in probability to conclude the desired result for $T_1$.

Towards this goal, we focus on $\sigma^2_1$. Define $W(\bZ,m_1^\d, r_1^\d): = r_1^\d(\bX)T\{Y - m_1^\d(\bX)\}/p_1$ for $\bZ = (Y, \bX)$. By Assumptions~\ref{assumptions_standard_causal_assump}--\ref{assumption_nuisance}, it follows that $\bbE_S\{W(m_1^\d, r_1^\d)\} = 0$. Hence,
\begin{align*}
    \sigma^2_1 & = \Var_S\{W(\bZ, m_1^\d, r_1^\d)\} =  \bbE_S\{W^2(\bZ, m_1^\d, r_1^\d)\} \, = \, \bbE_S\!\left[\frac{\{r_1^\d(\bX)\}^2}{(p_1)^2}T^2\{Y - m_1^\d(\bX)\}^2\right] \\
    & = \bbE_{\bZ|T = 1}\left[\{r_1^\d(\bX)\}^2\{Y - m_1^\d(\bX)\}^2 \mid T = 1\right]/p_1.
\end{align*}
where the final step follows from the law of iterated expectations and by noting that $\bbP(T \!=\! 1) \!=\! p_1$. Moreover, since $\bbE_S\{W(\bZ, m_1^\d, r_1^\d)\} = 0$ and $p_1>0$, the law of iterated expectations gives $\bbE[r_1^\d(\bX)\{Y - m_1^\d(\bX)\} | T = 1] = 0$. Thus, $M\bbE[r_1^\d(\bX)\{Y - m_1^\d(\bX)\}| T = 1] = 0$ for any scalar $M < \infty$. Therefore, we obtain that:
\begin{align*}
    \sigma^2_1 & =\frac{1}{p_1}\bbE\left[\{r_1^\d(\bX)\}^2\{Y - m_1^\d(\bX)\}^2 \mid T = 1\right] + \frac{1}{p_1} (\bbE[r_1^\d(\bX)\{Y - m_1^\d(\bX)\} \mid T = 1])^2 \\
    & = (p_1)^{-1} \Var[r_1^\d(\bX)\{Y - m_1^\d(\bX)\} \mid T = 1] := (p_1)^{-1} \tau^2_1.
\end{align*}
Hence, we obtain that $T_{12} = |\underline{\sigma}_1^2 - \tau^2_1|/p_1$. Since $p_1 > 0$, to establish the desired result, it suffices to show that $|\underline{\sigma}_1^2 - \tau^2_1| \to 0$ in probability.

Next, define $U(\bZ, m_1^\d, r_1^\d):= r_1^\d(\bX)\{Y - m_1^\d(\bX)\}$ and $U(\bZ,\m, \r):= \r(\bX)\{Y - \m(\bX)\}$ for notational clarity. With these definitions, we can bound $T_{12}$ as follows:
\begin{align*}
    |\underline{\sigma}_1^2 - \tau^2_1| & ~=~ \left|\Var\{ U(\bZ, \m, \r) |T= 1\} - \Var\{U(\bZ, m_1^\d, r_1^\d) |T= 1\} \right|  \\
    & ~\leq~ 2 \|U(\bZ, \m, \r) - U(\bZ, m_1^\d, r_1^\d)\|^2_{\bbL_2(\bbP_{\bbS_1})} \|U(\bZ, \m, \r) - U(\bZ, m_1^\d, r_1^\d)\|_{\bbL_2(\bbP_{\bbS_1})} \\
    & ~ \qquad +~ 4 \|U(\bZ, \m, \r) - U(\bZ, m_1^\d, r_1^\d)\|_{\bbL_2(\bbP_{\bbS_1})} \|U(\bZ, m_1^\d, r_1^\d)\|_{\bbL_2(\bbP_{\bbS_1})}.
\end{align*}
Since $\|U(\bZ, m_1^\d, r_1^\d)\|_{\bbL_2(\bbP_{\bbS_1})} < \infty$ by Assumption~\ref{assumption_nuisance}, it suffices to show that $\|U(\bZ, \m, \r) - U(\bZ, m_1^\d, r_1^\d)\|_{\bbL_2(\bbS_1)} = o_{\bbP_{(\m, \r)}}(1)$. Using the definitions and the triangle inequality,
\begin{align*}
   & \|U(\bZ,\m, \r) - U(\bZ,m_1^\d, r_1^\d)\|_{\bbL_2(\bbP_{\bbS_1})}\\
   & ~~~\equiv~ \left\| \r(\bX)\{Y - \m(\bX)\} - r_1^\d(\bX)\{Y - m_1^\d(\bX)\} \right\|_{\bbL_2(\bbP_{\bbS_1})} \\
   & ~~~\leq~ \underline{\Gamma}_1 \| \r(\bX) - r_1^\d(\bX) \|_2
         + M_r \|\m(\bX) - m_1^\d(\bX)\|_{\bbL_2(\bbP_{\bbS_1})},
\end{align*}
where the final step follows from the CS inequality and the boundedness of $r_1^\d(\bX)$, i.e., $r_1^\d(\bX) < M_r$ for some $M_r < \infty$ by Assumption~\ref{assumptions_standard_causal_assump}. Here, $\underline{\Gamma}_1^2 := \sup_{x} \bbE \left[ \{Y - \m(\bX)\}^2 \mid X = x \right] = O_{\bbP_{m_1}}(1)$ by Assumption~\ref{assumption_nuisance}.

Given that $\| \r(\bX) - r_1^\d(\bX) \|_2 = o_{\bbP_{r_1}}(1)$ and $\| \m(\bX) - m_1^\d(\bX) \|_2 = o_{\bbP_{m_1}}(1)$ by Lemma~\ref{lemma_nuisance_convergence_rates}, we establish the conditional convergence of $T_1$. Finally, by applying the DCT or Lemma~S4.6 of \cite{sert2025}, the desired result follows.
\end{proof}

\phantomsection
\addcontentsline{toc}{subsection}{Proof of Corollary~\ref{cor_pvar_convergence_for_ATE}}
\begin{proof}[Proof of Corollary~\ref{cor_pvar_convergence_for_ATE}.]
Following the construction of the final DRDB posterior $\pATE^\CF$ in \eqref{eqn_CF_version_mu}, the posterior variance of $\pATE^\CF$ can be computed explicitly. As in the proof of Lemma~\ref{proof_lemma_pvar_convergence}, we focus on the variance $\hat{c}_n^2(\m,\r)^{(k)}$ of the posterior $\pATE^{(k)}$ based on $\calD_k$ since the arguments are identical across $k$ by construction of the DRDB procedure.

Fix $k \in \{1,\dots,\bbK\}$ and consider the test-training pair $(S, S^\-) = (\calD_k, \calD_k^\-)$ with index sets $(\calI, \calI^\-)$ and $n_S := |\calI|$. Let $(S_1, S_0)$ denotes the treated and control subgroups in $S$ with index sets $(\calI_1, \calI_0)$ with $n_1 = |\calI_1|$ and $n_0 = |\calI_0|$ as in Section~\ref{sec_methodology}. For notational simplicity, we drop the superscript $(k)$ and write $\hat{c}_n^2(\bm, \br) \equiv \hat{c}_n^2(\bm, \br)^{(k)}$. Using the construction of $\pATE$ in Section \ref{sec_DRDB_extended_ATE}, we have:
\begin{align*}
  \hat{c}^2(\bm, \br) & ~=~ \frac{n_S}{(n_S-2)}\frac{\sum_{i \in \calI}\{\bm(\bX) - \eta_m\}^2}{n_S(n_S-1)} \, + \, \frac{n_1}{(n_1-2)}\frac{\sum_{i \in \calI_1}\{W(\bZ_i, \m,\r) - \eta_1\}^2}{n_1(n_1-1)}  \\
  & \qquad + \, \frac{n_0}{n_0-2}\frac{\sum_{i \in \calI_0}\{U(\bZ_i, \ubm,\ubr) - \eta_0\}^2}{n_0(n_0-1)} \\
  & ~:=~ \lambda_n c_S^2 + \lambda_{n_1} c_1^2 + \lambda_{n_0} c_0^2,
\end{align*}
where $\bm(\cdot) = \m(\cdot) - \ubm(\cdot)$, $W(\bZ, \m, \r) := \r(\bX)\{Y - \m(\bX)\}$ and $U(\bZ, \ubm, \ubr) := \ubr(\bX)\{Y - \ubm(\bX)\}$ and $\lambda_n = n_S/(n_S-2)$ and $\lambda_t = n_t/(n_t-2)$ for $t = 0,1$ and $\bZ = (Y, \bX)$.

By definition, $\lambda_n \to 1$, $\lambda_t \to 1$ in probability for $t = 0,1$. Hence, the problem reduces to showing the desired result for $(c_S^2 + c_1^2 + c_0^2)$ in place of $\hat{c}^2(\bm, \br)$.

We next compute the variance $c^2(m^\d, r^\d)$ of the limiting Normal distribution in Theorem~\ref{thm_BvM_ATE}. Note that $c^2(m^\d, r^\d)$ is equal to the variance of the mean $\widehat\rATE(m^\d,r^\d)$ of the limiting Normal distribution. Recall that for $\bD = (Y, T, \bX)$ and $m^\d(\cdot):= m_1^\d(\cdot) - m_0^\d(\cdot)$, the posterior mean can be written as
\begin{align*}
    \widehat \rATE(m^\d,r^\d) & = \bbP_{n_S}\{m^\d(\bX)\} \, + \, \bbP_n\!\left[\frac{r_1^\d(\bX)}{p_1}T\{Y - m_1^\d(\bX)\} - \frac{r_0^\d(\bX)}{1- p_1}(1-T)\{Y - m_0^\d(\bX)\} \right] \\
    & ~=:~ \bbP_{n_S}\{m^\d(\bX)\} \, + \, \bbP_n\{\phi_1(\bD) - \phi_0(\bD)\} =: \bbP_{n_S}\{m(\bX)\} \, + \, \bbP_{n_S}\{\phi(\bD)\}.
\end{align*}
Thus, we can explicitly calculate $c^2(m^\d, r^\d)$ as\\ $c^2(m^\d, r^\d) = [\Var_S\{m^\d(\bX)\} + \Var_S\{\phi(\bD)\} + 2\Cov_S(m^\d(\bX), \phi(\bD))]/n_S = (V_1 + V_2 + 2V_3)/n_S$.

Note that
\(
    V_2 = \Var_S\{\phi_1(\bD)\} + \Var_S\{\phi_0(\bD)\} + 2\Cov_S\{\phi_1(\bD), \phi_0(\bD)\} =: \sigma^2_1 + \sigma^2_0
\),
where $\Cov_S\{\phi_1(\bD), \phi_0(\bD)\} = 0$ by using the tower property and $\bbE\{\phi_t(\bD)| \bX\} = 0$ for $t = 0,1$ from Assumption \ref{assumptions_standard_causal_assump}. Also, by definition, note that $\phi_1(\bD)\phi_0(\bD)=0$ since $T(1-T)=0$.

By definition,
\(
    V_3 = \bbE\{m^\d(\bX)\phi(\bD)\} - \bbE\{m^\d(\bX)\}\bbE\{\phi(\bD)\} = 0
\),
as $\bbE\{\phi_t(\bD)| \bX\} = 0$ by Assumption \ref{assumptions_standard_causal_assump} for $t = 0,1$. Thus, defining $\sigma^2_m:= \Var_S\{m^\d(\bX)\}$, we have $c^2(m^\d, r^\d) = (\sigma^2_m + \sigma_1^2 + \sigma_0^2)/n_S$. Combining these decompositions yields the following bound:
\[
n_S|c_S^2 + c_1^2 + c_0^2 - c^2(m^\d, r^\d)| \, \le \, |n_Sc_S^2 - \sigma^2_m| + |n_Sc_1^2 - \sigma_1^2| + |n_Sc_0^2 - \sigma_0^2| \, := \, T_S + T_1 + T_0.
\]

Since, by using Assumption~\ref{assumption_nuisance} and the definition of $\bm (\cdot) = \m(\cdot) - \ubm(\cdot)$, the triangle inequality yields $\|\bm(\bX)\|_{\bbL_4(\bbP_\bX)} \leq \|\m(\bX)\|_{\bbL_4(\bbP_\bX)} + \|\ubm(\bX)\|_{\bbL_4(\bbP_\bX)} = \Op(1)$, the analysis of $T_S$ analogously follows from the proof of Lemma~\ref{lemma_pvar_convergence}, replacing the pair $\{\m(\cdot), m_1^\d(\cdot)\}$ with the pair $\{\bm(\cdot), m^\d(\cdot)\}$; for brevity, we refer to that part of the proof. Thus, applying Lemma~S4.6 of \cite{sert2025} and Lemma~\ref{lemma_nuisance_convergence_rates}, we obtain $T_S = \op(1)$.

Similarly, the analysis of $T_1$ is identical to that in the proof of Lemma~\ref{lemma_pvar_convergence}; to avoid repetition, we refer to that proof for details. By using Assumption~\ref{assumptions_standard_causal_assump} (in particular, $p_1>0$) and Assumption~\ref{assumption_nuisance}, and applying Lemma~\ref{lemma_nuisance_convergence_rates} together with Lemma~S4.6 of \cite{sert2025}, we obtain $T_1 = \op(1)$. By the same arguments applied symmetrically to the control group, we skip the details for brevity and correspondingly conclude $T_0 = \op(1)$, which gives the desired result.
\end{proof}

\end{document}